\renewcommand\tableofcontents{\listoftoc*{toc}} 
\title{Discovering Consistent Subelections}
\author[1]{Łukasz Janeczko}
\author[2]{Jérôme Lang}
\author[1]{Grzegorz Lisowski}
\author[1,2]{Stanisław Szufa}
\affil[1]{AGH University, Poland} 
\affil[2]{CNRS, LAMSADE, Université Paris Dauphine-PSL, France}
\newtheorem{theorem}{Theorem}
\newtheorem{proposition}[theorem]{Proposition}
\newtheorem{definition}{Definition}
\newtheorem{corollary}[theorem]{Corollary}
\newtheorem{remark}{Remark}
\newtheorem{lemma}{Lemma}
\newtheorem{observation}{Observation}
\newcommand{\np}{\textsc{NP}}
\newcommand{\fpt}{\textsc{FPT}}
\newcommand{\xp}{\textsc{XP}}
\newcommand{\threesat}{\textsc{$3$-SAT}}
\newcommand{\hiddenclones}{\textsc{Hidden-Clones}}
\newcommand{\hiddenidentity}{\textsc{Hidden-ID}}
\newcommand{\hiddenantagonism}{\textsc{Hidden-AN}}
\newcommand{\hiddensp}{\textsc{Hidden-Single-Peaked}}
\newcommand{\sharphiddenclones}{\textsc{\#-Hidden-Clones}}
\newcommand{\sharphiddenidentity}{\textsc{\#-Hidden-ID}}
\newcommand{\maxid}{\textsc{Max-ID}}
\newcommand{\maxan}{\textsc{Max-AN}}
\newcommand{\maxclone}{\textsc{MaxClone}}
\newcommand{\singlepeaked}{\textrm{single-peaked}}
\newcommand{\singlecrossing}{\textrm{single-crossing}}
\newcommand{\groupseparable}{\textrm{group-separable}}
\newcommand{\UN}{{{\mathrm{UN}}}}
\newcommand{\AN}{{{\mathrm{AN}}}}
\newcommand{\ID}{{{\mathrm{ID}}}}
\newcommand{\ST}{{{\mathrm{ST}}}}
\newcommand{\ssnote}[1]{\todo[color=green!30, inline]{Staś: #1}}
\newcommand{\jlnote}[1]{\todo[color=blue!30, inline]{JL: #1}}
\newtheorem{example}{Example}
\tikzset{fontscale/.style = {font=\relsize{#1}}
}
\date{\vspace{-5ex}}
\begin{document}

\maketitle

\begin{abstract}
We show how hidden interesting subelections can be discovered in ordinal elections. An interesting subelection consists of a reasonably large set of voters and a reasonably large set of candidates such that the former have a consistent opinion about the latter. Consistency may take various forms but we focus on three: Identity (all selected voters rank all selected candidates the same way), antagonism (half of the selected voters rank candidates in some order and the other half in the reverse order), and clones (all selected voters rank all selected candidates contiguously in the original election). We first study the computation of such hidden subelections. 
Second, we analyze synthetic and real-life data, and find that identifying hidden consistent subelections allows us to uncover some relevant concepts.
\end{abstract}

\section{Introduction}

Ordinal voting consists in taking as input a preference profile (a collection of rankings over candidates) and producing a winner, or a set of winners, or a collective ranking as output --- this could be called the ``mechanism'' view of voting. A much less studied question consists in discovering, from a preference profile, some hidden properties of the domain at hand. An example of such a study is the discovery of structure among the set of candidates, such as an order of candidates that makes the profile single-peaked (perhaps only approximately), or among the set of voters, such as an order that makes the profile single-crossing. 

There is however more to discover from a preference profile, by focusing on voters and candidates {\em simultaneously}. Imagine an explorer from another planet visiting us and observing a local voting profile over food items. They do not know anything about our food, they do not have the concepts of meat, fish, vegetables, sweet, or spicy. 
They do not have either the concepts of children, or vegetarians, and ignore   
our local cultures. Still, they can observe that a significant group of voters consistently prefer some items to others (say, tofu and lentils to eggs, eggs to fish, and fish to meat) and that a significant group of voters are indifferent to all food items that they do not know as it is not part of their culture.

In a more realistic context, the voters are citizens of a country with unknown world views, and the candidates are political issues. Still in another context, the explorer is a manufacturing company, voters are potential consumers, and candidates are items they would be interested to purchase if they were on the market. 

The hidden information we seek is a 
subset of voters $V'$ and a subset of candidates $C'$ such that voters in $V'$ have {\em consistent} preferences over $V'$, i.e., it is a {\em consistent subelection} of the original election. In order for a subelection to make us learn something interesting, the consistency property has to be meaningful. We focus on the following three consistency properties: 
\begin{description}[left=0pt .. 1ex, itemindent=!]
    \item [Identity:] All the candidates are ranked the same way by all voters (children prefer coke to juice and juice to coffee). 
    \item [Clone structure]: Voters in $V'$ rank all candidates in $C'$ contiguously in the original election (people living outside of Europe are indifferent between non-exported varieties of European cheese;  this is not to say they rank them at the bottom, as they may very well rank them above items that they know they do not like).
    \item [Antagonism]: Half of the voters in $V'$ rank the candidates in $C'$ in the same order, while the others rank them in the opposite one.
\end{description}

Identity is probably the most interesting consistency property. It allows us to discover significant segments of the population with identical preferences on a large fraction of options (e.g., it allows us to discover that a homogeneous set of voters, whatever we want to call it, prefers tofu to eggs, eggs to fish, and fish to meat).
Clone structures also lead to major findings: A subpopulation considering a set of alternatives as clones usually means that they cannot distinguish between them (e.g., uncommon cheese varieties). We include antagonism as it is the natural opposite of identity and it helps us to discover what divides a subpopulation. 
We leave other meaningful properties, such as single-peakedness, single-crossingness or group separability 
for further research.

Note an important difference between, on the one hand, identity and antagonism, and on the other, clone structures: We do not need the original election to check if a subelection satisfies identity or single-peakedness, whereas being a clone structure can only be defined with respect to the original election, which is not the case for identity or antagonism.

In order for a consistent subelection to be meaningful, not only should the property make sense, but the size of the subelection should also be reasonably large (for instance, knowing that two voters out of ten are consistent over three candidates out of ten does not tell us anything interesting). This motivates searching for consistent subelections whose number of voters (resp. candidates) reach a given threshold. 

\begin{example}\label{running}
To clarify the concepts of hidden clones, identity, and antagonism, consider the following election, with six voters expressing  preferences over six candidates: 
\begin{align*}
v_1&: a \succ b \succ c \succ f \succ e \succ d \\
v_2&: c \succ b \succ a \succ d \succ e \succ f \\ 
v_3&: a \succ f \succ e \succ b \succ c \succ d \\
v_4&: d \succ e \succ f \succ c \succ b \succ a\\
v_5&: a \succ c \succ b \succ d \succ e \succ f \\
v_6&: f \succ e \succ a \succ b \succ c \succ  d \\
\end{align*}
We discover some interesting patterns:
\begin{itemize}
    \item All voters agree that candidates $e$ and $f$ are clones;
    and all voters except $v_3$ agree that $a$, $b$, and $c$ are clones: We have a clone set of two candidates for six voters, and a clone set of three candidates for five voters.
    \item $v_1$, $v_3$ and $v_6$ agree on the ranking $a \succ b \succ c \succ d$: This is a hidden identity with three voters and four candidates. All voters except $v_2$ and $v_4$  agree on $a \succ b \succ d$ and on $a \succ c \succ d$: These are hidden identities with four voters and three candidates.
    \item Candidates $d,e$, and $f$ are antagonizing all voters: Half of them ($v_2$, $v_4$, and $v_5$) rank them $d \succ e \succ f$, while the other half ($v_1$, $v_3$, and $v_6$) rank them in the reverse order $f \succ e \succ d$: There is an antagonism for three candidates and six voters.
\end{itemize}

\end{example}

Discovering consistent, large enough subelections not only helps to understand a population better ({\em e.g.}, with respect to food preferences, customer behavior, or political opinions over issues), but it can also benefit a variety of tasks. Notably, if we start eliciting the preferences of a new voter (outside of the original profile), we might discover that they probably belong to some group with consistent preferences, which eases and speeds up the elicitation process. For instance, once discovered that Ann is a vegetarian, asking her if she prefers pork to asparagus is a loss of time.

For the sake of simplicity, in this paper, we choose to focus on subelections that satisfy {\em exact} consistency properties.  Occasionally, if applicable, we relate to the question of finding the closest subelection in terms of swap distance, that is, the minimum number of swaps on adjacent candidates we need to perform to obtain a subelection satisfying some consistency property \cite{fal-kac-sor-szu-was:c:microscope}. While allowing approximate properties (e.g., find a subelection where all voters order almost all candidates in the same way) is definitely interesting, and of course, would allow us to discover larger meaningful subelections, we leave it for further research.

\paragraph{Our Contribution.} We provide an analysis of finding hidden subelections, both from theoretical and empirical perspectives. First, we focus on their computation. We obtain hardness results for finding a sufficiently large hidden identity or antagonism, which are tempered by parameterized tractability results for the number of candidates or voters, and by a translation into an ILP. We note that we obtain a reduced run time of our FPT algorithms due to a graph representation of unanimous preference orders. As to hidden clones, they can be identified in polynomial time.

Then, we perform an empirical analysis, using both synthetic and real-life data.
To analyze the results for synthetic data we use the \emph{map of elections} framework. As to the real-life data, we study a well-known dataset containing preferences over different types of sushi, as well as a political election dataset (obtained through polls) over candidates from the 2014 French presidential election.


\paragraph{Related Literature.}

Discovering structure in elections has received significant attention in the context of {\em single-peakedness}: Given a profile, is it single-peaked with respect to some hidden axis (such as political left-right) on which the candidates are positioned? Since the plausibility of a positive answer quickly decreases with the size of the profile, most of the focus has been laid on approximate single-peakedness; several measures of single-peakedness have been defined,  and the key question is to identify an axis for which the profile maximizes the single-peakedness degree. Variants have been considered (such as Euclidean preferences, single-peakedness on a tree or a circle), as well as other structures (mostly single-crossingness, where the hidden axis bears on the set of voters). The most recent review of work on this trend is \cite{ElkindlacknerPeters22}. 

Uncovering a hidden axis and maximally explaining the preference profile allows us to discover hidden properties of the domain at hand, which is also our motivation. However, uncovering an axis allows us to learn structure {\em of the set of candidates only}. Symmetrically, finding an axis over the set of voters making the profile as much {\em single-crossing} as possible leads us to learning structure {\em of the set of voters only}.

\citet{ElkindFS12} identify {\em clones} in elections (sets of candidates that are ranked contiguously by all voters). 
Part of our contribution generalizes the discovery of clone sets by considering sets of candidates that are considered clones by only some of the voters.\footnote{A specific notion of clone structure occurs when the set of
candidates can be partitioned in two classes, such that each voter
prefers all candidates in one class to all those in the other one
({\em group separability}); see Sections 3.11 and 4.7 of (Elkind, Lackner, and Peters 2022).}

\citet{ColleyGHMN23} identify, from a preference profile, the  {\em divisiveness} of each candidate, measuring the disagreement of the population about it. Various notions of degrees of consensus, conflict, or diversity within an electorate have also been studied \cite{Alcalde-UnzuV13,AlcantudCC13,HashemiE14}. While (some of) the subelections we discover also tell us something about the degree of consensus or conflict in a society, they tell us much more, by localizing the candidates {\em and} the voters on which there is a high consensus or conflict. 

\citet{fal-sor-szu:c:subelection-isomorphism} study the problem of subelection isomorphism, where they analyze the complexity of verifying whether one election is isomorphic to a subelection of the other election. This approach differs from ours because we always focus solely on the inner structure of a single election, hence, we do not have to deal with all the problems related to matching the candidates and voters from different elections.

{\em Biclustering} \cite{GOVA14L}, also known as co-clustering or block-clustering, aims at learning structure in a real-valued, two-dimensional matrix by simultaneously finding a set of rows and a set of columns with similar behavior, i.e., near-identical rows, near-identical columns, rows or columns roughly obtained from each other by an additive or multiplicative factor. This resembles our subelection discovery tasks, with a major difference: Biclustering algorithms work on a cardinal input, while ours is ordinal. This is more important than it may appear. Crucially, expressing a profile as a matrix whose cell corresponding to voter $i$ and candidate $c_j$ is $c_j$'s rank in $i$'s ranking would not help, as our consistency notions cannot be expressed by near-identity or near-linear relations between rows or columns. 

\section{Consistent Subelections}\label{sec:prem}

Let us introduce the basic notions which we use in our analysis. 
For a natural number $t > 0$, we denote as $[t]$ the set $ \{1, \dots, t \}$.

\paragraph{Elections.} An \emph{election} $E=(C,V)$ consists of the set $C$ of \emph{candidates} and the set $V$ of \emph{voters}. We assume that each voter $v$ submits a \emph{ranking} $\succ_v$ over $C$. Also, The {\em size} of an election $E$, denoted $size(E)$, is a pair of integers $(|C|, |V|)$.

\paragraph{Subelections.} For an election $E=(C,V)$, a subset of candidates $C' \subseteq C$, and a subset of voters $V' \subseteq V$, we say that  $E'=(C',V')$ is a \emph{subelection} of $E$ 
for every voter $v \in V'$ and any pair of candidates $c_1, c_2 \in C'$, $c_1 \succ_v c_2$ in $E'$ if and only if $c_1 \succ_v c_2$ in $E$. 




%
\paragraph{Identity.} We say that $E=(C,V)$ is an \emph{identity election} if, for every pair of voters $v_i, v_j \in V$, $\succ_{v_i} \ = \ \succ_{v_j}$. 


\paragraph{Antagonism.} We say that $E=(C, V)$, with $|V|$ even, is an \emph{antagonism} if there is a partition of $V$ into $V_1$ and $V_2$  such that $|V_1|=|V_2|$ and for every pair of voters $v_1 \in V_1$, $v_2 \in V_2$, as well as any pair of candidates $c, c' \in C$, $c \succ_{v_1} c'$ if and only if $c' \succ_{v_2} c$.

\paragraph{Clones.} 
We say that $C'$ is a clone set for an election $E=(C, V)$ if for any $x, y \in C'$, any $z \notin C'$ and any voter $i$ we have $x \succ_i z$ if and only if $y \succ_i z$. The {\em size} of a clone set $C'$ for election $E$ is the pair of integers $(|C'|, |V|)$.

In our initial example, $(\{a,b,c,d\}, \{v_1, v_3, v_6\})$ is an identity subelection of $E$ of size $(4,3)$. Also, $(\{a,b,c\}, V)$ is an antagonism subelection of $E$ of size $(3, 6)$; and $\{a,b,c\}$ is a clone set for the subelection $(C, \{v_1, v_2, v_4, v_5, v_6\})$ of $E$ of size $(3, 5)$.  





Given two identity (resp. antagonism) subelections $E_1, E_2$ for election $E$, we say that $E_1$ is larger than $E_2$ if $size(E_1) = (m_1, n_1)$, $size(E_2) = (m_2, n_2)$, $m_1 \geq m_2$, and $n_1 \geq n_2$, with one of these two inequalities being strict. In other terms, this is Pareto-dominance for two criteria, being the number of candidates and the number of voters in subelections. Note also that the existence of an identity of size $m'$ for $n'$ voters implies the existence of an identity of size $m'' \leq m'$ for $n'' \leq n'$ voters.  

We define the {\em identity} (resp. {\em antagonism}) signature of an election as the set of pairs $(m',n')$ of integers such that (i) there is an identity (resp. antagonism) subelection $E'$ of $E$ of size $(m',n')$, and (ii) no identity (resp. antagonism) subelection of $E$ is larger than $E'$.  
In Example \ref{running}, the identity signature of $E$ is $\{(1,6), (3,4), (4,3), (6,1)\}$. We will further say that a set of voters (resp. candidates) has an identity (resp. antagonism) subelection if there is a subelection of that type with that set of voters or candidates.

We assume that the reader is familiar with basic concepts in (parametrized) computational complexity. 

\section{Computing Meaningful Subelections}\label{sec:complexity}


In this section, we study the complexity of discovering hidden subelections and provide algorithms for finding them. 

\subsection{Hidden Clones}\label{sec:IHC}

Let us commence with the problem of finding clone sets in an election. In $\hiddenclones$ we are concerned with checking the existence of sufficiently large sets of voters similar with respect to a set of candidates of a given size.

\begin{quote}
		\noindent $\hiddenclones$:\\
		\hspace*{-1em} \indent\textit{Input:} Election $E=(C,V)$, $m',n' \in \mathbb{N}$.\\
		\hspace*{-1em}\textit{Question:} Is there a 
        set of $n'$ voters $V' \subseteq V$ and a set of~$m'$ candidates $C' \subseteq C$ such that 
        $C'$ is a clone set for~$V'$?
	\end{quote}

\begin{theorem} \label{thm:hidden-clones-p}
     $\hiddenclones$ is P-time solvable.
\end{theorem}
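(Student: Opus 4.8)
The plan is to reduce the search for a clone set of the required size to a problem about intervals on a line, solved independently for each candidate subset size via a combinatorial characterization of which voters "accept" a given candidate set as clones. The key observation is the following: for a fixed set $C' \subseteq C$, a voter $v$ has $C'$ as a clone set (within the subelection restricted to $v$) if and only if the candidates of $C'$ occupy a contiguous block in $v$'s ranking of $C$. So the real question is: over all choices of $C'$ with $|C'| = m'$, is the number of voters in whose ranking $C'$ is contiguous at least $n'$?

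First I would note that contiguity of $C'$ in a ranking is equivalent to saying that, writing $C'$ as the set of candidates whose ranks in $\succ_v$ lie between $\min$ and $\max$, every rank strictly between those two extremes is also occupied by a member of $C'$; equivalently, $C'$ is exactly the set of candidates ranked in positions $\{p, p+1, \dots, p+m'-1\}$ by $v$ for some starting position $p$. Hence for each voter $v$ there are exactly $m - m' + 1$ candidate sets of size $m'$ that $v$ "accepts": the $m - m' + 1$ windows of consecutive candidates in $\succ_v$. The algorithm then enumerates, over all voters $v$ and all $m - m' + 1$ window positions, the candidate set defined by that window, and tallies how many distinct voters produce each such set. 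This is a table of at most $n \cdot (m - m' + 1)$ entries, built in polynomial time (each window set can be canonically represented, e.g. as a sorted list or bitmask, and counted with a hash map or by sorting). We answer "yes" iff some candidate set is accepted by at least $n'$ voters; if so we also recover the witnessing $V'$ and $C'$. Correctness is immediate from the characterization above, and the running time is $O(\mathrm{poly}(n,m))$.

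I would also remark that the same bookkeeping yields, with no extra asymptotic cost, the full set of maximal clone-set sizes: for each $m'$ from $1$ to $m$ we record the maximum number of voters accepting any size-$m'$ set, giving the analogue of the identity/antagonism signature for clones.

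The main (and essentially only) obstacle is getting the characterization and its justification right — in particular being careful that a clone set in the sense of the paper's definition (invariance of the relation to every outside candidate $z$, for every voter) coincides, voter by voter, with contiguity of $C'$ in that voter's ranking, and that this is genuinely a per-voter condition so that the counts over voters can simply be summed. Everything after that is routine enumeration and counting.
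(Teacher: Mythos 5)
Your proposal is correct and follows essentially the same route as the paper: both characterize a size-$m'$ clone set for a voter as a window of $m'$ consecutive candidates in that voter's ranking, enumerate the at most $|V|\cdot(|C|-m'+1)$ such windows, and accept iff some window set occurs in at least $n'$ votes. Your extra remark about recording the maximum count for each $m'$ matches the paper's subsequent $\maxclone$ corollary.
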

\begin{proof}
    Let $(E, m', n')$ be an instance of $\hiddenclones$. A subset of $m'$ candidates can be a clone set for at least one voter if and only if it forms a segment of $m'$ consecutive candidates in some vote. 
    We observe that there are at most $|V| \cdot (|C| - m' + 1)$ such subsets, we iterate over them and accept if any of them appear in~at~least $n'$ votes. 
\end{proof}



To better understand the algorithm, let us analyze it in the previous example.

\begin{example}
Consider again the election from Example \ref{running}. Take $m' = 3$: The segments of length 3 for voter 1 are $\{a,b,c\}$, $\{b,c,f\}$, $\{c,e,f\}$ and $\{d,e,f\}$; for voter 2, these are  $\{a,b,c\}$, $\{a,b,d\}$, $\{a,d,e\}$ and $\{d,e,f\}$; and so on. Counting the number of voters for which these subsets correspond to a segment of length 3, we find that $\{a,b,c\}$ appears four times, $\{b,c,f\}$ only once, $\{c,e,f\}$ twice, and so on. The sets occurring most frequently are $\{a,b,c\}$, $\{d,e,f\}$ (four times each): In conclusion, there are hidden clones of size $(3,n')$, for $n' \leq 3$.
\end{example}

The following observation states that the existence of clone sets of a given size is not monotonic (whereas for identity and antagonism subelections, monotonicity holds.)

\begin{observation}
The existence of a clone set of size $m'$ does not imply the existence of a clone set of smaller (nor larger) size.
\end{observation}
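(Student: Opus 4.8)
The plan is to exhibit a single small election that simultaneously witnesses both failures of monotonicity, so that the observation follows from one concrete construction. Concretely, I would look for an election $E=(C,V)$ and sizes $m_1 < m_2 < m_3$ such that $E$ admits a clone set of size $m_1$ candidates and a clone set of size $m_3$ candidates (for the full voter set, or at least for some fixed voter set), but no clone set of size $m_2$ candidates. The cleanest way to get this is to engineer the election so that the only nontrivial clone sets are a "small" block and a "large" block whose cardinalities are not consecutive and such that no sub-block of the large block is itself a clone set. Note that a sub-block of a clone set need not be a clone set: if $\{c_1,c_2,c_3\}$ is a clone set, the voters may still disagree on where $c_1$ sits relative to $c_2$ and $c_3$, so $\{c_1,c_2\}$ need not be clones. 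This is exactly the slack the construction exploits.

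First I would take, say, $C = \{a_1,\dots,a_5, b, d\}$ with the block $A=\{a_1,\dots,a_5\}$ intended to be a clone set (size $5$) and a singleton being trivially a clone set (size $1$), while arranging the internal orders of the $a_i$'s across voters so that no subset of $A$ of size $2,3,$ or $4$ is ranked contiguously by all voters, and so that no set mixing $a_i$'s with $b$ or $d$ is ever contiguous for all voters. Using a handful of voters whose restrictions to $A$ are suitably "spread out" permutations (for instance cyclic shifts of $a_1 \succ a_2 \succ \cdots \succ a_5$), every proper nonempty subset of $A$ fails to be an interval in at least one vote, while $A$ as a whole always occupies the same block of positions. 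Then the clone-set sizes realized are exactly $\{1,5\}$, giving both directions of the claim: a clone set of size $1$ exists but none of size $2,3,4$; a clone set of size $5$ exists but — if I also block any size-$6$ or $7$ clone set by the placement of $b$ and $d$ — none of intermediate or larger size between $1$ and $5$ beyond the trivial ones.

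The key steps, in order: (i) fix the candidate set and the intended clone blocks; (ii) choose the voters' rankings so that the large block $A$ is contiguous in every vote (easy: keep $A$ as a solid segment in each ranking, only permuting inside it); (iii) verify that cyclic-shift-type internal orders kill every proper subset of $A$ as a potential interval; (iv) place the remaining candidates $b,d$ so that no clone set can straddle the boundary of $A$ and so that $\{b,d\}$ together with parts of $A$ is never contiguous for all voters; (v) read off that the set of realized clone-set sizes is $\{1,5\}$ (or whatever gap we chose), which is manifestly non-monotone in both directions.

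I expect the main obstacle to be step (iii)–(iv): ensuring that \emph{no} proper nonempty subset of the large block, and \emph{no} mixed set, is a clone set for all voters, while keeping the election small enough to state cleanly. A careful choice of internal permutations (e.g. all $5$ cyclic rotations of $a_1\succ\cdots\succ a_5$, which makes every proper "arc" of $A$ non-contiguous in some rotation) handles the first part; handling the mixed sets just needs $b$ and $d$ to be separated from $A$ in a way that is stable across voters, e.g. always keeping $b$ far above $A$ and $d$ far below $A$, or alternating their positions so they never form a joint interval. Since only existence of a suitable example is needed, it suffices to display the rankings and point to the observations above; I would not belabor the verification beyond noting the cyclic-shift argument.
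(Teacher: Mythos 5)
Your proposal is correct and uses essentially the same idea as the paper: a counterexample built from cyclic rotations of a block, so that the whole block (and each singleton) is a clone set while no intermediate-size subset is an interval in every vote. The paper's example is just the minimal version of this --- the three votes $a \succ b \succ c$, $b \succ c \succ a$, $c \succ a \succ b$ --- so your extra padding candidates $b,d$ and the size-$5$ block are unnecessary, but the construction you sketch does work.
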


As a trivial example, $(a \succ b \succ c, b \succ c \succ a, c \succ a \succ b)$ has no clone set of two candidates, but every singleton, as well as the set of all candidates, are clone sets for all voters.

For this reason, we do not define the clone signature of an election, but for every $m' \leq m$ we define $\maxclone(E,m')$ as
the largest $n' \in [n]$ such that there exists a set of $n'$ voters $V' \subseteq V$ and a set of $m'$ candidates $C' \subseteq C$ such that $C'$ is a clone set. 

\begin{example}\label{ex:maxClone}
Let us consider the instance defined in~\Cref{running}. There, we observe that $\maxclone(E,2) = 6, \maxclone(E,3) = 5, \maxclone(E, 4) = \maxclone(E, 5) = 3$, and trivially, \\ $\maxclone(E,1) = \maxclone(E,6) = 6$. 
\end{example}

Analogously to the algorithm described in Theorem \ref{thm:hidden-clones-p}, we can also compute $\maxclone$ in polynomial time (we search for a subset with maximum occurrences).

\begin{corollary}\label{corr:max-clones-p}
    $\maxclone$ is P-time solvable.
\end{corollary}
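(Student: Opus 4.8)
The plan is to adapt the polynomial-time algorithm behind \Cref{thm:hidden-clones-p} from a decision procedure to an optimization (counting-of-occurrences) procedure. Recall the key structural fact established there: a set $C'$ of $m'$ candidates can be a clone set for at least one voter only if $C'$ appears as a block of $m'$ consecutive candidates in at least one vote. Hence the only candidate sets of size $m'$ worth examining are the (at most $|V|\cdot(|C|-m'+1)$) sets that arise as a length-$m'$ contiguous window in some voter's ranking.

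First I would, for a fixed $m'$, enumerate every length-$m'$ contiguous window over all voters, collecting the resulting candidate sets. Second, for each such set $C'$, I would compute the number of voters $v$ for which $C'$ forms a contiguous block in $\succ_v$ — equivalently, the number of voters for which $C'$ is a clone set of $V'=\{v\}$, noting that being a clone set is a ``pointwise'' condition: $C'$ is a clone set for a voter set $V'$ exactly when it is a clone set for each singleton $\{v\}$ with $v\in V'$. Therefore that count is precisely the largest $n'$ for which some $V'$ of that size makes $C'$ a clone set, and $\maxclone(E,m')$ is the maximum of this count over all the enumerated windows $C'$. Third, iterating over all $m'\in[m]$ (or just the requested value) and taking maxima gives the full function. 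All of this runs in time polynomial in $|C|$ and $|V|$: there are polynomially many windows, and checking whether a given $m'$-set is a contiguous block in a given vote is linear-time, so the whole computation is, say, $O(|V|^2\cdot|C|^2)$ or better with suitable bookkeeping (e.g.\ hashing canonical representations of the sets).

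The argument is essentially a bookkeeping refinement of the earlier proof, so I do not expect a genuine obstacle; the one point that needs a clean statement is the observation that ``clone set for $V'$'' decomposes over singletons, which justifies that counting windows gives exactly $\maxclone$ rather than a lower bound. One should also handle the degenerate cases $m'=1$ and $m'=|C|$, where every singleton and the full candidate set are trivially clone sets for all voters, so $\maxclone(E,1)=\maxclone(E,m)=|V|$; these are consistent with the windowing argument and need no special treatment beyond noting them. Given all this, the corollary follows immediately: the procedure outlined computes $\maxclone(E,m')$ in polynomial time, which is exactly the claim.
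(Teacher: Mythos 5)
Your proposal is correct and matches the paper's argument: the paper also computes $\maxclone$ by reusing the window-enumeration from Theorem~\ref{thm:hidden-clones-p} and simply taking the candidate set with the maximum number of occurrences across votes. Your explicit remark that the clone-set condition decomposes over single voters is only implicit in the paper, but it is the same reasoning.
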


We note that using standard data structures such as hash map and doubly linked list, both of our algorithms can be implemented in $O(|V| \cdot (|C| - m') \cdot m')$ time and space complexity, which makes them very fast and usable in practice. For example, for elections with a few hundred voters and a few hundred candidates, the algorithm works in a few seconds.

 Due to the nonmonotonicity of clone sets, it is meaningless to approximate the size of a maximal clone set. Regarding the approximation of the maximal number of voters for which there exists a clone set of a given size, solving such a problem is very similar to solving the $\maxclone$ problem.

 

We may also be interested in how far we are from obtaining a clone set of a given size for a given number of voters in terms of swap distance. 
However, the closest clone set (in terms of swap distance) may not be present in any vote: for $C = \{a,b,c,d,e,f,g,h,i \},$ 
$v_1 = \{a \succ b \succ c \succ d \succ e \succ f \succ g \succ h \succ i\}, 
v_2 = \{d \succ f \succ g \succ i \succ a \succ b \succ c \succ h \succ e\}, 
v_3 = \{a \succ b \succ c \succ f \succ e \succ i \succ d \succ h \succ g\}$, 
$n'=3$, $m'=4$, the closest clone set is $\{a,b,c,e\}$ with swap distance $3$. 
However,  for any clone set candidate $cs$,  we can compute in polynomial time the minimum number of swaps we need to do so that $cs$ becomes a valid clone set (and use it to find the best one).

\subsection{Hidden Identity}\label{sec:FHI}

We study the problem of whether a certain number of voters agree regarding the order of a given number of candidates. 
\begin{quote}
		\noindent $\hiddenidentity$:\\
		\hspace*{-1em} \indent\textit{Input:} Election $E=(C,V)$, $m',n' \in \mathbb{N}$.\\
		\hspace*{-1em}\textit{Question:} Is there an identity 
  subelection $E'=(C',V')$ of $E$, with $|C'| \geq m'$ and~$|V'| \geq n'$?
	\end{quote}





 We first show that $\hiddenidentity$ is intractable. 
 

\begin{theorem}\label{thm:ihidden-dentity-np}
$\hiddenidentity$ is \np-complete.
\end{theorem}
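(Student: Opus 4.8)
The plan is to reduce from a standard \np-complete problem that captures the "choose a large common substructure" flavour of $\hiddenidentity$. A natural candidate is \balancedbiclique\ (given a bipartite graph, decide whether it contains a balanced biclique $K_{t,t}$), which is already listed among the paper's macros, or alternatively \clique. I will sketch the reduction from \balancedbiclique, since the bipartite structure matches the voter/candidate split of an election; a reduction from \clique\ would be essentially symmetric but slightly less clean.

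First I would encode the two sides of the bipartite graph $G=(L\cup R, F)$ into candidates and voters. One natural attempt: let the vertices of $L$ be candidates and the vertices of $R$ be voters, with a handful of auxiliary "anchor" candidates $a_1 \succ a_2 \succ \dots$ used to force all selected voters to agree on an interleaving pattern. The key trick is to design each voter's ranking so that two voters $r, r'$ can only be part of a common identity subelection on a set $C'$ of original candidates precisely when $C'$ corresponds to a set of graph-vertices that is in the common neighbourhood of $r$ and $r'$. A clean way to achieve this is to make each voter $r$ rank the candidates corresponding to its neighbours $N(r)$ in one fixed global order (say the order $c_1 \succ c_2 \succ \dots \succ c_{|L|}$ inherited from an arbitrary fixed labelling of $L$), and rank the non-neighbours in a "scrambled" way that is incompatible across voters — for instance by interleaving the non-neighbours of $r$ with dedicated blocker candidates unique to $r$, or by reversing their relative order in a voter-dependent fashion. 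Then any identity subelection restricted to two distinct voters is forced to avoid all such scrambled/blocked candidates, hence lives inside $N(r)\cap N(r')$, and conversely on any common neighbourhood the rankings already agree by construction. Setting $m' = n' = t$ (plus whatever constant offset the anchor gadget contributes), a solution to $\hiddenidentity$ then corresponds exactly to a $t$-subset of $L$ that is in the common neighbourhood of a $t$-subset of $R$, i.e. a $K_{t,t}$ in $G$.

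The steps in order: (1) describe the gadget — candidates $=L\cup\{\text{blockers}\}$, voters $=R$, and the precise ranking each voter submits; (2) prove the forward direction: a balanced biclique $(L_0,R_0)$ with $|L_0|=|R_0|=t$ yields the identity subelection $(L_0, R_0)$, which is genuinely an identity subelection because every voter in $R_0$ ranks all of $L_0$ in the fixed global order; (3) prove the backward direction: given an identity subelection $(C',V')$ with $|C'|\ge m'$, $|V'|\ge n'$, first argue $C'$ cannot contain any blocker candidate once $|V'|\ge 2$ (two distinct voters disagree on every blocker's placement), so $C'\subseteq L$, then argue each voter in $V'$ must be adjacent to every vertex in $C'$ (otherwise that candidate is scrambled for that voter and the agreement fails), giving a $K_{|C'|,|V'|}\supseteq K_{t,t}$; (4) note the reduction is clearly polynomial and membership in \np\ is immediate (guess $C'$, $V'$, check in polynomial time that all voters in $V'$ agree on $C'$). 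Finally I would remark that the same construction with $L,R$ swapped shows the problem stays hard even when one asks for a large number of candidates with a fixed small number of voters, or vice versa, which motivates the parameterized results announced in the introduction.

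\textbf{Main obstacle.} The delicate part is engineering the "scrambling" of non-neighbour candidates so that it is simultaneously (a) pairwise incompatible — no two voters can agree on the order of any scrambled candidate relative to enough other candidates — and (b) does not accidentally destroy agreement on the neighbour candidates or force $C'$ to be too small for trivial reasons. Using a distinct batch of blocker candidates per voter inflates $|C|$ only polynomially and makes incompatibility transparent, but one must then recheck that the threshold $m'$ is chosen correctly relative to $|L|$ and the number of blockers, and that no identity subelection can cheat by selecting blockers from a \emph{single} voter together with a large neighbour set — which is ruled out precisely by the requirement $|V'|\ge n'\ge 2$. Getting these bookkeeping details exactly right is where the real work lies; the combinatorial core is routine.
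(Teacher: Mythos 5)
Your NP membership argument is fine, and reducing from \balancedbiclique{} is a reasonable instinct, but the reduction as sketched has a genuine gap at exactly the step you flag as ``the delicate part,'' and neither of the scrambling mechanisms you propose closes it. The difficulty is that an identity subelection only constrains the \emph{relative order of the selected candidates} $C'$: interleaving a voter's non-neighbours with private blocker candidates does not change the relative order of those non-neighbours with respect to the original candidates at all (blockers simply won't be selected), and reversing the non-neighbours \emph{among themselves} does nothing when a voter has a single non-neighbour inside $C'$ and, more importantly, does not disturb their order relative to the neighbours. More fundamentally, suppose all voters rank their neighbours in a fixed global order $c_1 \succ \dots \succ c_k$ and voter $r$ is non-adjacent to a ``middle'' candidate $c_j$. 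Wherever $r$ places $c_j$, it still agrees with the global order on at least $\min\bigl(|N(r)\cap\{c_l: l<j\}|,\ |N(r)\cap\{c_l: l>j\}|\bigr)$ of the pairs $\{c_j, c_l\}$, which is far larger than $m'-2$ when $m'=t$ is small relative to $|L|$. Hence there are spurious identity subelections of size $(t,t)$ containing a candidate not adjacent to all selected voters (e.g., $r$ puts $c_j$ on top and $C'$ consists of $c_j$ together with common neighbours of low global index), so the backward direction of your step (3) fails: a disagreement forced ``per non-edge'' is not something rankings give you for free, since agreement is a property of candidate \emph{pairs} within the unknown set $C'$, not of a single (voter, candidate) incidence.

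The paper takes a different route precisely to manufacture such forced disagreements: it reduces from \threesat{}, introducing a pair of literal candidates $c_{x}, c_{\neg x}$ per variable together with many \emph{pairs} of main voters, one of each pair ranking $c_{x}\succ c_{\neg x}$ and the other the reverse, all otherwise following a common global order of ``sectors.'' By a pigeonhole/counting argument on the required number of voters, any solution must contain a full pair of main voters, which forces exactly one candidate per literal pair into $C'$ — i.e., the selected candidates encode a truth assignment — and clause voters are constructed (by displacing the candidate of the negated literal into a later sector, using extra padding variables per clause) so that a clause voter can be selected only if its literal is true and at most one clause voter per clause fits. If you want to salvage your approach you would need gadgets of this paired/opposed-voters kind to turn non-edges into unavoidable pairwise order conflicts; as written, the combinatorial core is not routine but is exactly what is missing.
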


\begin{proof}[Proof Sketch]
 Membership to $\np$ is clear as we can guess both a set of voters and a set of candidates and check if these voters rank these candidates in the same order. Hardness is shown by reduction from $3$-SAT. 
 For a $3$-CNF formula $\phi$ with the set of variables $X=\{x_0, \dots, x_n \}$ and the set of clauses $C= \{C_0, \dots, C_m\}$, we consider a sufficiently large number of pairs of voters $M$, whom we call \emph{main voters}. We further include three \emph{clause voters} for each of the clauses in $C$. We will associate every such voter with a distinct literal in a corresponding clause. Subsequently, we consider a set of variables $X'$, consisting of $X$ and five additional variables for each clause, assuming an order of candidates in $X'$. Then, we construct a pair of \emph{literal candidates} for each such variable $x_i$, i.e., $c_{x_i}$ and $c_{\neg x_i}$. Furthermore, in the constructed instance of $\hiddenidentity$ we take the identity with the number of candidates equal to $|X'|$, and the number of voters $|C| + M$.

Further, we let main voters rank the candidates following the order of $X'$, with one of such voters in each pair ranking $c_{x_i} \succ c_{\neg x_i}$ and the other $c_{\neg x_i} \succ c_{ x_i}$, for every $x_i \in X'$. Notice that by choosing a sufficiently high number of main voters we ensure that in a subelection required in the instance we consider, we select exactly one literal candidate for each variable. Hence, such a subelection corresponds to some valuation over $X'$. We let every clause voter rank the candidate encoding the negation of a literal that the voter corresponds to lower than the candidates representing a variable with a higher index. Observe that then, if a clause voter $v$ corresponding to a literal $L$ is selected in a target subelection, then $c_{\neg L}$ is not, as then $v$'s vote would not be identical to the main voters' rankings. Subsequently, using additional variables in $X'$ corresponding to a clause $C_i$, we ensure that at most one of the clause voters for $C_i$ is present in a subelection satisfying criteria of our instance, because otherwise the selected clause voters would not have identical votes in the target subelections. As by the size of a target subelection we need to select at least $|C|$ clause voters, we obtain that it exists exactly when $\varphi$ is satisfiable. (See the full version of the paper for an example of votes in an encoding we define).
\end{proof}

Following the $\np$-hardness of the problem we consider in general, a natural approach is to ask for $\fpt$ algorithms. We will show that we are able to efficiently verify if a given set of voters has an identity of a given size (even if we do not know which candidates should be selected). This enables us to provide an $\fpt$ algorithm parameterized by the number of voters.  

\begin{proposition} \label{thm:hidden-identity-voters-verification-p}
    Checking if for a given set of $n'$ voters there exists
    an identity subelection with at least $m'$ candidates is P-time~solvable. 
\end{proposition}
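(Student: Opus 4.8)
The plan is to reduce the task to computing a longest chain in a suitably defined partial order on the candidates, a classical polynomial-time problem.

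First I would observe that, since each voter's preference is a total order, a set of candidates $C'$ forms an identity subelection for the fixed voter set $V'$ if and only if all voters in $V'$ agree on the relative order of \emph{every pair} of candidates in $C'$: agreement on all pairs forces the restrictions of all votes to $C'$ to coincide as one and the same linear order. This converts the global condition (identical rankings) into a purely pairwise one. Next I would define a binary relation $\to$ on $C$ by setting $a \to b$ exactly when every voter in $V'$ ranks $a$ above $b$. Using transitivity of the individual votes, I would check that $\to$ is a strict partial order: it is irreflexive and antisymmetric (two candidates cannot each be unanimously ranked above the other), and if $a \to b$ and $b \to c$ then every voter ranks $a$ above $c$, so $a \to c$.

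The crucial point is that a pair $\{a,b\}$ is comparable under $\to$ precisely when the voters agree on it; hence, by the pairwise reformulation above, the identity subelections supported by $V'$ are exactly the chains of the poset $(C, \to)$, and the largest one has size equal to the length of a longest chain. It then remains to compute a longest chain. I would build $\to$ explicitly by testing, for each ordered pair of candidates, whether all $n'$ voters rank the first above the second, at cost $O(|C|^2 n')$. Viewing $\to$ as a directed acyclic graph (acyclicity is guaranteed because $\to$ is a partial order), a longest chain is a longest path, computable in $O(|C|^2)$ time by dynamic programming over a topological order. Answering ``yes'' iff the longest chain has at least $m'$ candidates settles the decision problem, and the whole procedure is polynomial in the input size.

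The step I expect to be the crux --- and the one that distinguishes this from a mere $O(|C|^{n'})$ enumeration or a naive multi-sequence longest-common-subsequence computation --- is recognizing that the ``all voters agree'' relation is not an arbitrary graph but the comparability relation of a partial order. Phrased as ``find a maximum set of candidates on which all voters agree'', the problem resembles maximum clique in the agreement graph and would look hard; the insight that this graph is exactly the comparability graph of $(C,\to)$ (equivalently, that common subsequences of the votes, viewed as permutations, correspond precisely to chains of $\to$) is what makes it tractable, since longest chains, unlike maximum cliques in general graphs, are found in polynomial time.
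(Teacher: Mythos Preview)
Your proposal is correct and is essentially the same argument as the paper's: the paper also builds the unanimous-preference relation on $C$ (calling it the \emph{unanimity graph} $\textit{Una}(V')$), observes it is a DAG, and reduces the question to finding a longest path of length at least $m'-1$. Your framing in terms of chains of a strict partial order is slightly more explicit about transitivity (which is what makes ``path in the DAG'' and ``chain in the poset'' coincide here), but the method and complexity bounds match the paper's proof.
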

\begin{proof}
    Suppose that we are given a $\hiddenidentity$ instance $(E, m', n')$ and a set of $n'$ voters $V' = \{v_{i_1}, v_{i_2}, \dots, v_{i_{n'}}\}$. We ask if there exist $m'$ candidates that are preferred exactly in the same order by all voters from $V'$.\footnote{Initially this approach is reminiscent of the (NP-hard) Longest Common Subsequence problem. However, 
    in our problem, each candidate appears in each vote exactly once. } 
    We define the {\em unanimity graph} $\textit{Una}(V')$ as the graph whose set of vertices is $C$, and that contains edge $(c,c')$ if and only if all voters in $V'$ prefer $c$ to $c'$. $\textit{Una}(V')$ is a directed acyclic graph, and $(C', V')$ is an identity subelection of $E$ if and only if there is a path in $\textit{Una}(V')$ that goes through all candidates of $C'$. Therefore, it suffices to check whether  $\textit{Una}(V')$ contains a path of length $m'-1$; as 
    finding the longest path in DAG is well-known to be P-time solvable \cite{cor-lei-riv-ste:b:algorithms-fourth-edition}, our algorithm runs in polynomial time. 
\end{proof}
Specifically, the Algorithm \ref{thm:hidden-identity-voters-verification-p} can be implemented in $O(n' \cdot |C|^2)$ time complexity and $O(n' \cdot |C| + |C|^2)$ space complexity. 
We demonstrate our approach in an example.

\begin{example}
We continue Example \ref{running}. Take  $V' = \{v_1,v_3,v_6\}$: The unanimity graph for $V'$ consists of the edges $a \rightarrow b$, $a \rightarrow c$, $a \rightarrow d$, $a \rightarrow e$, $b \rightarrow c$, $b \rightarrow d$, $c \rightarrow d$, $e \rightarrow d$, $f \rightarrow d$, $f \rightarrow e$. The longest path being $a \rightarrow b \rightarrow c \rightarrow  d$, $V'$ has a subelection with $4$ candidates (and no more).  With $V'' = \{v_1,v_2,v_3,v_6\}$, the unanimity graph is composed of the edges $a \rightarrow d$, $b \rightarrow c$, $b \rightarrow d$, the longest path is $b \rightarrow c \rightarrow  d$, $V''$ has a subelection with $3$ candidates (and no more).
\end{example}

Using the above algorithm, we obtain{that} $\hiddenidentity$ is fixed-parameter tractable for the number of voters.

\begin{corollary} \label{thm:hidden-identity-ftp-n}
    $\hiddenidentity$ is in $\fpt$ for the parameterization by the number of voters ($|V|$) and in $\xp$ 
    for the parameterization 
    by the number of voters in the solution ($n'$).
\end{corollary}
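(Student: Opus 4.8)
The plan is to combine the polynomial-time verification of \Cref{thm:hidden-identity-voters-verification-p} with an exhaustive search over candidate voter sets. First I would dispose of the trivial case $n' > |V|$, in which the instance is a clear no-instance. Otherwise, I would iterate over every subset $V' \subseteq V$ with $|V'| = n'$, and for each one run the algorithm of \Cref{thm:hidden-identity-voters-verification-p} to decide whether $V'$ admits an identity subelection with at least $m'$ candidates; the overall algorithm accepts if and only if at least one of these checks succeeds.

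For correctness, the key point is the monotonicity of identity subelections with respect to the voter set: if $(C', V')$ is an identity subelection of $E$ and $V'' \subseteq V'$, then $(C', V'')$ is also an identity subelection, since the voters in $V''$ a fortiori agree on the order of $C'$. Hence if the original instance is a yes-instance, witnessed by some $V'$ with $|V'| \geq n'$ and some $C'$ with $|C'| \geq m'$, then picking any $V'' \subseteq V'$ with $|V''| = n'$ yields a voter set of size exactly $n'$ that is detected by our enumeration; the converse direction is immediate.

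For the running time, the number of subsets we enumerate is $\binom{|V|}{n'}$. On the one hand this is at most $2^{|V|}$, so the total running time is $2^{|V|} \cdot \mathrm{poly}(|C|,|V|)$, which places $\hiddenidentity$ in $\fpt$ when parameterized by $|V|$. On the other hand, $\binom{|V|}{n'} \leq |V|^{n'}$, so the running time is $|V|^{n'} \cdot \mathrm{poly}(|C|,|V|)$, an $\xp$ bound for the parameter $n'$.

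I expect no genuinely hard step here: once \Cref{thm:hidden-identity-voters-verification-p} is available, the result is a routine brute force over voter subsets, and the only thing requiring (a trivial) argument is the monotonicity observation that lets us restrict attention to subsets of size exactly $n'$ rather than all subsets of size at least $n'$.
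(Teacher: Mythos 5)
Your proposal matches the paper's own argument: enumerate all size-$n'$ voter subsets, run the verification algorithm of \Cref{thm:hidden-identity-voters-verification-p} on each, and bound $\binom{|V|}{n'}$ by $2^{|V|}$ for the \fpt{} claim and by $|V|^{n'}$ for the \xp{} claim. The only addition is your explicit monotonicity remark, which the paper leaves implicit; otherwise the two proofs are essentially identical.
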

\begin{proof}
    Suppose we are given a $\hiddenidentity$ instance $(E, m', n')$. We iterate through all possible size-$n'$ subsets of voters and check if the algorithm \ref{thm:hidden-identity-voters-verification-p} found any identity subelection consisting of at least $m'$ candidates, if so, then we accept, otherwise we reject.
    The algorithm works in ${O({|V| \choose n'} \cdot (n' \cdot (n' \cdot |C|^2)))}$ 
    time complexity and $O(|V| \cdot |C| + |C|^2)$ space complexity.
\end{proof}

We show analogous results for a given set 
of candidates and for the parameter number of candidates.

\begin{proposition} \label{thm:hidden-identity-candidates-verification-p}
    Checking if for a given set 
    of $m'$ candidates there exists an identity subelection with at least $n'$ voters is P-time solvable.
\end{proposition}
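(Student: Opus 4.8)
The plan is to exploit the fact that, unlike in \Cref{thm:hidden-identity-voters-verification-p}, here the candidate set is \emph{fixed}: given $E=(C,V)$ and $C'\subseteq C$ with $|C'|=m'$, each voter $v\in V$ induces exactly one linear order on $C'$, namely the restriction of $\succ_v$ to $C'$. The key observation is that a set $V'\subseteq V$ together with $C'$ forms an identity subelection if and only if all voters in $V'$ induce the \emph{same} order on $C'$. Hence the question ``is there an identity subelection with candidate set $C'$ and at least $n'$ voters?'' is equivalent to ``does some linear order on $C'$ occur as the restriction of at least $n'$ of the votes?''.

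The algorithm then has three steps. First, for each voter $v$, compute the order that $\succ_v$ induces on $C'$; this can be done by a single scan of $v$'s ranking in $O(|C|)$ time, so in $O(|V|\cdot|C|)$ time in total. Second, group the voters by their induced orders: keeping a dictionary keyed by (a canonical encoding of) each induced order, or equivalently sorting the list of the $|V|$ induced orders lexicographically, and record the size of each group; this costs an additional $O(|V|\cdot m')$ time up to logarithmic factors. Third, accept if and only if the largest group has size at least $n'$. Overall the running time is polynomial in $|C|\cdot|V|$, and in fact of the same order as the routines in \Cref{sec:IHC,sec:FHI}.

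Correctness follows directly from the observation above: if the largest group $V'$ satisfies $|V'|\ge n'$, then $(C',V')$ is an identity subelection of $E$ with $|C'|=m'\ge m'$ and $|V'|\ge n'$; conversely, every identity subelection of $E$ whose candidate set is $C'$ is contained in one of the groups, so if one with $\ge n'$ voters exists, some group has size $\ge n'$.

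There is essentially no hard step here; the only thing worth flagging is a tempting but wasteful approach, namely enumerating all $m'!$ linear orders on $C'$ and checking each one for the number of voters consistent with it. This is unnecessary: at most $|V|$ distinct orders actually arise among the votes, so it suffices to inspect those, which is what keeps the procedure polynomial. (If desired, one can moreover record, for every realized order, the set of voters supporting it, which yields not just a yes/no answer but the full list of maximal identity subelections on $C'$.)
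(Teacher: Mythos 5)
Your proof is correct and uses essentially the same approach as the paper: restrict each vote to $C'$, note that at most $|V|$ distinct induced orders can occur, and accept iff some order is shared by at least $n'$ voters. The grouping/dictionary implementation and the $O(|V|\cdot|C|)$-type running time match the paper's argument.
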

\begin{proof}
    Take a $\hiddenidentity$ instance $(E, m', n')$ and a set of $m'$ candidates $C' = \{c_{i_1}, c_{i_2}, \dots, c_{i_{m'}}\}$. We ask if there are $n'$ voters that rank them identically. 
    As there are at most $|V|$ distinct orders of candidates $C'$ in our instance, it suffices to check if any of them appears in at least $n'$ votes.
\end{proof}

The Algorithm \ref{thm:hidden-identity-candidates-verification-p} can be implemented in $O(|V| \cdot |C|)$ time and space complexity.
Due to it, we obtain parameterized tractability of $\hiddenidentity$ for the number of candidates.

\begin{corollary} \label{thm:hidden-identity-fpt-m}
    $\hiddenidentity$ is~in~$\fpt$ for the parameterization by the number of candidates~($|C|$) and in~$\xp$ 
    for the parameterization by the number of candidates in the solution~($m'$).
\end{corollary}
\begin{proof}
    Suppose we are given a $\hiddenidentity$ instance $(E, m', n')$. We iterate through all size-$m'$ sets of candidates and accept if the algorithm \ref{thm:hidden-identity-candidates-verification-p} found any identity subelection consisting of at least $n'$ voters, otherwise we reject. 
    The algorithm works in $O({|C| \choose m'} \cdot (|V| \cdot |C|))$ time complexity and $O(|V| \cdot |C|)$ space complexity.
\end{proof}

\begin{figure*}
  \begin{subfigure}[b]{0.3333\textwidth}
      \centering
      \includegraphics[width=6.1cm, trim={0.1cm 0.1cm 0.1cm 0.1cm}, clip]{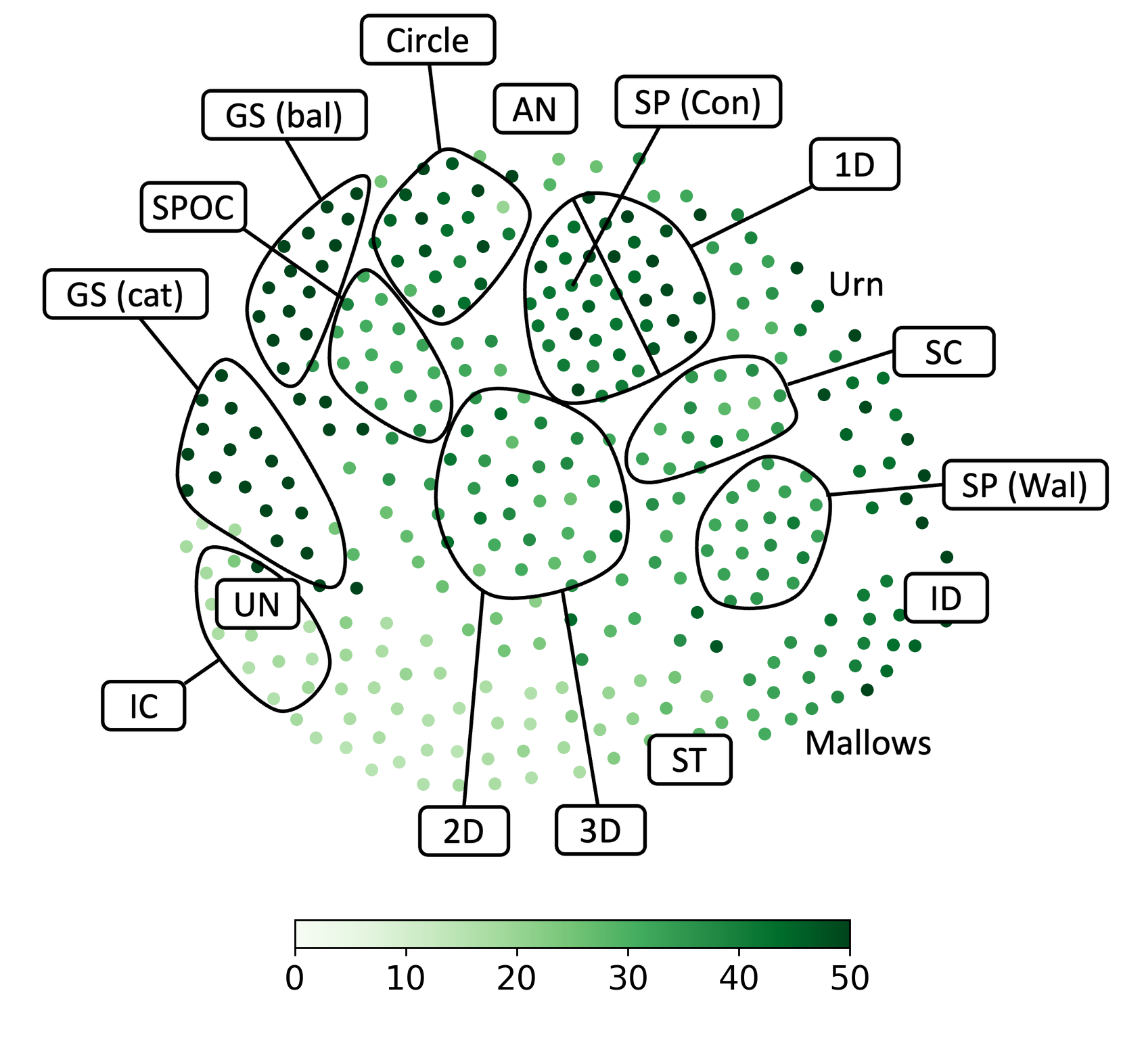}
      \caption{ $\maxclone(E,2)$.}
  \end{subfigure}%
    \begin{subfigure}[b]{0.3333\textwidth}
      \centering
      \includegraphics[width=6.1cm, trim={0.1cm 0.1cm 0.1cm 0.1cm}, clip]{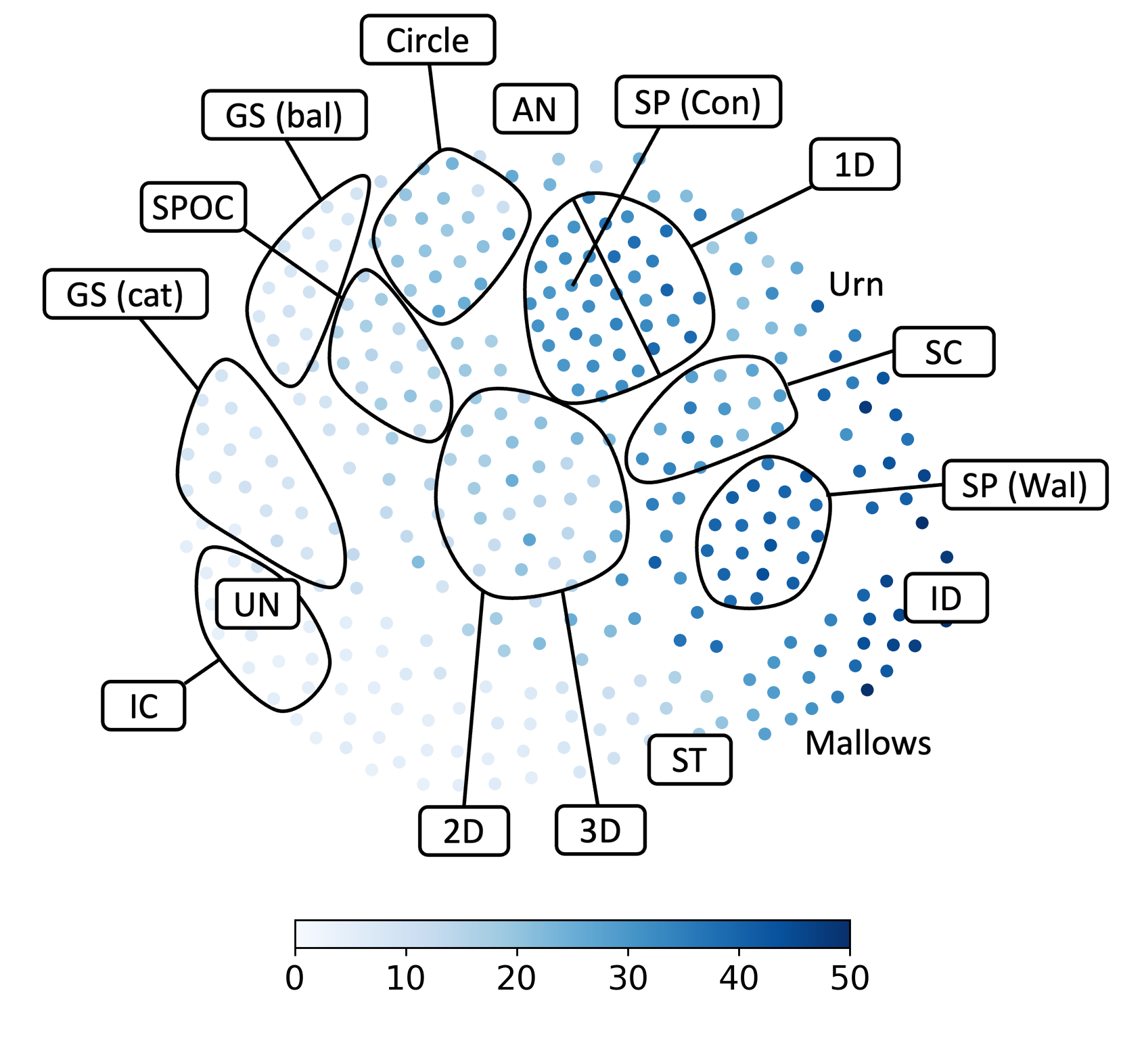}
      \caption{ $\maxid(E,5)$}
  \end{subfigure}%
    \begin{subfigure}[b]{0.3333\textwidth}
      \centering
      \includegraphics[width=6.1cm, trim={0.1cm 0.1cm 0.1cm 0.1cm}, clip]{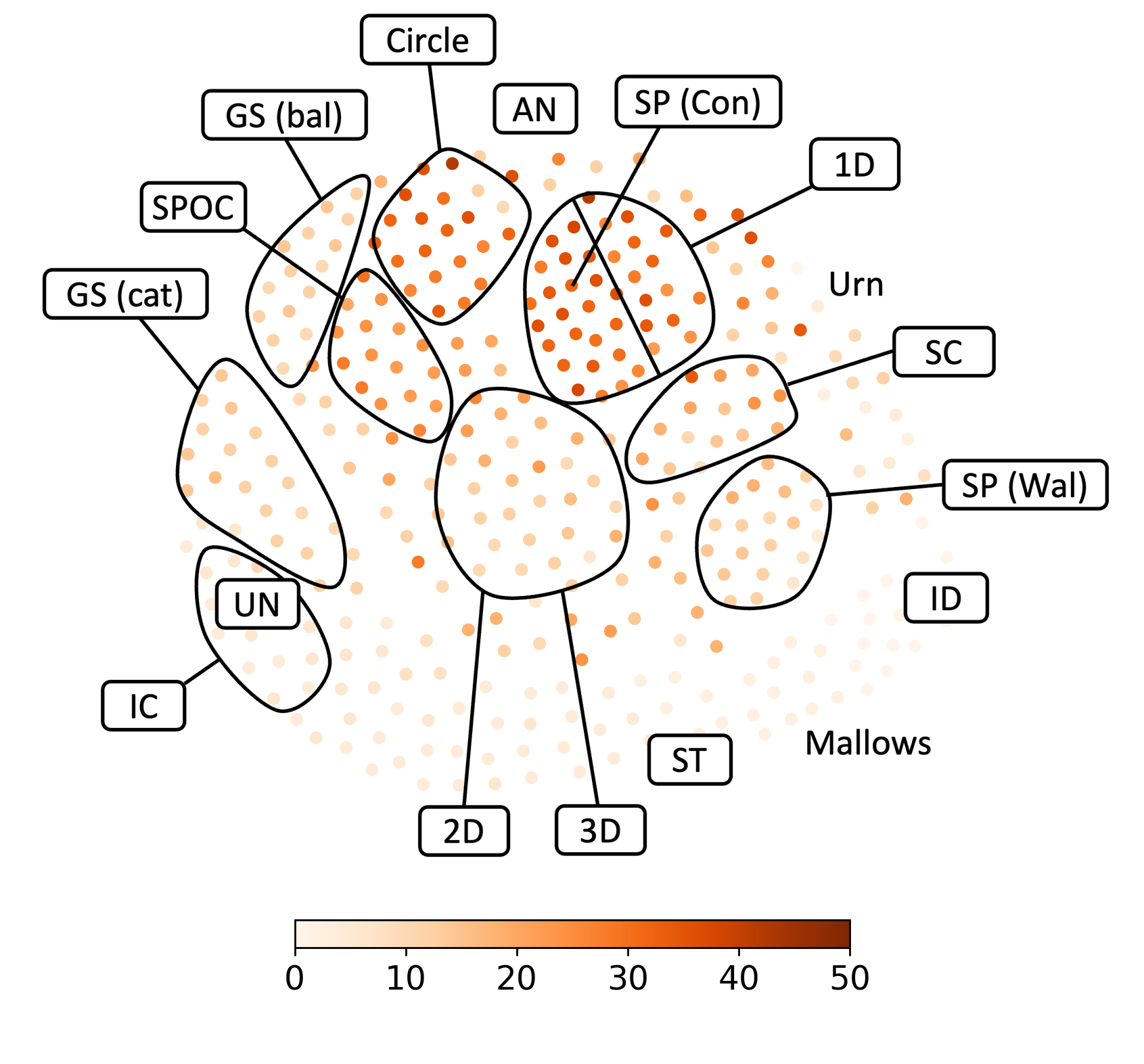}
      \caption{ $\maxan (E,5)$}
  \end{subfigure}%

   \caption{\label{exp:maps} 
   Maps of elections with 10 candidates and 50 voters. Each point represents a single election, and its color represents the maximum number of voters that a) find certain two candidates clones (left), b) agree on certain five candidates being identity (middle), c) are antagonized over certain five candidates. In other words, the darker the point is, the more voters agree on a certain set of candidates being clones (left), identity (middle), or antagonism (right). On each map, ID label marks the identity election, and AN label marks the antagonism election, and dots representing elections coming from the same statistical culture were connected in clusters with names.}
\end{figure*}

  The algorithms described in Propositions \ref{thm:hidden-identity-voters-verification-p} and \ref{thm:hidden-identity-candidates-verification-p} can be used to effectively answer natural questions about preferences such as what do voters coming from a certain background mostly agree on, or are certain alternatives ranked in the same order by some large group of voters. 
  As we showed in Corollaries \ref{thm:hidden-identity-ftp-n} and \ref{thm:hidden-identity-fpt-m}, we are able to answer these questions effectively provided that either the number of voters or the number of candidates is relatively small (up to $20$ or $30$).
  Nevertheless, if both the number of voters and the number of candidates are not small, then this approach is too slow.
  To tackle this problem, we provide an ILP for $\hiddenidentity$ which finds an identity subelection (if exists) or the closest subelection to identity if such does not exist. 


\begin{proposition}\label{ILP:identity}
  There is an ILP for \hiddenidentity{} which selects a solution for a ``yes''-instance and the closest subelection to identity (in terms of swap distance) for a ``no''-instance.
\end{proposition}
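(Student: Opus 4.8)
The plan is to write a single integer linear program whose feasible solutions encode a triple $(C',V',\sigma)$, where $C'\subseteq C$ with $|C'|\ge m'$, $V'\subseteq V$ with $|V'|\ge n'$, and $\sigma$ is a ``target'' linear order on the candidates, and whose objective counts exactly the number of adjacent-candidate swaps needed to turn the subelection $(C',V')$ into the identity subelection in which every voter of $V'$ ranks $C'$ according to $\sigma$. Since the Kendall-tau distance between two rankings equals the minimum number of adjacent transpositions taking one into the other, and since making a set of votes an identity election means making them all equal to one common order, the cost to be minimized for a fixed $(C',V')$ is $\min_\sigma \sum_{v\in V'} d_{KT}(\sigma,\succ_v|_{C'})$, which decomposes over unordered pairs of selected candidates; the ILP will implement precisely this decomposition, and will minimize jointly over $(C',V',\sigma)$, so its optimum is the global minimum swap distance over all admissible subelections. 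In particular the optimum is $0$ exactly on ``yes''-instances, and in every case reading off the selected candidates and voters yields the desired output.

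Concretely, I would introduce binary variables $x_c$ for $c\in C$ (``$c\in C'$''), $y_v$ for $v\in V$ (``$v\in V'$''), and $z_{c,c'}$ for every ordered pair $c\ne c'$ (``$\sigma$ ranks $c$ above $c'$''), with $z_{c,c'}+z_{c',c}=1$ and the standard Kemeny transitivity constraints $z_{c,c'}+z_{c',c''}+z_{c'',c}\le 2$ for every triple, so that $z$ encodes a genuine linear order on $C$ whose restriction to $C'$ is the target order. Add $\sum_{c}x_c\ge m'$ and $\sum_{v}y_v\ge n'$. For every voter $v$ and every unordered pair $\{c,c'\}$, say with $c\succ_v c'$ in $E$, introduce a variable $w_{v,\{c,c'\}}\ge 0$ constrained by $w_{v,\{c,c'\}}\ge x_c+x_{c'}+y_v+z_{c',c}-3$, and minimize $\sum_{v}\sum_{\{c,c'\}} w_{v,\{c,c'\}}$. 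Because all objective coefficients are nonnegative and the right-hand side of each such inequality is at most $1$, at the optimum $w_{v,\{c,c'\}}=1$ exactly when $c,c'\in C'$, $v\in V'$, and $\sigma$ disagrees with $v$ on $\{c,c'\}$, and is $0$ otherwise; hence the objective equals the swap distance from $(C',V')$ to its nearest identity subelection. The program has $O(|C|^2+|V|\cdot|C|^2)$ variables and $O(|C|^3+|V|\cdot|C|^2)$ constraints, hence is polynomial-size.

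For the size bookkeeping note that adding a candidate or a voter to a subelection can only weakly increase its swap distance to the nearest identity election, so the minimum over all subelections with at least $m'$ candidates and at least $n'$ voters is attained with exactly $m'$ candidates and $n'$ voters; thus the $\ge$ constraints (or equality constraints) both deliver the intended ``closest subelection,'' and on a ``yes''-instance the witness $(\{c:x_c=1\},\{v:y_v=1\})$ can simply be shrunk to the required sizes. The only genuinely delicate point is the linearization: one must verify that the single inequality together with $w\ge 0$ correctly models, under minimization, the logical conjunction of the four binary quantities $x_c$, $x_{c'}$, $y_v$, $z_{c',c}$, and that enforcing transitivity on all triples of $C$ (rather than only on triples of selected candidates) is harmless since the objective never refers to pairs outside $C'$. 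I expect no real obstacle beyond this routine verification; the substantive content is just the observation that the minimum-swap-to-identity cost is the pair-decomposable Kemeny-style quantity described above.
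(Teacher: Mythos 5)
Your proposal is correct and takes essentially the same route as the paper's ILP: binary selection variables for voters and candidates, pairwise precedence variables encoding a guessed target order, and an objective that counts, over selected voters and selected candidate pairs, disagreements with that order — exactly the pair-decomposed swap distance, minimized jointly. The differences are implementation details: the paper fixes the sizes with equalities and encodes the order only on selected pairs via $S_{j_1,j_2}+S_{j_2,j_1}=C_{j_1}\cdot C_{j_2}$, linearizing products of binaries rather than using your single-inequality conjunction trick, whereas you work with a full order on $C$, add explicit Kemeny transitivity constraints (which, if anything, treats the ``closest subelection'' guarantee more carefully than the paper's formulation), and handle the size constraints with $\ge$ plus a monotonicity argument.
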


\begin{proof}

  Let~$E = (C,V)$ be the election we wish to
  analyze, with $C = \{c_1, \dots, c_m\}$ and~$V = \{v_1, \dots, v_n\}$. All variables will be binaries. For each~$i \in [n]$, we define 
  a variable~$V_{i}$ with the intention that value~$1$ indicates that
  voter~$v_i$ is selected. Similarly, for each~$j \in [m]$, we define 
  a binary variable~$C_{j}$ with the intention that value~$1$ indicates that
  candidate~$c_j$ is selected. The variable $S_{j_1,j_2}$ is equal to $1$ if candidates $C_{j_1}$ and $C_{j_2}$ are selected and candidate $C_{j_1}$ appears before $C_{j_2}$ in the identity ranking.
  Variable $P_{i,j_1,j_2}$ is equal to $1$ if voter $V_i$ agrees that $C_{j_1}$ is ranked before  $C_{j_2}$ and both these candidates are selected.
  We introduce the following constraints: \footnote{
 In some constraints, we used the multiplication operation. However, as all the variables are binaries they can be easily replaced by standard constraints of a less intuitive form.}
  \begin{align}
    \label{ilp:i1}
    &\textstyle\sum_{i \in [n]} V_{i} = n', \\
    \label{ilp:i2}
    &\textstyle\sum_{j \in [m]} C_{j} = m', \\
    \label{ilp:i3}
    &S_{j_1,j_2} + S_{j_2,j_1} = C_{j_1} \cdot C_{j_2}, \text{\quad}\forall_{j_1,j_2 \in [m]}, \\
    \label{ilp:i4}
    &P_{i,j_1,j_2} = V_i\cdot S_{j_1,j_2}, \text{\quad}\forall_{i \in [n], j_1,j_2 \in [m]}.
  \end{align}
  Constraints~\eqref{ilp:i1} and~\eqref{ilp:i2} ensure that we select the proper numbers of voters and candidates.
   Constraints~\eqref{ilp:i3} and~\eqref{ilp:i4} implements the logic of $S$ and $P$ variables, respectively.
  The optimization 
  goal is to minimize:
 $\textstyle\sum_{i \in[n],\ j_1, j_2 \in [m]} P_{i,j_1,j_2} \cdot
  W_{i,j_1,j_2}$,
where $W_{i,j_1,j_2} = [pos_{v_i}(c_{j_1}) > pos_{v_i}(c_{j_2})]$.
\end{proof}




We see that with the proposed ILP, we are able to maximize the solution if exists as well as to effectively determine the closest subelection if the exact one does not exist.
We also consider the maximization problem for $\hiddenidentity$, which will be useful for our experiments.
For $m' \in [m]$, by $\maxid(E,m')$ we denote the problem of finding $n'\in[n]$ such that ($n',m'$) is the identity signature, i.e., $size(E) = (m', n')$. We can solve it via a modification of the ILP in Proposition \ref{ILP:identity}, that is, by 1) adding a constraint $\sum_{i \in [n]} V_{i}$ (i.e., the previous objective function) and 2) changing the optimization goal to maximize $\sum_{i \in [n]} V_{i}$.

\subsection{Hidden Antagonism}\label{sec:IHA}

We further analyze the problem of checking if an election contains an antagonism of a given size.

\begin{quote}
		\noindent $\hiddenantagonism$:\\
		\hspace*{-1em} \indent\textit{Input:} Election $E=(C,V)$, $m',n' \in \mathbb{N}$.\\
		\hspace*{-1em}\textit{Question:} Is there an antagonism subelection $E'=(C',V')$ of $E$, with $|C'| \geq m'|$,~$|V'| \geq n'$? 
	\end{quote}

First, we show that $\hiddenantagonism$ is intractable in general.

\begin{theorem}\label{thm:IHANP}
$\hiddenantagonism$ is \np-complete.
\end{theorem}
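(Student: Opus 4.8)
The plan is to establish membership in \np{} directly and then prove \np-hardness by a reduction from \hiddenidentity{} (Theorem~\ref{thm:ihidden-dentity-np}). Membership is immediate: a certificate consists of $C' \subseteq C$, $V' \subseteq V$ with $|C'| \ge m'$, $|V'| \ge n'$ and $|V'|$ even, together with a bipartition $V' = V_1 \sqcup V_2$; one checks in polynomial time that $|V_1| = |V_2|$ and that every voter in $V_1$ ranks $C'$ in the exact reverse of every voter in $V_2$.

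For hardness, fix a \hiddenidentity{} instance $(E, m', n')$ with $E = (C, V)$, $|C| = m$; we may assume $1 \le m' \le m$ and $1 \le n' \le n$, since otherwise the instance is trivial. I build an election $E' = (C \cup D, V^+ \cup V^-)$, where $D = \{d_1, \dots, d_{m+1}\}$ is a fresh block of candidates, and for every voter $v \in V$ there is a ``plus'' voter $v^+$ with ranking $[\succ_v \text{ on } C] \succ d_1 \succ d_2 \succ \cdots \succ d_{m+1}$ and a ``minus'' voter $v^-$ whose ranking is the exact reverse of that of $v^+$, i.e., $d_{m+1} \succ \cdots \succ d_1 \succ [\text{reverse of } \succ_v \text{ on } C]$. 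The target parameters are $(m' + m + 1,\ 2n')$. The construction is clearly polynomial. For the forward direction, if $(C_0, V_0)$ is an identity subelection of $E$ with $|C_0| \ge m'$ and $|V_0| \ge n'$, then taking $C^* = C_0 \cup D$, $V_1 = \{v^+ : v \in V_0\}$ and $V_2 = \{v^- : v \in V_0\}$ yields an antagonism subelection of $E'$ of size $(|C_0| + m + 1,\ 2|V_0|)$, since on $C^*$ each minus copy is literally the reverse of each plus copy.

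For the backward direction, suppose $E'$ has an antagonism subelection $(C^*, W)$ with $|C^*| \ge m' + m + 1$ and $|W| \ge 2n'$, and let $W = W_1 \sqcup W_2$ be its balanced bipartition. A counting argument shows that $C^*$ must contain at least $m'$ original candidates and at least $m' + 1 \ge 2$ of the dummies in $D$ (because $|C^* \cap C| \le m$). The crucial observation is that, restricted to $C^*$, a plus voter ranks all of $C^* \cap C$ above all of $C^* \cap D$ and orders $C^* \cap D$ by increasing index, whereas a minus voter does exactly the opposite on both counts; since $C^*$ meets both $C$ and $D$ and $|C^* \cap D| \ge 2$, no restricted ranking of a plus voter can equal that of a minus voter, nor can it be the reverse of the restricted ranking of another plus (or of another minus) voter. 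Hence each of $W_1, W_2$ is ``pure'' (all plus copies or all minus copies), and, up to swapping, $W_1 = \{v^+ : v \in A\}$ and $W_2 = \{v^- : v \in B\}$; moreover all voters in $W_1$ share the same restriction to $C_0 := C^* \cap C$. Therefore $(C_0, A)$ is an identity subelection of $E$ with $|C_0| \ge m'$ and $|A| = |W_1| = |W|/2 \ge n'$, which closes the equivalence.

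I expect the main obstacle to be precisely this ``purity'' step: a priori, an antagonism of $E'$ could place both plus and minus copies on the same side of the partition, and without a sizeable guaranteed chunk of the clone block $D$ sitting inside $C^*$ there would be no structural obstruction to that. Choosing $|D| = m+1$ is what forces every sufficiently large candidate set to contain at least two clones, and it is the ``all-of-$C$-above-all-of-$D$'' versus ``all-of-$D$-above-all-of-$C$'' dichotomy that then pins down the orientation of the partition. The remaining edge cases (e.g.\ $B = \emptyset$, forcing $W = \emptyset$ and contradicting $|W| \ge 2n' \ge 2$, or $|C^* \cap C| = 0$, which is ruled out by the counting argument) are routine to verify.
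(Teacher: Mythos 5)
Your proof is correct, but it takes a genuinely different route from the paper. The paper does not reduce from \hiddenidentity{}: it goes back to \threesat{} and modifies the gadget of Theorem~\ref{thm:ihidden-dentity-np} (doubling the main voters, adding mirror-image copies of the clause voters, and adding extra variables so that a satisfying assignment yields two equal-sized groups with exactly reversed restrictions). You instead reduce directly from \hiddenidentity{} itself: append a block $D$ of $m+1$ fresh candidates, replace each voter $v$ by a pair $v^+$ (original order followed by $d_1\succ\dots\succ d_{m+1}$) and $v^-$ (the exact reverse), and ask for an antagonism of size $(m'+m+1,\,2n')$. The key step --- that any large enough $C^*$ must meet both $C$ and $D$, so the ``$C$-above-$D$ with increasing dummies'' versus ``$D$-above-$C$ with decreasing dummies'' dichotomy forces each side of the bipartition to be pure and of opposite type --- is sound, and the two directions of the equivalence check out (note that the dichotomy already works once $C^*\cap C$ and $C^*\cap D$ are both nonempty; your insistence on at least two dummies is harmless but not actually needed). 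Your approach buys modularity and brevity: it leans on the hardness already established for \hiddenidentity{} and uses a clean, reusable gadget, whereas the paper's approach re-exposes the \threesat{} structure inside the antagonism instance, which is longer but self-contained and makes the correspondence between assignments and selected voters explicit in the antagonism setting as well. Membership in \np{} is handled the same way in both.
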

\begin{proof}[Proof Sketch]
The proof extends the reduction in the proof Theorem \ref{thm:ihidden-dentity-np}. For a 3-CNF formula $\varphi$ with the set of variables $X=\{x_0, \dots, x_n\}$ and the set of clauses $C= \{C_0, \dots, C_m\}$, 
we begin by taking the encoding provided there. Subsequently, we double the number of main voters, reversing their order of candidates corresponding to particular variables for half of these voters. So, for half of the main voters, we have that $c_{x_i}$ and $c_{\neg x_i}$ are preferred to both $c_{x_j}$ and $c_{\neg x_j}$, while for a half $c_{x_j}$ and $c_{\neg x_j}$ are preferred to $c_{x_i}$ and $c_{\neg x_i}$, if $i >j$. 
Observe that as in the case of the reduction in the proof of Theorem \ref{thm:ihidden-dentity-np}, by choosing a sufficient number of main voters and the required number of candidates in a target subelection as $|X'|$, we ensure that it corresponds to some valuation over $X'$. 

Then, we take two copies of voters corresponding to each clause, requiring voters in each such copy 
to have a reversed order of candidates corresponding to particular variables, as in the case of main voters. Following the reasoning provided in the proof of Theorem \ref{thm:ihidden-dentity-np}, due to the reversed order of candidates,  there exists an antagonism with at least $|X'|$ candidates and $2|C| + M$ voters, where $M$ is the number of pairs of main voters, exactly when $\varphi$ is satisfiable. 
\end{proof}

Although it requires more effort to model antagonized votes than identical ones, we can still compute maximum antagonism for a given set of voters or candidates in polynomial time and thus obtain $\fpt$ algorithms for \hiddenantagonism{}.



\begin{proposition} \label{thm:hidden-antagonism-voters-verification-p}
    Checking if for a given set of $n'$ voters there exists an antagonism subelection 
    with at least $m'$ candidates is P-time~solvable.
\end{proposition}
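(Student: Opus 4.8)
The plan is to reduce the problem, exactly as in the proof of Proposition~\ref{thm:hidden-identity-voters-verification-p}, to a longest-path computation in a polynomially large family of DAGs. First I would unpack the definition of antagonism: if $(C',V')$ is an antagonism with balanced bipartition $V' = V_1 \sqcup V_2$, then applying the defining condition to two voters of $V_1$ together with one voter of $V_2$ shows that all of $V_1$ rank $C'$ in the same way, all of $V_2$ rank $C'$ in the same way, and the $V_2$-order is the reverse of the $V_1$-order; conversely, any configuration where $V_1 \sqcup V_2 = V'$ is balanced, $V_1$ unanimously ranks $C'$ as $c_1 \succ \cdots \succ c_k$, and $V_2$ unanimously ranks it in reverse, is an antagonism. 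So the task becomes: decide whether there is a balanced bipartition $V' = V_1 \sqcup V_2$ and a set $C'$ with $|C'| \ge m'$ on which $V_1$ is unanimous, $V_2$ is unanimous, and the two unanimous orders are mirror images.

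The crucial observation, which I expect to carry the weight of the argument, is that the bipartition is essentially forced. For an ordered pair of candidates $(c,c')$ let $A_{c,c'} = \{v \in V' : c \succ_v c'\}$; since the votes are strict total orders, $\{A_{c,c'}, A_{c',c}\}$ partitions $V'$. If $(C',V')$ is an antagonism with $|C'| \ge 2$ and bipartition $\{V_1,V_2\}$, then for any $c,c' \in C'$ we must have $\{V_1,V_2\} = \{A_{c,c'}, A_{c',c}\}$, because each part of $V'$ orders $c$ versus $c'$ in one uniform way. Hence any bipartition witnessing an antagonism on at least two candidates is one of the at most $\binom{|C|}{2}$ explicitly computable bipartitions $\{A_{c,c'}, A_{c',c}\}$.

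Given this, the algorithm would enumerate all unordered pairs $\{c,c'\} \subseteq C$; for each it sets $V_1 = A_{c,c'}$, $V_2 = A_{c',c}$, discards the pair unless $|V_1| = |V_2|$, and otherwise builds the digraph $D$ on vertex set $C$ with an arc $d \to d'$ whenever every voter of $V_1$ ranks $d$ above $d'$ and every voter of $V_2$ ranks $d'$ above $d$. Since $V_1 \ne \emptyset$, the arc set of $D$ is contained in the (transitive, hence acyclic) order of any fixed voter in $V_1$, so $D$ is a DAG; one then invokes the fact already used for identity that a longest path in a DAG is computable in polynomial time~\cite{cor-lei-riv-ste:b:algorithms-fourth-edition}, and accepts iff some surviving pair yields a path on at least $m'$ vertices. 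Correctness in one direction: transitivity inside each voter's order turns a path $d_1 \to \cdots \to d_k$ in $D$ into a genuine antagonism on $\{d_1,\dots,d_k\}$ with bipartition $\{V_1,V_2\}$. In the other direction, a hypothetical antagonism $(C^*,V')$ with $|C^*| \ge m' \ge 2$ is detected when the enumeration reaches any pair of candidates inside $C^*$: by the key observation the corresponding $\{V_1,V_2\}$ is exactly the bipartition of $C^*$, so $D$ restricted to $C^*$ is the full transitive tournament of $V_1$'s common order on $C^*$ and contains a path on all of $C^*$. The corner cases $n'$ odd (reject immediately) and $m' \le 1$ or $n' = 0$ (trivially handled by any balanced bipartition) are dealt with separately. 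The running time is $O(|C|^2)$ pairs, each costing $O(|C|^2 \cdot n')$ to build $D$ and $O(|C|^2)$ to find the longest path, hence polynomial overall. The only real subtlety is spotting that a candidate pair in the solution pins down the voter bipartition, collapsing an exponential search over bipartitions to a quadratic one.
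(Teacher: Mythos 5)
Your proposal is correct and takes essentially the same approach as the paper's proof: both enumerate pairs of candidates to pin down the balanced voter bipartition (the paper uses the pair intended as the first and last candidates of the antagonism order, checks $|V'_{c_b \succ c_e}| = |V'_{c_e \succ c_b}|$, reverses one half's votes and invokes the identity/longest-path-in-DAG routine, while you take an arbitrary pair inside the solution and build the bi-unanimity digraph directly). The differences are cosmetic; both arguments rest on the same key observation that any pair of candidates in the antagonism forces the bipartition, reducing the search to polynomially many candidate pairs followed by a longest-path computation in a DAG.
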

\begin{proof}
    Suppose we are given a $\hiddenantagonism$ instance $(E, m', n')$, as well as a set of $n'$ voters $V' = \{v_{i_1}, v_{i_2}, \dots, v_{i_{n'}}\}$. We ask if there exists a set of $m'$ candidates such that all voters from $V'$ are antagonized over them. That is, half of them rank these candidates in the same order, whereas the other half in the opposite one.

    For $m' = 1$ the answer is naturally yes, so we assume that $m' \geq 2$. 
    Suppose now that the answer is yes, the solution is order $p'$ of candidates $C' \subseteq C$, candidate $c_b$ is at the beginning of $p'$, and candidate $c_e$ is at the end of $p'$.
    Let $V'_{c_b \succ c_e}$ and $V'_{c_e \succ c_b}$ be the voters from $V'$ preferring $c_b$ over $c_e$ and $c_e$ over $c_b$, respectively.
    Then all voters from $V'_{c_b \succ c_e}$ must order candidates $C'$ in the order $p'$ and all voters from $V'_{c_e \succ c_b}$ must order them exactly in the reversed order $r'$.
    However, all voters from $V'_{c_e \succ c_b}$ order candidates $C'$ exactly in the reversed order $r'$ if and only if they order them in the order $p'$ after reversing their whole votes.

    With this observation, we propose a polynomial-time algorithm as follows.
    We iterate over each pair of distinct candidates $c_b, c_e$ from $C$ (with the intention that they will be, respectively, the first and the last candidate in the antagonism subelection).
    If $|V'_{c_b \succ c_e}| \neq |V'_{c_e \succ c_b}|$, then the sets of antagonized voters do not have the same cardinality, so we continue with the next pair.
   Now we know that $|V'_{c_b \succ c_e}| = |V'_{c_e \succ c_b}|$. We remove all candidates from the votes that appear in at least one vote not between $c_b$ and $c_e$ (i.e., we keep candidate $c$ if and only if for each vote either $c_b \succ c \succ c_e$ or $c_e \succ c \succ c_b$).
    We are now left with the truncated votes consisting of $c_b$ and $c_e$ ranked at the extreme positions and all remaining candidates ranked between them in all votes.
    If we end up with less than $m'$ candidates, then we continue with the next pair.
    We reverse votes in which $c_e \succ c_b$, use the algorithm \ref{thm:hidden-identity-voters-verification-p} to see if there is an identity subelection with $m'$ candidates, and accept if there is one (note that due to $|V'_{c_b \succ c_e}| = |V'_{c_e \succ c_b}|$ it is equivalent to having half of voters approving one order and the other half preferring the opposite one).
    We reject if we do not find any solution for any pair of candidates $c_b, c_e \in C$.
    The algorithm runs in $O(n' \cdot |C| + |C|^2 \cdot (n' \cdot |C|^2))$ time and uses $O(n' \cdot |C| + |C|^2)$ space.
\end{proof}


\begin{proposition} \label{thm:hidden-antagonism-candidates-verification-p}
    Checking if for a given set 
    of $m'$ candidates there exists an antagonism subelection with at least $n'$ voters is P-time~solvable.
\end{proposition}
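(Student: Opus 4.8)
The plan is to reuse the idea behind the proof of \Cref{thm:hidden-antagonism-voters-verification-p}, but now exploit that the candidate set is fixed. The crucial observation is that an antagonism subelection on a fixed candidate set~$C'$ is completely determined by a single linear order~$\pi$ over~$C'$: every selected voter must restrict, on~$C'$, either to~$\pi$ or to its reverse~$\bar\pi$, and the two groups of voters must have equal size. So it suffices, for each order that actually occurs as a restriction to~$C'$, to count how many voters realise it and how many realise its reverse, and then pair them up greedily.

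Concretely, first I would, for each voter~$v\in V$, compute the restriction~$\sigma_v$ of~$\succ_v$ to~$C'$ (a linear order over~$C'$, obtainable in $O(|C|)$ time), and build a dictionary mapping each occurring order~$\sigma$ to its multiplicity $c(\sigma)=|\{v\in V:\sigma_v=\sigma\}|$; at most~$|V|$ distinct orders occur, so this takes $O(|V|\cdot|C|)$ time and polynomial space. Assuming $m'\ge 2$ (the case $m'\le 1$ is degenerate, since the antagonism condition on~$C'$ is then vacuous, so a suitable~$V'$ exists iff the largest even integer at most~$|V|$ is at least~$n'$, which we test directly), we have $\sigma\neq\bar\sigma$ for every~$\sigma$. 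Then, for each occurring order~$\sigma$, look up~$c(\bar\sigma)$ and form the value $2\cdot\min\bigl(c(\sigma),c(\bar\sigma)\bigr)$, which is the maximum number of voters admitting an antagonism subelection on~$C'$ whose ``positive'' order is~$\sigma$ (take $\min\bigl(c(\sigma),c(\bar\sigma)\bigr)$ voters from each side). We accept iff this quantity reaches~$n'$ for some~$\sigma$, and reject otherwise.

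For correctness I would argue both directions. If $(C',V')$ is an antagonism subelection with witnessing partition $V'=V_1\cup V_2$ and positive order~$\pi$, then, whenever $V'$ is nonempty, every voter of~$V_1$ restricts to~$\pi$ on~$C'$ and every voter of~$V_2$ restricts to~$\bar\pi$ (two voters of~$V_1$ both reverse any fixed voter of~$V_2$, hence agree on~$C'$), so $|V'|=|V_1|+|V_2|\le 2\min\bigl(c(\pi),c(\bar\pi)\bigr)$ and the test succeeds. Conversely, from any~$\sigma$ with $2\min\bigl(c(\sigma),c(\bar\sigma)\bigr)\ge n'$ we select $\min\bigl(c(\sigma),c(\bar\sigma)\bigr)$ voters restricting to~$\sigma$ and equally many restricting to~$\bar\sigma$; these form an antagonism subelection on~$C'$ of size $\bigl(m',\,2\min(c(\sigma),c(\bar\sigma))\bigr)$, which is at least~$n'$. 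The running time is dominated by building the dictionary and one pass over it, hence $O(|V|\cdot|C|)$ up to lookup overhead, clearly polynomial.

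I do not expect a genuine obstacle here. The only points needing care are the parity/equal-partition bookkeeping (note that $2\min(\cdot,\cdot)$ is automatically even, so ``at least~$n'$'' is handled correctly even when~$n'$ is odd, and monotonicity of antagonism size makes it enough to check the maximum) and the degenerate small-$m'$ cases. In contrast to \Cref{thm:hidden-antagonism-voters-verification-p}, no search over endpoint candidates is required, because the restriction of each vote to the fixed set~$C'$ already pins down its order on~$C'$ exactly.
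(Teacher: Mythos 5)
Your proposal is correct and follows essentially the same approach as the paper: restrict each vote to the fixed set $C'$, bucket voters by the induced order via a dictionary, and accept when some order and its reverse together supply at least $n'$ voters (the paper phrases the test as both buckets having at least $\nicefrac{n'}{2}$ voters, which is equivalent to your $2\min(c(\sigma),c(\bar\sigma))\ge n'$ criterion), with the same $O(|V|\cdot|C|)$ running time. Your added care about the degenerate $m'\le 1$ case and the explicit two-direction correctness argument are fine refinements but do not change the method.
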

\begin{proof}
    Suppose we are given a $\hiddenantagonism$ instance $(E, m', n')$ and a set of $m'$ candidates $C' = \{c_{i_1}, c_{i_2}, \dots, c_{i_{m'}}\}$. We ask if there exists a set of $n'$ voters that are antagonized over them, that is, half of them rank candidates $C'$ in one way and the other half in the opposite order.
    We create a hash map $D$ mapping orders (permutations) of candidates from $C'$ to lists of voters that rank them in this order. For the sake of brevity, let $D[p]$ be the value in $D$ associated with the key $p$, i.e., the list of voters that rank candidates $C'$ in the order $p$.
    Then, we iterate over each order $p$, in the votes of $D$ and accept if for any order $p$, both $D[p]$ and $D[r]$ contain at least $\nicefrac{n'}{2}$ voters where $r$ is the reversed order of $p$. 
    Analogously to the algorithm in Proposition \ref{thm:hidden-identity-candidates-verification-p}, the algorithm runs in polynomial time because we consider only permutations that appear in the given votes. 
    Specifically, its time and space complexity is $O(|V|\cdot|C|)$.
\end{proof}

Then, fixed-parameter tractability of $\hiddenantagonism$ for the number of voters and the number of candidates follows.

\begin{corollary} \label{thm:hidden-antagonism-ftp-n-and-fpt-m}
    $\hiddenantagonism$ is~in~$\fpt$ parametrized by the number of voters~($|V|$) or 
    candidates~($|C|$) as~well~as in~$\xp$ 
    for the parameterization
    by the number of voters  ($n'$) or candidates ($m'$) in the solution.
\end{corollary}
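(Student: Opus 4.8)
The plan is to mimic the arguments behind Corollaries~\ref{thm:hidden-identity-ftp-n} and~\ref{thm:hidden-identity-fpt-m}, using the polynomial-time verification routines of Propositions~\ref{thm:hidden-antagonism-voters-verification-p} and~\ref{thm:hidden-antagonism-candidates-verification-p} as black boxes. Given a $\hiddenantagonism$ instance $(E,m',n')$, the idea is simply to brute-force over the appropriate family of ``small'' objects and to invoke the relevant verification algorithm on each of them.

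For the parameterization by $|V|$ (and hence by $n'$), I would enumerate every size-$n'$ subset $V' \subseteq V$ --- there are $\binom{|V|}{n'}$ of them --- and, for each, run the algorithm of Proposition~\ref{thm:hidden-antagonism-voters-verification-p} to test whether $V'$ admits an antagonism subelection with at least $m'$ candidates; the overall algorithm accepts iff some $V'$ passes. Since $n' \le |V|$, the number of subsets is at most $2^{|V|}$, which together with the polynomial cost of each verification call yields an $f(|V|)\cdot\mathrm{poly}$ running time, hence membership in $\fpt$ for the parameter $|V|$; for fixed $n'$ the count is at most $|V|^{n'}$, a polynomial of degree $n'$, hence membership in $\xp$ for the parameter $n'$. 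Symmetrically, for the parameterization by $|C|$ (and by $m'$), I would enumerate every size-$m'$ subset $C' \subseteq C$ --- at most $2^{|C|}$ many in general and at most $|C|^{m'}$ for fixed $m'$ --- and apply Proposition~\ref{thm:hidden-antagonism-candidates-verification-p} to each, obtaining $\fpt$ for $|C|$ and $\xp$ for $m'$.

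Correctness is immediate from monotonicity of antagonism subelections: an antagonism subelection $(C'',V'')$ with $|C''|\ge m'$ and $|V''|\ge n'$ exists iff some size-$n'$ voter set admits an antagonism subelection with at least $m'$ candidates (restrict $V''$ to any $n'$ of its voters, noting that $n'$ must be even precisely when $|V''|$ allows such a restriction), which is exactly what the voter-side enumeration checks; the candidate-side argument is analogous. The only mild subtlety is the parity constraint in the definition of antagonism --- one should restrict attention to even $n'$, or equivalently have the verification routine reject odd inputs outright --- but this is already absorbed into Propositions~\ref{thm:hidden-antagonism-voters-verification-p} and~\ref{thm:hidden-antagonism-candidates-verification-p}. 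I do not expect any genuine obstacle here: all the real work has been pushed into the two polynomial-time verification lemmas, and this corollary is just the enumeration wrapper around them. $\qed$
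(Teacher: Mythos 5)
Your proposal matches the paper's proof essentially verbatim: the paper likewise just iterates over all size-$n'$ voter subsets (resp. size-$m'$ candidate subsets) and invokes the verification algorithms of Propositions~\ref{thm:hidden-antagonism-voters-verification-p} and~\ref{thm:hidden-antagonism-candidates-verification-p}, giving $\fpt$ in $|V|$ or $|C|$ and $\xp$ in $n'$ or $m'$. One small caveat in your correctness remark: restricting a larger antagonism to \emph{any} $n'$ of its voters is not quite right --- voters must be dropped in matched pairs, one from each half, to preserve the half-and-half structure (and when $n'$ is odd one should enumerate subsets of size $n'+1$ rather than reject) --- but this balanced-restriction monotonicity is exactly what makes the enumeration sound, and the paper's own proof glosses over the same point.
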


Analogously to \hiddenidentity{}, these FPT algorithms suffice if either the number of voters or the number of candidates is small, but they are too small if both of these values are not small. To handle it, we use a simple ILP for that problem. 
(The details are in the full version of the paper).


\begin{proposition}\label{prop:ilphiddenant}
  There is an ILP for \hiddenantagonism{} which selects a
solution for a “yes”-instance and the closest subelection to antagonism (in
terms of swap distance) for a “no”-instance.
\end{proposition}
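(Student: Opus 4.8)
The plan is to mimic, and minimally extend, the ILP of Proposition~\ref{ILP:identity}. We keep the binary variables $V_i$ (for $i\in[n]$, meaning ``$v_i$ is selected'') and $C_j$ (for $j\in[m]$, ``$c_j$ is selected''), together with the variables $S_{j_1,j_2}$ that encode a reference order $\sigma$ on the selected candidates, subject to \eqref{ilp:i2}, \eqref{ilp:i3} (for $j_1\neq j_2$), and the transitivity constraints $S_{j_1,j_2}+S_{j_2,j_3}-S_{j_1,j_3}\le 1$ for all distinct $j_1,j_2,j_3$, so that $S$ really describes a linear order on the chosen candidates. The only genuinely new ingredient is a balanced partition of the selected voters into the two antagonized halves: for each $i\in[n]$ we add a binary variable $T_i$ whose intended meaning is ``$v_i$ is selected and placed in the half that follows $\sigma$'', the complementary selected voters following the reverse order $\bar\sigma$.

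The new constraints are: $T_i\le V_i$ for every $i$ (only selected voters get split), and $\sum_{i\in[n]}T_i=n'/2$, which, combined with \eqref{ilp:i1}, forces the other half to have size $n'/2$ as well. Here we assume $n'$ even: an antagonism needs two equal halves, for odd $n'$ there is simply no antagonism subelection of that size, and monotonicity lets us search for exactly $n'$ voters. In place of the single family $P_{i,j_1,j_2}=V_i\cdot S_{j_1,j_2}$ we now use two families, $P^{(1)}_{i,j_1,j_2}=T_i\cdot S_{j_1,j_2}$ and $P^{(2)}_{i,j_1,j_2}=(V_i-T_i)\cdot S_{j_1,j_2}$; thus $P^{(1)}_{i,j_1,j_2}=1$ iff $v_i$ lies in the $\sigma$-half and $\sigma$ ranks $c_{j_1}$ before $c_{j_2}$, and symmetrically for $P^{(2)}$ with the $\bar\sigma$-half. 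All these are products of binary variables and are linearized exactly as in Proposition~\ref{ILP:identity}.

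The objective measures the swap distance to antagonism. With $W_{i,j_1,j_2}=[pos_{v_i}(c_{j_1})>pos_{v_i}(c_{j_2})]$ as before, a voter assigned to the $\sigma$-half is charged $W_{i,j_1,j_2}$ for each reference pair it orders the ``wrong'' way, while a voter assigned to the $\bar\sigma$-half is charged $1-W_{i,j_1,j_2}$ (this is where strictness of the rankings is used: with no ties, disagreement with $\bar\sigma$ on a pair is exactly the complement of disagreement with $\sigma$). We therefore minimize
\[
  \sum_{i\in[n],\ j_1,j_2\in[m]}\bigl(P^{(1)}_{i,j_1,j_2}\,W_{i,j_1,j_2}+P^{(2)}_{i,j_1,j_2}\,(1-W_{i,j_1,j_2})\bigr).
\]
Since the minimum number of adjacent-candidate swaps between two strict orders over the same set equals the number of pairs on which they disagree, for any feasible assignment this sum is exactly the total number of swaps needed to turn the selected subelection into the antagonism determined by $\sigma$ and the chosen balanced partition. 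Hence the optimum is $0$ precisely on ``yes''-instances, and on ``no''-instances it equals the swap distance to the closest antagonism subelection of size $(m',n')$, with an optimal solution exhibiting such a subelection. A completely analogous modification --- drop \eqref{ilp:i1}, require the sum above to equal $0$, and maximize $\sum_{i\in[n]}V_i$ --- solves the maximization variant $\maxan$, just as the corresponding tweak solves $\maxid$.

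The step I expect to be the crux is getting this objective, together with its coupling to the partition, exactly right. Because different selected voters are penalized by $W$ or by $1-W$ according to which half they land in, a single $P$-family no longer works; splitting it into $P^{(1)}$ and $P^{(2)}$, driven by $T_i$ and $V_i-T_i$, is what keeps each voter's cost equal to its number of discordant pairs with the order it is actually assigned to, while $T_i\le V_i$ and the cardinality constraint guarantee that the two halves are equal and drawn only from selected voters. Once this is in place, correctness follows by the same reasoning as for the identity ILP; the parity of $n'$ is the only minor side issue, and it is dispatched by monotonicity.
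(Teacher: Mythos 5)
Your ILP is correct and takes essentially the same route as the paper's proof: there, the identity ILP is likewise extended by splitting the selected voters into two equal halves (via a second indicator family $U_i$ with $\sum_i V_i=\sum_i U_i=\nicefrac{n'}{2}$) and charging the second half against the reversed reference order through $R_{i,j_1,j_2}=U_i\cdot S_{j_2,j_1}$ with the same weights $W$, which is equivalent to your $T_i\le V_i$ split and the $(1-W_{i,j_1,j_2})$ penalty. Your additional transitivity constraints on $S$ (absent from the paper's formulation) are a harmless --- and for the exact ``closest in swap distance'' claim arguably necessary --- strengthening, since otherwise $S$ could encode a non-transitive tournament.
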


The strong advantage of this ILP is that it also manages situations in which the desired antagonism does not exist, which, as we will see in experiments, is not a rare case.

Additionally, we study a maximization version of $\hiddenantagonism$ that will be crucial regarding our experiments.
Given $m' \in [m]$, by $\maxan(E,m')$ we denote the problem of finding $n'\in[n]$  such that ($n',m'$) is the antagonism signature, i.e., $size(E) = (m', n')$.

 %
%
%
%
We note that the ILPs provided in this section, i.e., in Propositions \ref{ILP:identity} and \ref{prop:ilphiddenant}, as well as those for \maxid{} and \maxan{}, are crucial for the experiments we provide in the next section. 

\section{Experiments}\label{sec:EXP}
In this section, we focus on the practical application of our approach. First, we study our problems with synthetic data. Later on, we analyze several real-life instances.

\subsection{Map of Elections}

To depict our experimental results, we use the framework introduced by \citet{szu-fal-sko-sli-tal:c:map} and extended by \citet{boe-bre-fal-nie-szu:c:compass}, known as \emph{map of elections}. 
The map serves us to better understand the space of elections and is particularly useful when conducting experiments. Each point on the map depicts a single election. The embeddings were calculated based on the mutual distances between elections (computed with some distance function). The closer the two points on the map are, the more similar the elections these points depict. For instance, elections coming from the same statistical cultures or similar models are often clones on the map, while elections coming from very different models are usually more distant on the map.
In this case, we use the map from \citet{boe-fal-nie-szu-was:c:metrics} that consists of $344$ elections\footnote{Detailed description is provided in the full version of the paper.} with $10$ candidates and $50$ voters. The map is based on the isomorphic swap distance~\cite{fal-sko-sli-szu-tal:c:isomorphism}, and is embedded using Fruchterman-Reingold  algorithm~\cite{fruchterman1991graph}. 

Having prepared the map and its points, one can put on them some properties hidden in the colors or shapes of points. E.g., \citet{szu-fal-jan-lac-sli-sor-tal:c:map-approval} color the points on the map of approval elections with several statistics (e.g., cohesiveness level, PAV run time, and maximum approval score) to see how well statistical cultures and elections situated in different regions satisfy these properties or how well they perform. Here we will conduct a similar analysis of values distribution as well as its location on the map. We also investigate why this particular coloring occurred and what it means.

For each election from a given dataset, we computed the following characteristics: 
(1)~$\maxid(E,5)$,
(2)~$\maxan(E,5)$,
(3)~$\maxclone(E,2)$. 
The results are presented in~\Cref{exp:maps}. (Complementary results, containing separate average values for each statistical culture, are presented in the full version of the paper).

In the case of $\maxclone(E,2)$ (as depicted in the leftmost map) the darker the point, the more voters agree that there exists a pair of clone candidates. The most modest values are consistently seen in elections originating from the impartial culture (i.e., each vote is sampled uniformly at random), averaging at a value of $17$, with a remarkably low standard deviation of only $1.22$. 
Note that the value for {$\maxclone(E,2)$} seems not to be strongly correlated with the position on the map. It is because the map is based on swap distance. By making relatively few swaps (i.e., increasing a distance just a bit), we can significantly decrease the number of voters agreeing that two candidates are clones. In other words, given an election $E$ we can create a~new election $E'$ that is very close to $E$ (distance-wise) but has a much lower {\maxclone} value. In principle, for the {\maxclone} problem, usually by one swap we can lower the results by one (unless that swap is creating a~new solution involving a~different set of candidates). 

For the $\maxid(E,5)$ (the middle map) we observe a~strong correlation between the number of voters agreeing on given five candidates being ranked in a particular order and the swap distance from ID (the Pearson correlation coefficient (PCC) is 
$-0.791$). Similarly, the results for $\maxan(E,5)$ (the rightmost map) are strongly correlated with the distance from AN (PCC = $-0.845$). 
Both of these correlations are reasonable, as the larger the hidden identity (resp. antagonism), the fewer swaps we need to convert the election into $\ID$ (resp. $\AN$). This means that for $\maxid$ and $\maxid$ there is a strong correlation between the position on the map and the size of the signature. 
Unlike for clones, for identity and antagonism, it is harder to ``spoil'' the inner substructure, especially when $m'$  is relatively small with regard to $m$. For instance, for a vote $a \succ b \succ c \succ d \succ e \succ f$, no matter which three candidates are forming the solution, we can always ``spoil'' this vote with just one swap. And for example for identity if $a$, $c$, and $f$ form a solution, we need at least two swaps to ``spoil'' this vote.



\begin{remark}
The way we define {\maxan} might seem too rigid, as we require exactly the same number of ``base'' and ``reverse'' votes. However, we have also verified two other approaches. One, where we maximized the sum of \#base and \#reverse votes; it turns out, that usually the outcome is more similar to the one provided by {\maxid} than by {\maxan}. The second approach is to use the product of \#base and \#reverse votes. There, 
the result was usually very similar to the one provided by the ``rigid" approach (PCC = 0.929), yet the running-time was significantly longer. Therefore, for the sake of simplicity, we focus on the ``rigid" approach. 
\end{remark}



\vspace{0.2cm}
\subsection{Real-Life Instances}
We conducted experiments on real-life instances. In particular, we focus on two datasets, i.e., the \emph{Sushi} dataset, where $5000$ people expressed their preferences over $10$ sushi types~\citep{kam:c:sushi}; and the
\emph{Grenoble} dataset, containing data from a field experiment held in Grenoble in 2017, where people expressed their preferences over French presidential candidates~\citep{grenoble-online}. There, we use the same data preprocessing method as~\citet{fal-kac-sor-szu-was:c:microscope}.

We run {\maxclone}, {\maxid}, and {\maxan} for all possible numbers of hidden candidates.
The results are presented in~\Cref{exp:joint_plots}. 
For all experiments, we include the respective values from impartial culture (IC) instances, which we treat as a lower bound.\footnote{Technically, it is possible to create an instance with even smaller clones/identity/antagonism than impartial culture, but the difference will not be of great importance.}(Each value is an average derived from $10$ different IC elections).

We first focus on hidden clones. Every individual candidate is seen as a clone by all the voters.
However, it is worth noting that if the number of hidden candidates is equal to the number of all candidates, then again all the voters would agree that all the candidates are clones, hence the characteristic ``U'' shape in~\Cref{exp:joint_plots}.

For Sushi, $38.5\%$ of voters agree that Tamago (egg) and Kappa-maki (cucumber roll) are clones. This is especially interesting, as these are the only two vegetarian options in that dataset.
For Grenoble,  the strongest set of two clones is Nathalie Arthaud and Philippe Poutou (clones for $46.7\%$ of voters), which can be explained by the fact that they have similar far-left ideologies; and the strongest clone set of size 3 is composed of Jean-Luc M\'elenchon (left), Benoît Hamon (centre-left) and Emmanuel Macron (centre-right), who are clones for $26.6\%$ of the voters. This may initially look surprising 
but can be explained knowing that (a)~all three candidates are major, well-known candidates; (b)~on the left-right axis, they are arguably contiguous; (c)~there was (at least in the voters of the dataset) a clear dividing line between the candidates from left to center-right on the one hand, and right and far-right candidates on the other.

We now shift to hidden identity. In the case of Sushi, an overwhelming $88.3\%$ of the voters agree that Toro (fatty tuna) is better than Kappa-maki (cucumber roll). What is more intriguing is that $43.4\%$ of the voters agree on the following ranking:  Toro (fatty tuna) is preferred over Maguro (tuna), which is, in turn, preferred over Tekka-maki (tuna roll), and finally, Kappa-maki (cucumber roll). This order of preference is especially interesting as it mirrors the price hierarchy of these sushi types. When it comes to the political elections dataset, $89.2\%$ of the voters prefer Benoît Hamon (center-left) to Marine Le Pen (far-right).

As to the hidden antagonism, we only briefly discuss the results for the sushi dataset. The pair Ika (squid) and Tekka-maki (tuna roll) antagonized the whole society ($99.9\%$ of voters). Moreover, $41.2\%$ of the voters agree or strongly disagree (casting reverse order) with the following ranking: Uni (sea urchin) $\succ$ Kappa-maki (cucumber roll) $\succ$  Tamago (egg). It is intriguing that both the Sushi and Grenoble datasets show minimal signs of antagonism. The results are almost the same as for IC elections.

\begin{figure}[h]
\centering
  \begin{subfigure}[b]{0.4\textwidth}
      \centering
      \includegraphics[width=\textwidth]{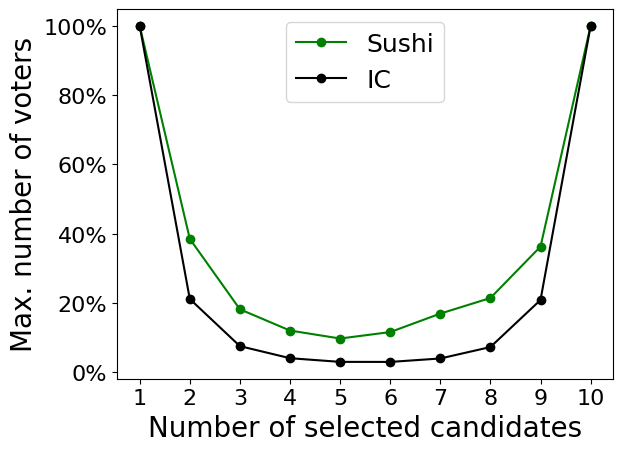}
      \caption{ \maxclone }
  \end{subfigure}\hspace{4em}%
    \begin{subfigure}[b]{0.4\textwidth}
      \centering
      \includegraphics[width=\textwidth]{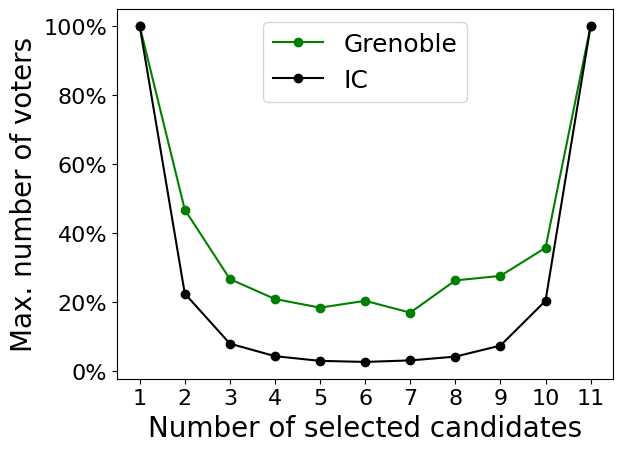}
      \caption{\maxclone }
  \end{subfigure}

  \vspace{0.3cm}
  
    \begin{subfigure}[b]{0.4\textwidth}
      \centering
      \includegraphics[width=\textwidth]{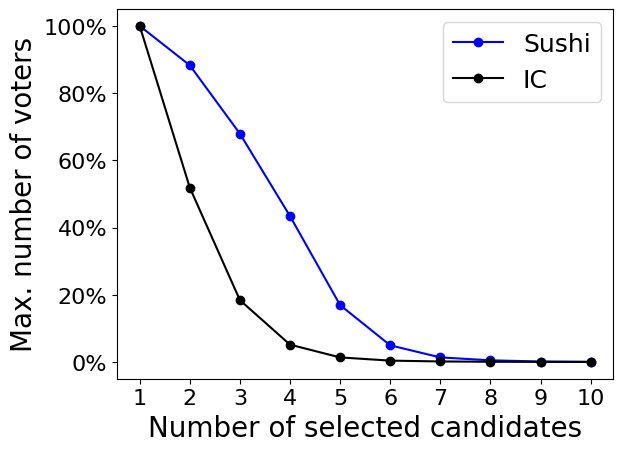}
      \caption{ \maxid }
  \end{subfigure}\hspace{4em}%
    \begin{subfigure}[b]{0.4\textwidth}
      \centering
      \includegraphics[width=\textwidth]{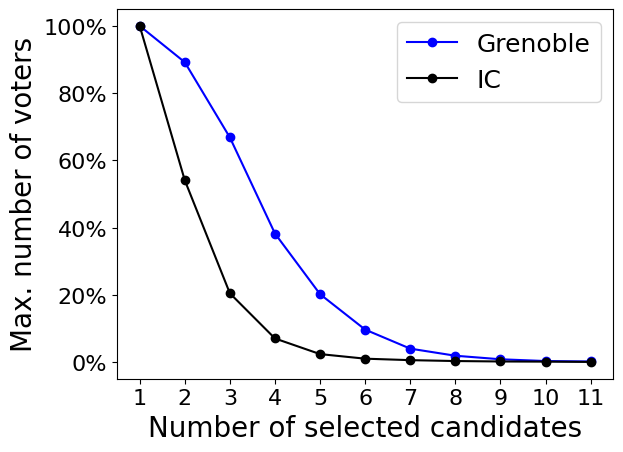}
      \caption{ \maxid }
  \end{subfigure}

  \vspace{0.3cm}
  
    \begin{subfigure}[b]{0.4\textwidth}
      \centering
      \includegraphics[width=\textwidth]{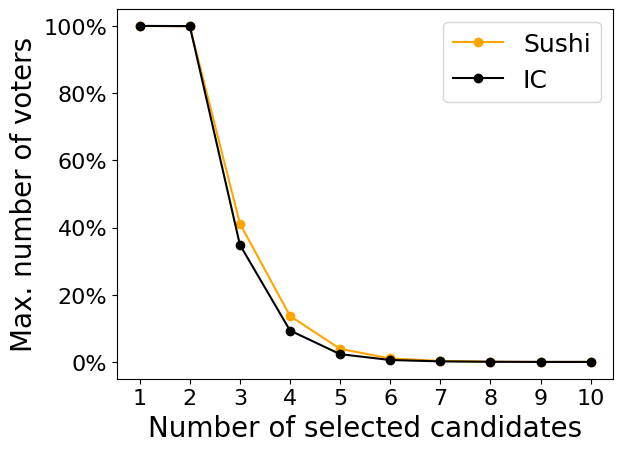}
      \caption{\maxan }
  \end{subfigure}\hspace{4em}%
    \begin{subfigure}[b]{0.4\textwidth}
      \centering
      \includegraphics[width=\textwidth]{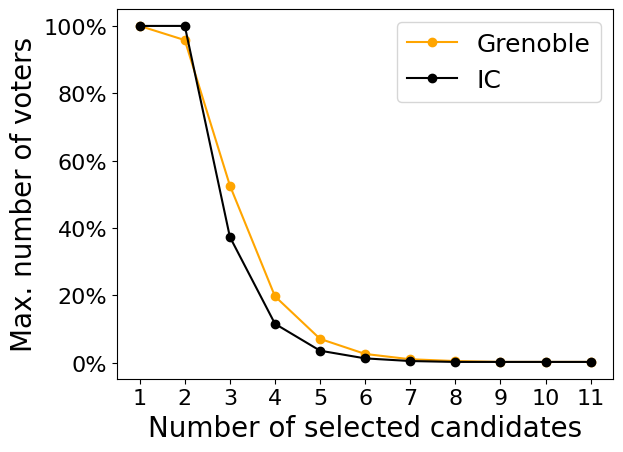}
      \caption{ \maxan }
  \end{subfigure}%

   \caption{\label{exp:joint_plots} Comparison of Sushi and Grenoble datasets. The black lines denote the results for impartial culture elections. }
   
\end{figure}


  

   

\section{Summary and Future Work}\label{sec:conclusion}
We explored the concept of hidden substructures in ordinal elections. We focused on three types of consistency: Identity, antagonism, and clones. 
We executed a comprehensive analysis of the complexity of the introduced problems and provided algorithms that can be used in practice. We showed as a possible direction the search for the closest subelection to the desired one if the exact one does not exist. Furthermore, we provided experimental evaluations on synthetic and real-life datasets. The experiments on real-life datasets confirmed that identifying consistent subelections indeed helps in learning interesting information hidden in an election.
Analyzing substructures of elections can help better understand different segments of the population and their preferences, which can be beneficial in a variety of contexts, such as consumer behavior or political opinion analysis.

 We see this paper as a starting point for a more thorough study. Indeed, there are numerous possibilities for further research related to the problems we considered.
To start with, we could soften our criteria and look for subelections that are near-identity or near-antagonism, and look for near-clones. 
 Next, the complexity of these issues could be analyzed when applied to structured domains, such as single-peaked or single-crossing domains. 
 A further natural extension could involve exploring approval elections.


 

		
			
   
   
	

\vspace{0.1cm}
\section*{Acknowledgements}
This project has received funding from the European Research Council
(ERC) under the European Union’s Horizon 2020 research and innovation
programme (grant agreement No 101002854), and from 
the French government under management of Agence Nationale de la 
Recherche as part of the ``Investissements d'avenir" program, reference ANR-19-P3IA-0001 (PRAIRIE 3IA Institute). 
We thank Florian Yger for fruitful discussions.
\begin{center}
  \includegraphics[width=3cm]{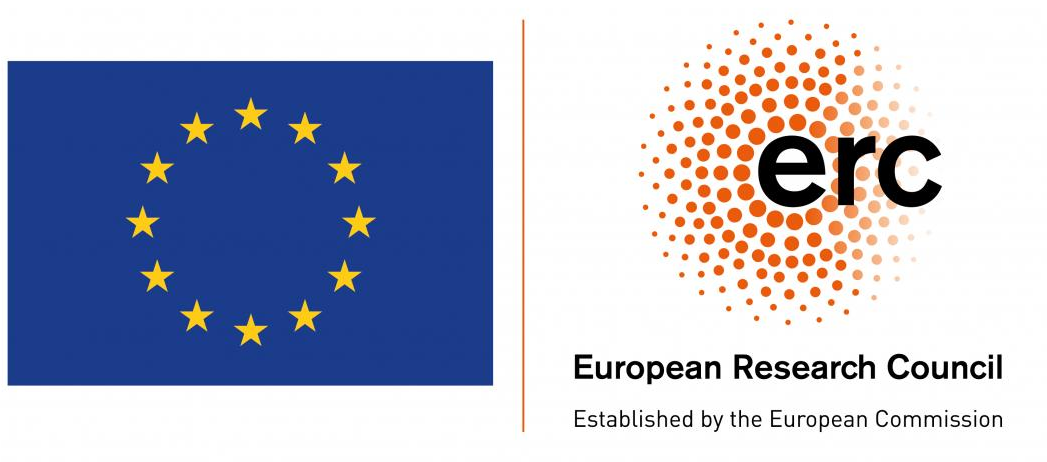}
\end{center}

\bibliographystyle{ACM-Reference-Format} 

\bibliography{discovering_consistent_subelections}


\newpage

\appendix

\section{Missing Computational Results for \textsc{Hidden-Clones}} \label{appendix:hidden-clones}

\paragraph{Minimizing Swap Distance for Hidden Clones}

\begin{proposition}
    For any clone set candidate $cs$, we can calculate in polynomial time the minimum number of swaps we need to do in some votes so that $cs$ becomes a valid clone set.
\end{proposition}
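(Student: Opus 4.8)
The plan is to fix a candidate clone set $cs = \{c_{i_1}, \dots, c_{i_{m'}}\}$ (of size $m'$) and observe that the computation decomposes per vote: a swap on adjacent candidates in vote $v$ never affects the relative arrangement in any other vote, so the minimum total number of swaps making $cs$ a valid clone set equals the sum over all voters $v \in V$ of the minimum number of adjacent swaps in $v$ needed to bring the candidates of $cs$ into a contiguous block. Hence it suffices to solve the single-vote problem in polynomial time and then sum.

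For a single vote $v$, write the ranking as a sequence of positions $1, \dots, m$ and let the members of $cs$ occupy positions $p_1 < p_2 < \dots < p_{m'}$, while the non-members occupy the remaining positions. We want to choose a target block of $m'$ consecutive positions and a target arrangement of the members within it so as to minimize the number of adjacent transpositions needed. First I would note that the relative order among the members of $cs$ should be kept unchanged (reordering them only adds cost, since making them contiguous never forces a swap between two members), and likewise the relative order among the non-members should be kept. So the only decision is: which non-members end up to the left of the block and which end up to the right. Equivalently, for each non-member $z$ occupying some position strictly between $p_1$ and $p_{m'}$ (an ``interleaving'' non-member), we must decide whether to push $z$ left past all earlier block-members or push it right past all later block-members; non-members outside the interval $[p_1, p_{m'}]$ stay put. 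The cost of pushing an interleaving non-member $z$ leftward is the number of $cs$-members appearing before $z$ in $v$, and the cost of pushing it rightward is the number of $cs$-members appearing after $z$ in $v$; these are the only swaps that occur, and different interleaving non-members contribute independently once we also account for the fact that two non-members on the same side never swap with each other. Therefore the minimum cost for vote $v$ is $\sum_{z} \min(\ell_v(z), r_v(z))$, where the sum ranges over non-members $z$ lying between the first and last block-member in $v$, $\ell_v(z)$ is the number of $cs$-members ranked above $z$ in $v$, and $r_v(z) = m' - \ell_v(z)$. This is computable in $O(m)$ time per vote after a single left-to-right scan maintaining a running count of how many $cs$-members have been seen.

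Summing over all $n$ voters gives the overall minimum in $O(nm)$ time, which is polynomial; combined with the earlier observation (in the paragraph preceding the statement) that one can then iterate over all $O(n \cdot (m - m'))$ candidate clone sets $cs$ that appear as a contiguous segment in some vote — or indeed over all $\binom{m}{m'}$ sets if $m'$ is a constant — one obtains the best clone set and its swap distance. The main obstacle, and the only place the argument needs care, is justifying the claim that an optimal solution never swaps two block-members and never swaps two non-members, i.e. that the decomposition into independent ``left-or-right'' choices for each interleaving non-member is without loss of generality; I would argue this by an exchange argument showing that any solution containing a member-member or non-member-non-member swap can be transformed into one of no greater cost without such a swap, so that the cost is exactly the number of member/non-member inversions removed, which is precisely $\sum_z \min(\ell_v(z), r_v(z))$ minimized over block placements — and since $\ell_v(z) + r_v(z) = m'$ is independent of the block, the $\min$ is already the per-non-member optimum and no further optimization over the block position is needed.
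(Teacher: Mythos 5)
Your per-vote analysis is essentially the paper's own argument: swaps in different votes are independent, so the problem decomposes vote by vote; within a vote it is never beneficial to swap two members of $cs$ or two non-members, so each non-member is simply pushed to the left or to the right of the block at a cost equal to the number of $cs$-members it must jump over, and one takes the cheaper side for each candidate. In fact you justify this more carefully than the paper does — the paper only sketches the exchange argument, whereas you make explicit the formula $\sum_{z}\min(\ell_v(z),r_v(z))$ and note why the independent left/right choices are mutually consistent (the $\min$-choices are automatically monotone along the vote, so no non-member/non-member inversions are forced).

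The one genuine divergence is the final aggregation step. The proposition sits in the context of a \textsc{Hidden-Clones} instance $(E,m',n')$: $cs$ only needs to become a clone set for \emph{some} $n'$ of the voters (this is what ``in some votes'' refers to, and it matches the surrounding discussion of clone sets ``of a given size for a given number of voters''). The paper therefore computes the per-vote swap distance for every vote, then selects the $n'$ votes with the smallest distances and returns the sum of those. You instead sum over all $n$ voters, which answers only the special case $n'=|V|$ and overestimates the required number of swaps whenever $n'<n$. The repair is immediate given your per-vote subroutine — sort the per-vote costs and add up the $n'$ smallest — but as written your proof computes a different quantity from the one intended.
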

\begin{proof}
    Let $(E, m', n')$ be a $\hiddenclones$ instance and $sc$ be a given clone set candidate. As swaps in different votes do not affect other votes, it is enough to show how to compute the swap distance for a vote, then select $n'$ votes with the minimum swap distances, and finally return their sum. To calculate swap distance for a given vote, briefly saying, it suffices to see that in order to compute the distance of one vote from the closest vote containing $sc$ as a clone set, it is never beneficial to swap two candidates of $sc$ as well as to swap two candidates outside $sc$, because we do not care about the inner order. Thus we divide the vote into consecutive segments of non-``sc'' candidates alternating with ``sc'' candidates and compute the minimum number of swaps to make ``sc'' candidates appearing together via a simple algorithm (or dynamic programming). More precisely, we need to move each candidate not belonging to $sc$ either on the left of all candidates from $sc$, or on the right of all candidates from $sc$. It is easy to compute how many candidates from $sc$ I need to jump over to be on the left / on the right. We choose the minimum option for each candidate (please note that starting from the most extreme candidates to the most inner ones we will indeed have this number of swaps. 
    
    Finally, we use this algorithm for each vote, determine $n'$ votes with the minimum number of swaps, and return sum of their distances.
\end{proof}

\paragraph{Counting Hidden Clones}

We show that we can also count the number of pairs of size-$m'$ sets of candidates and size-$n'$ sets of voters that these candidates are clones for these voters.

\begin{corollary} \label{thm:sharp-hidden-clones-p}
  There is a P-time algorithm for $\sharphiddenclones$.
\end{corollary}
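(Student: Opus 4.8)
The plan is to adapt the polynomial-time decision algorithm for $\hiddenclones$ (Theorem~\ref{thm:hidden-clones-p}) into a counting algorithm. Recall the key structural observation: a set $C'$ of $m'$ candidates can be a clone set for at least one voter only if $C'$ appears as a block of $m'$ consecutive candidates in some vote. So first I would enumerate the family $\mathcal{S}$ of candidate sets that are \emph{candidate} clone sets, by scanning each of the $|V|$ votes and reading off all $|C|-m'+1$ windows of $m'$ consecutive candidates; as sets, so that two windows that contain the same candidates in different order are identified. This gives at most $|V|\cdot(|C|-m'+1)$ distinct sets, and using a hash map keyed by the (sorted) set we can deduplicate them efficiently.

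Next, for each distinct set $C' \in \mathcal{S}$, I would compute $\mathrm{cnt}(C')$, the number of voters for whom $C'$ is a clone set --- equivalently, the number of votes in which the members of $C'$ appear contiguously. This is exactly the counter already maintained by the algorithm behind Theorem~\ref{thm:hidden-clones-p}; a single pass over all votes, incrementing the hash-map entry for each length-$m'$ window, produces all these counts simultaneously in $O(|V|\cdot(|C|-m')\cdot m')$ total time. The desired answer to $\sharphiddenclones$ --- the number of pairs $(C',V')$ with $|C'|=m'$, $|V'|=n'$, and $C'$ a clone set for $V'$ --- is then
\[
\sum_{C' \in \mathcal{S}} \binom{\mathrm{cnt}(C')}{n'},
\]
since for each viable $C'$ we may pick any $n'$-subset of the $\mathrm{cnt}(C')$ voters that treat it as a clone set, and sets $C'$ not in $\mathcal{S}$ contribute nothing (they are clone sets for zero voters, and $\binom{0}{n'}=0$ when $n'\ge 1$). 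Each binomial coefficient is computable in polynomial time, so the whole procedure runs in polynomial time.

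The one subtlety worth flagging is the boundary case $m'=1$ (and symmetrically $m'=|C|$): when $m'=1$, every singleton is a clone set for every voter, so the "window" enumeration still works but one should note $\mathcal{S}$ is simply all singletons and each has $\mathrm{cnt}=|V|$, giving $|C|\cdot\binom{|V|}{n'}$; similarly $m'=|C|$ gives $\binom{|V|}{n'}$. These agree with the general formula once we are careful that the window scan indeed produces these sets. The main (minor) obstacle is purely bookkeeping: making sure the deduplication treats candidate sets, not ordered tuples, as keys, so that a set appearing as a contiguous block in several votes --- possibly in different internal orders --- is counted once in $\mathcal{S}$ with the correct aggregated voter count, rather than being double-counted; this is handled by sorting each window before hashing. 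No genuine complexity-theoretic difficulty arises, so the corollary follows directly.
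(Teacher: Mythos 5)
Your proposal is correct and takes essentially the same route as the paper's proof: both build a hash map over the length-$m'$ windows of the votes to count, for each candidate set, how many voters rank it contiguously, and then return $\sum_{S}\binom{\mathrm{cnt}(S)}{n'}$, iterating only over sets that actually occur. Your additional remarks on deduplication and the boundary cases $m'=1$ and $m'=|C|$ are harmless extras beyond the paper's argument.
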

\begin{proof}
   Take a $\sharphiddenclones$ instance $(E, m', n')$
   Let $\textit{Cnt}$ be a dictionary transforming sets into the number of occurrences as consecutive segments in $E$. At the beginning, $\textit{Cnt}$ is empty. We iterate through all the voters and all the size-$m'$ consecutive segments in their voters. For each consecutive segment we construct a set of candidates that it contains (we denote it as $S$). If $S$ is already in $Cnt$, then we increment its value, otherwise we add key $S$ with value $1$. It is clear that it can be done in polynomial time.

   Having computed $\textit{Cnt}$, the answer is: $\sum_{S \in \textit{Cnt}} {\textit{Cnt}[S] \choose n'}$ where $\textit{Cnt}[S]$ means the number of occurrences of set $S$ in $\textit{Cnt}$. Binomial coefficients can be computed in polynomial time and stored in a separate lookup table. Please note that we only iterate through sets present is $\textit{Cnt}$, iterating through all possible size-$m'$ sets would result in exponential complexity.

    Using standard data structures such as hashmap and hashset, the time complexity of this algorithm is $O(|V| \cdot (|C|-m'+1) \cdot m' + |V| \cdot m') = O(|V| \cdot (|C|-m') \cdot m')$, space complexity is the same. 
\end{proof}

Due to this result, we can easily sample clones from the election or compute the number of clones for a given set of voters or candidates.

\section{Missing Hardness Proof of Theorem \ref{thm:ihidden-dentity-np}}

We will denote as $rank_v(c)$ the position of a candidate $c$ in $v$'s ranking. 

\begin{proof}
Membership to $\np$ is clear as we can guess both a subset of voters and a subset of candidates and check if these voters rank these candidates in exactly the same order.

We will show that $\hiddenidentity$ is \np-hard by reduction from $\threesat$. Consider a $3$-CNF formula $\varphi$ with the set of variables $X=\{x_0, \dots, x_n \}$ and the set of clauses $C= \{C_0, \dots, C_m\}$. We assume that each clause consists of exactly three literals. We will also denote a clause $C_j$ as $\{L_j^1, L_j^2, L_j^3 \}$, with $L_j^i$ being a literal. Let us now construct an instance $\mathcal{I}$ of $\hiddenidentity$ that is positive if and only if $\varphi$ is satisfiable.

\paragraph{Description of $\mathcal{I}$.} Let us construct an election $E_{\varphi}$, which we will call an \emph{encoding} of $\varphi$. In $E_{\varphi}$ there is a set of $2|X| + 10|C|$ candidates. We will think of them as \emph{literal candidates} for the set $X \cup X'$ of variables, with $X' = \{x_{n+1}, x_{n + 5|C|} \}$. 
Then, for each $j \in [0, 2|X| + 10|C| -1]$, we refer to the candidate $c_j$ as $c_{x_j}$ if $j$ is even, and as $c_{\neg x_{j -1}}$ otherwise. Moreover, for each $j \in [0, |X| + 5|C| -1]$ we refer to $[j, j+1]$ as the \emph{sector} of $x_j$. Also, for a clause $C_i$, we say that variables $x_{|X|+5i +1}, x_{|X|+5i +2}, x_{|X|+5i +3}, x_{|X|+5i +4 }, x_{|X|+5i +5 }$ \emph{correspond} to $C_j$. We will also call them the first, second, third, fourth, and fifth variable corresponding to $C_j$ respectively.
Now, we consider $7|C| +2$ voters. We refer to $4|C|+2$ of them as \emph{main voters}. We will think of them as $2|C|+1$ pairs, i.e., $\{ p_1^1, p_1^2, \dots, p_{2|C|+1}^1, p_{2|C|+1}^2 \}$. 
Then, for every voter $p_i^1$ and a literal candidate $c_{x_j}$, let $rank_{p_i^1}(c_{x_j})=j$, while $rank_{p_i^1}(c_{\neg x_j})=j+1$. Similarly, for every voter $p_i^2$ and a literal candidate $c_{x_j}$, let $rank_{p_i^2}(c_{x_j})=j+1$, while $rank_{p_i^2}(c_{\neg x_j})=j$.

Now, for every clause $C_i$, we consider three voters, $v_{C_i}^1$, $v_{C_i}^2$, and $v_{C_i}^3$. Then, for each voter $v_{C_i}^j$ and a literal $x_k$ or $\neg x_k$ with $k \leq n-1$,  other than $\neg L_i^j$, let $rank_{v_{C_i}^j}(x_k)$ and $rank_{v_{C_i}^j}(\neg x_k)$ be smaller than a candidate $v_{l}$, where $l$ is a literal $x_l$ or $\neg x_l$, $l > k+1$, and $l$ is not in $C_j$. Moreover, let $rank_{v_{C_i}^j}(\neg L_i^j)$ be at the second position in the sector of the $x_{l+1}$, where $x_l$ is the variable corresponding to to $\neg L_i^j$. So, main voters rank candidates following the order of variables they correspond to, while for every variable $x_i$, a half of main voters ranks $c_{x_i}$ above $c_{\neg x_i}$, while the other half has the reverse order of those candidates. See the orders of $m_1$ and $m_2$ in Example \ref{ex:reduction} for an illustration of those votes. Also, let:

\begin{enumerate}
\item $v_{C_i}^1$ rank $c_{\neg x_{|X|+5i+2}}$ at the second position in the sector of $c_{x_{|X|+5i+3}}$ and $c_{x_{|X|+5i+4}}$ at the second position in the sector of $c_{x_{|X|+5i+5}}$.

\item $v_{C_i}^2$ rank $c_{ x_{|X|+5i+2}}$ at the second position in the sector of $c_{x_{|X|+5i+3}}$ and $c_{x_{|X|+5i+4}}$ at the second position in the sector of $c_{x_{|X|+5i+5}}$.

\item $v_{C_i}^3$ rank $c_{ x_{|X|+4i+2}}$ at the second position in the sector of $c_{x_{|X|+5i+3}}$ and $c_{\neg x_{|X|+5i+4}}$ at the second position in the sector of $c_{x_{|X|+5i+5}}$.
\end{enumerate}

Then, let remaining literal candidates are ranked by these voters with the natural order respecting indices.

Finally, let the target identity in $\mathcal{I}$ be an identity election with $5|C| + 2$ voters, as well as $|X| + 5|C|$, i.e., $|X'|$ candidates. We will refer to a subelection of $E_{\varphi}$ of this size as a \emph{potential solution}.
The following example shows an encoding of a simple formula.

\begin{example}\label{ex:reduction}
Take a formula $\varphi$ with $X=\{x_0,x_1,x_2\}$ and $C=C_0$, with $C_0=\{x_0, \neg x_1, x_2 \}$. Then, in the encoding of $\varphi$ we are given the set of candidates including $c_{x_0}, c_{x_0}, c_{x_0}, c_{x_0},  c_{x_0}, c_{x_0}, c_{\neg x_0}, c_{x_1}, c_{\neg x_1}, c_{x_2}$, \\ $c_{\neg x_2}, c_{x_3}, c_{\neg x_3}, c_{x_4}, c_{\neg x_4}, c_{x_5}, c_{\neg x_5}, c_{x_6}, c_{\neg x_6}$, $c_{x_7}$, $c_{\neg x_7}$.

Then, the voters' preferences are given by the following table.

\begin{table}[H]
\centering
\setlength{\tabcolsep}{1.2pt} 
\renewcommand{\arraystretch}{1.5} 
\scalebox{0.8}{

\begin{tabular}{lllllllllllllllll}
\hline
           & 1              & 2              & 3              & 4              & 5              & 6              & 7              & 8              & 9              & 10             & 11             & 12             & 13             & 14             & 15             & 16             \\ \hline
$m_1$      & $c_{x_0}$      & $c_{\neg x_0}$ & $c_{x_1}$      & $c_{\neg x_1}$ & $c_{x_2}$      & $c_{\neg x_2}$ & $c_{x_3}$      & $c_{\neg x_3}$ & $c_{x_4}$      & $c_{\neg x_4}$ & $c_{x_5}$      & $c_{\neg x_5}$ & $c_{x_6}$      & $c_{\neg x_6}$ & $c_{ x_7}$     & $c_{\neg x_7}$ \\
$m_2$      & $c_{\neg x_0}$ & $c_{x_0}$      & $c_{\neg x_1}$ & $c_{ x_1}$     & $c_{\neg x_2}$ & $c_{x_2}$      & $c_{\neg x_3}$ & $c_{ x_3}$     & $c_{\neg x_4}$ & $c_{ x_4}$     & $c_{\neg x_5}$ & $c_{x_5}$      & $c_{\neg x_6}$ & $c_{ x_6}$     & $c_{\neg x_7}$ & $c_{ x_7}$     \\
$x_0$      & $c_{x_0}$      & $c_{x_1}$      & $c_{\neg x_1}$ & $c_{\neg x_0}$ & $c_{x_2}$      & $c_{\neg x_2}$ & $c_{x_3}$      & $c_{\neg x_3}$ & $c_{x_4}$      & $c_{ x_5}$     & $c_{\neg x_5}$ & $c_{\neg x_4}$ & $c_{\neg x_6}$ & $c_{x_ 7}$     & $c_{\neg x_7}$ & $c_{x_6}$      \\
$\neg x_1$ & $c_{x_0}$      & $c_{\neg x_0}$ & $c_{\neg x_1}$ & $c_{x_2}$      & $c_{\neg x_2}$ & $c_{x_1}$      & $c_{x_3}$      & $c_{\neg x_3}$ & $c_{\neg x_4}$ & $c_{x_5}$      & $c_{\neg x_5}$ & $c_{x_4}$      & $c_{\neg x_6}$ & $c_{x_7}$      & $c_{\neg x_7}$ & $c_{x_6}$      \\
$x_2$      & $c_{x_0}$      & $c_{\neg x_0}$ & $c_{x_1}$      & $c_{\neg x_1}$ & $c_{x_2}$      & $c_{x_3}$      & $c_{\neg x_3}$ & $c_{\neg x_2}$ & $c_{\neg x_4}$ & $c_{x_5}$      & $c_{\neg x_5}$ & $c_{x_4}$      & $c_{\neg x_6}$ & $c_{x_7}$      & $c_{\neg x_7}$ & $c_{x_6}$      \\ \hline
\end{tabular}}
\caption{Example of an encoding of a formula with one clause $\{x_0, \neg x_1, x_2 \}$. The first two rows represent one of the pairs of main voters, while the other three the rankings of literal voters.}
\end{table}

\end{example}

\paragraph{Correctness of Construction.}

Let us first observe that every potential solution $s$ corresponds to some valuation over $X\cup X'$. To see that, suppose that for some variable $x_i$ both $c_{x_i}$ and $c_{\neg x_i}$ belong to $s$. Notice now that as there are $7|C|$ voters in $E_{\varphi}$, while $5|C|$ voters are included in $s$, by pigeonhole principle it holds that some pair of main voters $p_j^1, p_j^2$ is included in $s$. But then we have that $p_j^1, p_j^2$ do not have the same preference over $c_{x_i}$ and $c_{\neg x_i}$, so $s$ is not a potential solution. We will then say that a variable $x$ is \emph{true} in $s$ if $c_x$ belongs to $s$, and that it is \emph{false} otherwise.

Let us further observe that as a consequence it is the case that for every clause $C_j$ only one clause candidate is selected in $s$, as their preferences on variables corresponding to $C_j$ are pairwise inconsistent. This implies that exactly one of them needs to be selected in a potential solution.

Suppose now that $\varphi$ is satisfiable. Then, take a valuation $V$ over $X$ under which $\varphi$ is true. Now, take a subelection $s$ including all main voters and, for each clause $C_j$, one clause candidate corresponding to some literal true in this clause. Furthermore, for each variable $x_i \in X$, let $x_i$ be true in $s$ if $x_i$ is true in $V$ and let it be false otherwise. Then, for each clause $C_j$, let the first, third, and fifth variable corresponding to $C_j$ be true in $s$. Also, let the second variable corresponding to $C_j$ be true in $s$ if $v_{C_i}^1$ is selected in $s$ and let it be false in $s$ otherwise. Finally, let the fourth variable corresponding to $C_j$ be true in $s$ if $v_{C_i}^1$ is selected in $s$ and let it be false in $s$ otherwise. Then, by construction, we get that $s$ is an identity election.

Suppose that $\varphi$ is not satisfiable and suppose towards contradiction that there exists a potential solution $s$ that is an identity election. Then, by previous observations, we need to have that exactly one of $c_{x_i}$, $c_{\neg x_i}$ is selected in $s$, for every variable $x_i \in X \cup X'$. Furthermore, as $\varphi$ is not satisfiable, we need to have that for some clause $C_j$ $L_j^1, L_j^2, L_j^3$ are false in $s$. Then, without loss of generality, suppose that $v_{C_i}^1$ is selected in $s$. Observe that $v_{C_i}^1$ ranks $c_{\neg L_j^1}$ lower than both candidates corresponding to a variable with a lower index. But then, since $L_j^1$ is false in $s$, $c_{\neg L_j^1}$ is selected in this potential solution. This implies, however, that $s$ is not an identity election, as $c_{\neg L_j^1}$ is conflicting main voters' rankings.
\end{proof}

\section{Missing Computational Results for \textsc{Hidden-ID}} \label{appendix:hidden-identity-computational-results}

\paragraph{Results for Counting Problems for $\hiddenidentity$}

\begin{lemma} \label{thm:hidden-identity-voters-number-of-identity-candidate-subsets-p}
    Computation of the number of size-$m'$ candidates that form an identity for a given a set of voters can be done in polynomial time.
\end{lemma}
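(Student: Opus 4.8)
The plan is to reuse the \emph{unanimity graph} $\textit{Una}(V')$ introduced in the proof of \Cref{thm:hidden-identity-voters-verification-p}. The first observation is that the unanimity relation on $C$ is transitive: if every voter in $V'$ prefers $c$ to $c'$ and every voter prefers $c'$ to $c''$, then (each vote being a total order) every voter prefers $c$ to $c''$. Hence $\textit{Una}(V')$ is the comparability digraph of a strict partial order $\prec$ on $C$ and equals its own transitive closure. Combined with the characterization from \Cref{thm:hidden-identity-voters-verification-p} --- namely that $(C',V')$ is an identity subelection iff $\textit{Una}(V')$ has a path through all of $C'$ --- transitivity yields the clean equivalent statement: a set $C'\subseteq C$ forms an identity subelection with $V'$ if and only if $C'$ is a \emph{chain} of $\prec$, i.e.\ a totally ordered subset. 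So the task reduces to counting the chains of size exactly $m'$ in a partial order given by its comparability digraph.

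Next I would count these chains by dynamic programming over a topological ordering of $\textit{Una}(V')$. Each chain has a unique $\prec$-maximal element, so chains can be grouped by their top element with no double counting. Define $f[c][k]$ to be the number of chains of size $k$ whose maximum element is $c$; then $f[c][1]=1$ for every $c$, and for $k\ge 2$,
\[
  f[c][k]=\sum_{c'\,:\,c'\prec c} f[c'][k-1],
\]
since every size-$k$ chain with top $c$ is obtained by appending $c$ to a size-$(k-1)$ chain whose top $c'$ satisfies $c'\prec c$, and this decomposition is unique. Processing vertices in topological order guarantees that all values $f[c'][k-1]$ on the right-hand side are already computed. The desired count is then $\sum_{c\in C} f[c][m']$.

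Finally I would verify polynomiality. Building $\textit{Una}(V')$ takes $O(|V'|\cdot|C|^2)$ time; the table has $|C|\cdot m'$ entries, each obtained from a sum over at most $|C|$ predecessors, for $O(|C|^2\cdot m')$ time overall. The counts never exceed $\binom{|C|}{m'}$ and so have polynomially many bits, keeping all arithmetic polynomial. The only real obstacle is the transitivity observation: a priori ``a path through all of $C'$'' looks like a genuinely path-structured condition that would be harder to count, whereas transitivity collapses it to ``$C'$ is a chain'', which is precisely what makes the by-top-element recurrence correct and avoids any exponential enumeration of candidate subsets.
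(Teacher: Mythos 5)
Your proposal is correct and follows essentially the same route as the paper: the paper also builds the unanimity DAG over $C$, invokes transitivity of the unanimous-preference relation (so that reachability coincides with the edge relation and identity subsets are exactly chains), and counts them with the same dynamic program over a topological order, indexed by the chain's last element, returning $\sum_{c} DP[c][m']$. Your write-up is in fact slightly cleaner on the base case ($f[c][1]=1$ for every $c$) and on bounding the bit-length of the counts, but the underlying argument is identical.
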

\begin{proof}
    Suppose we are given a $\hiddenidentity$ instance $(E, m', n')$ and a set of $n'$ voters $V' = \{v_{i_1}, v_{i_2}, \dots, v_{i_{n'}}\}$. We ask what is the number of size-$m'$ subsets of candidates that are preferred exactly in the same order by all voters from $V'$. Equivalently, we ask for the number of the Common Subsequences (CS) of their votes of size exactly $m'$.
    
    We construct graph $G$ of majority relations between candidates. Specifically, let vertices be the candidates and we draw a directed edge from $c_{y_1}$ to $c_{y_2}$ if and only if $c_{y_1}$ is preferred over $c_{y_2}$ by all voters from $V'$. The constructed graph is naturally a directed acyclic graph (DAG), for which we ask for the number of size-$m'$ sequences of vertices such that there is a top-down path connecting all these vertices (not necessary directly). For the sake of brevity, by $\textit{isAncestor}(p,v)$ we mean that there exists a path from $p$ to $v$ \footnote{In our case, since all voters' preference relation is transitive, $\textit{isAncestor}(p,v)$ holds if and only if there is an edge $p \rightarrow v$.}.
    We can solve it with a dynamic programming as follows: Let $DP[v,j]$ be the number of size-$j$ sequences of voters such that all these voters are connected by a top-down path (not necessary directly) and $v$ is the last vertex in the sequence. We sort vertices topologically and process them in this order. If a vertex $v$ does not have any incoming edges, then $\textit{DP}[v,j] = [j=1]$. Otherwise, $DP[v,j] = \sum_{p | \textit{isAncestor}(p,v)}{\textit{DP}[p,j-1]}$. Finally, we return $\sum_{v \in V(G)} {\textit{DP}[v,m']}$. 

    The time complexity of this algorithm is $O(n' \cdot m^2 + m^2 + m^2) = O(n' \cdot m^2)$.
\end{proof}

\begin{theorem} \label{thm:sharp-hidden-identity-fpt-n}
    $\hiddenidentity$~and~$\sharphiddenidentity$ are $\fpt$ parameterized by the number of voters  and $\xp$ parameterized by the number of voters in the solution ($n'$).
\end{theorem}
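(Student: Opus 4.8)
The plan is to obtain both parts from results already established, using a single ``enumerate the voter subset'' skeleton. For the decision problem $\hiddenidentity$, the stated bounds are precisely Corollary~\ref{thm:hidden-identity-ftp-n}: iterate over all $\binom{|V|}{n'}$ subsets $V'\subseteq V$ of size $n'$ and, for each, run the polynomial-time test of Proposition~\ref{thm:hidden-identity-voters-verification-p} (build the unanimity DAG $\textit{Una}(V')$ and look for a path on $m'$ vertices); accept iff some $V'$ succeeds. So for $\hiddenidentity$ there is nothing new to prove, and I would just cite that corollary.

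For $\sharphiddenidentity$ --- which I read, in analogy with $\sharphiddenclones$, as counting the pairs $(C',V')$ with $|C'|=m'$ and $|V'|=n'$ that form an identity subelection of $E$ --- I would reuse the same enumeration but replace the per-subset \emph{verification} by a per-subset \emph{count}. For each size-$n'$ set $V'$, Lemma~\ref{thm:hidden-identity-voters-number-of-identity-candidate-subsets-p} returns in time $O(n'\cdot|C|^2)$ the number $f(V')$ of size-$m'$ candidate sets on which all voters of $V'$ agree (via the path-counting DP on $\textit{Una}(V')$). The algorithm outputs $\sum_{V'\colon |V'|=n'} f(V')$.

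Correctness is a one-line counting argument: every admissible pair $(C',V')$ determines its second coordinate $V'$ uniquely, so it is produced exactly once, when the enumeration reaches that $V'$, and $f(V')$ counts exactly the admissible first coordinates for $V'$; hence there is no over- or under-counting. For the complexity, $\binom{|V|}{n'}\le 2^{|V|}$ makes the total running time $2^{|V|}\cdot\mathrm{poly}(|V|,|C|)$, i.e.\ $\fpt$ in the number of voters, while $\binom{|V|}{n'}\le |V|^{n'}$ makes it $|V|^{O(n')}\cdot\mathrm{poly}$, i.e.\ $\xp$ in $n'$.

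I do not anticipate a real obstacle here; the only points needing a moment of care are (i)~that the count in Lemma~\ref{thm:hidden-identity-voters-number-of-identity-candidate-subsets-p} is of candidate \emph{sets} rather than orderings (fine, since a path in a DAG visits a fixed vertex set in a fixed order, so distinct counted paths give distinct $m'$-sets), and (ii)~the no-double-counting observation above, which relies on grouping the pairs by $V'$ rather than by $C'$; this is why routing the count through the ``fixed set of voters'' subroutine, rather than through a ``fixed candidate set'' one, is the convenient choice.
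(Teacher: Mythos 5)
Your proposal is correct and follows essentially the same route as the paper: for the decision problem it invokes the enumeration-plus-verification argument of Corollary~\ref{thm:hidden-identity-ftp-n}, and for $\sharphiddenidentity$ it iterates over all size-$n'$ voter subsets, applies the path-counting DP of Lemma~\ref{thm:hidden-identity-voters-number-of-identity-candidate-subsets-p} to each, and sums the counts, exactly as in the paper's proof. Your added remarks on no double-counting (grouping pairs by $V'$) and on sets versus orderings are fine and only make explicit what the paper leaves implicit.
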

\begin{proof}
    Suppose we are given a $\sharphiddenidentity$ instance $(E, m', n')$. We iterate through all possible size-$n'$ subsets of voters and use the algorithm provided in the proof of Lemma \ref{thm:hidden-identity-voters-number-of-identity-candidate-subsets-p} to determine the number of its identities. The sum of the number of identities for all size-$n'$ subsets of voters is the answer.

    For the decisive variant, we accept if the answer is greater than $0$, otherwise we reject. For a better performance, we can iterate through all possible size-$n'$ sets of voters and accept if the algorithm \ref{thm:hidden-identity-voters-verification-p} found any identity, otherwise we reject.
\end{proof}

We can adapt the approach from Theorem~\ref{thm:hidden-identity-candidates-verification-p} to compute the number of identities for a given set and order of $m'$ candidates.

\begin{lemma} \label{thm:hidden-identity-candidates-number-of-identity-voter-subsets-p}
    Computation of the number of size-$n'$ voters that form an identity for a given a set and order of candidates can be done in polynomial time.
\end{lemma}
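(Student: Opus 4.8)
\textbf{Proof proposal for Lemma~\ref{thm:hidden-identity-candidates-number-of-identity-voter-subsets-p}.}

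The plan is to reduce the counting problem to a straightforward bucketing computation, exactly parallel to the decision algorithm of Proposition~\ref{thm:hidden-identity-candidates-verification-p}. We are given a $\hiddenidentity$ instance $(E,m',n')$ together with a fixed set and order of $m'$ candidates, say the ordered tuple $(c_{i_1}, c_{i_2}, \dots, c_{i_{m'}})$; we want to count the size-$n'$ subsets $V' \subseteq V$ such that every voter in $V'$ ranks these candidates in precisely this order. The key observation is that whether a voter $v$ ``agrees'' with the prescribed order depends only on $v$ itself, and is checkable in $O(m')$ time by reading off the relative positions of $c_{i_1}, \dots, c_{i_{m'}}$ in $\succ_v$ and verifying $pos_v(c_{i_1}) < pos_v(c_{i_2}) < \dots < pos_v(c_{i_{m'}})$.

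The algorithm is then: first, iterate over all voters $v \in V$ and count $k = |\{ v \in V : v \text{ ranks } c_{i_1}, \dots, c_{i_{m'}} \text{ in the given order}\}|$; this takes $O(|V| \cdot m')$ time. Second, observe that any size-$n'$ subset of these $k$ agreeing voters forms an identity subelection with the given candidate set and order, and no subset containing a non-agreeing voter does. Hence the answer is simply $\binom{k}{n'}$ (which is $0$ when $k < n'$). Binomial coefficients can be precomputed in polynomial time (e.g., via Pascal's triangle up to $|V|$), so the total running time is polynomial, dominated by $O(|V| \cdot |C|)$.

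I do not anticipate a genuine obstacle here: the lemma is essentially the ``counting shadow'' of Proposition~\ref{thm:hidden-identity-candidates-verification-p}, and the only subtlety worth stating explicitly is that fixing the \emph{order} (not just the set) of the $m'$ candidates is what makes the agreeing-voter predicate a per-voter property, which in turn makes the count a single binomial coefficient rather than a sum over permutations. (If instead only the \emph{set} of $m'$ candidates were given, one would group voters by the permutation they induce on that set and sum $\binom{k_p}{n'}$ over the at most $|V|$ realized permutations $p$ — still polynomial, as in Lemma~\ref{thm:hidden-identity-voters-number-of-identity-candidate-subsets-p}, but worth distinguishing.) I would close by noting the corollary that $\sharphiddenidentity$ restricted to a fixed candidate set/order, or parameterized by the number of candidates, is likewise tractable, mirroring Corollary~\ref{thm:hidden-identity-fpt-m}.
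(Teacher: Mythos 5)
Your proposal is correct and matches the paper's argument: the paper also buckets voters by the permutation they induce on the given candidates (via a dictionary, as in Proposition~\ref{thm:hidden-identity-candidates-verification-p}) and returns $\binom{k}{n'}$ for the fixed order, or the sum of binomials over realized permutations when only the set is given. Nothing essential differs beyond presentation.
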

\begin{proof}
    Suppose we are given a $\hiddenidentity$ instance $(E, m', n')$ and a set of $m'$ candidates $C' = \{c_{i_1}, c_{i_2}, \dots, c_{i_{m'}}\}$. We ask how many size-$n'$ sets of voters that rank these candidates exactly in the same order. 

    We define and compute dictionary $D$ in exactly the same way as in algorithm \ref{thm:hidden-identity-candidates-verification-p}. Then, we return $\sum_{k \in D} {|D[k]| \choose n'}$.
    For a given order $ord(C')$ of $m'$ candidates, we just return $D[ord(C')] \choose n'$ instead of the above sum.
    In both cases, time complexity is $O(n(m + m')) = O(nm)$.
\end{proof}

Thanks to Lemma \ref{thm:hidden-identity-candidates-number-of-identity-voter-subsets-p} we obtain parameterized tractability of $\hiddenidentity$.

\begin{theorem} \label{thm:sharp-hidden-identity-fpt-m}
    $\hiddenidentity$~and~$\sharphiddenidentity$ are~$\fpt$ parameterized by the number of candidates and~$\xp$ 
    by the number of candidates in the solution ($m'$).
\end{theorem}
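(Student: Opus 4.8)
The strategy mirrors that of Corollary~\ref{thm:hidden-identity-fpt-m}, but now leaning on Lemma~\ref{thm:hidden-identity-candidates-number-of-identity-voter-subsets-p} instead of Proposition~\ref{thm:hidden-identity-candidates-verification-p}. First I would enumerate every size-$m'$ subset $C' \subseteq C$; there are $\binom{|C|}{m'}$ of them. For each fixed $C'$, I would run the polynomial-time routine of Lemma~\ref{thm:hidden-identity-candidates-number-of-identity-voter-subsets-p}, which builds the dictionary $D$ grouping voters by the order in which they rank $C'$ and returns $\sum_{k \in D}\binom{|D[k]|}{n'}$, i.e.\ the number of size-$n'$ voter sets that form an identity subelection on $C'$ (over any order). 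Summing these quantities over all $C'$ yields the answer to $\sharphiddenidentity$; for the decision problem $\hiddenidentity$ I would simply accept as soon as any $C'$ produces a positive count, or, more cheaply, invoke Proposition~\ref{thm:hidden-identity-candidates-verification-p} directly for each $C'$.

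The one point that needs a (brief) justification rather than pure bookkeeping is that this double summation counts each identity subelection exactly once. An identity subelection $E'=(C',V')$ determines its candidate set $C'$ uniquely, so distinct choices of $C'$ contribute disjoint families of subelections; and for a fixed $C'$, every voter ranks $C'$ in exactly one linear order, so the keys $k$ of $D$ partition $V$ and the binomial terms $\binom{|D[k]|}{n'}$ count disjoint voter sets. Hence the total is exactly the number of pairs $(C',V')$ with $|C'|=m'$, $|V'|=n'$, and $(C',V')$ an identity subelection. (If one prefers to count subelections of size \emph{at least} $(m',n')$, monotonicity of identity reduces this to the exact-size count plus a routine adjustment, but as stated we count the exact parameters.)

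For the complexity bounds: the outer loop performs $\binom{|C|}{m'}$ iterations, which is at most $2^{|C|}$, giving an $\fpt$ algorithm in the parameter $|C|$ (the per-iteration cost $O(|V|\cdot|C|)$ from the lemma being polynomial), and at most $|C|^{m'}$, giving an $\xp$ algorithm in the parameter $m'$. Precomputing the binomial coefficients $\binom{t}{n'}$ for $t \le |V|$ in a lookup table keeps every arithmetic step polynomial. The decision version inherits the same running-time bounds, with the optimization of short-circuiting on the first witnessing $C'$.

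I do not expect a genuine obstacle here; the work is entirely in assembling the already-established Lemma~\ref{thm:hidden-identity-candidates-number-of-identity-voter-subsets-p} with a counting argument, and the only place to be careful is the disjointness claim above that prevents over-counting across candidate sets and across ranking orders of a fixed candidate set.
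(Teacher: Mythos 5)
Your proposal is correct and follows essentially the same route as the paper: enumerate all size-$m'$ candidate subsets, apply the counting routine of Lemma~\ref{thm:hidden-identity-candidates-number-of-identity-voter-subsets-p} to each, sum for the counting variant, and for the decision variant either check positivity or call Proposition~\ref{thm:hidden-identity-candidates-verification-p} directly. Your explicit disjointness argument and the $\binom{|C|}{m'}\le 2^{|C|}$ versus $|C|^{m'}$ bounds are just slightly more detailed versions of what the paper leaves implicit.
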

\begin{proof}
    Suppose we are given a $\sharphiddenidentity$ instance $(E, m', n')$. We iterate through all possible size-$m'$ sets of candidates and use the algorithm \ref{thm:hidden-identity-candidates-number-of-identity-voter-subsets-p} to determine the number of its identities. The sum of the number of identities for all size-$m'$ sets of candidates is the answer.
    
    For the decisive variant, we accept if the answer is greater than $0$, otherwise we reject. For a better performance, we can iterate through all possible size-$m'$ sets of candidates and accept if the algorithm \ref{thm:hidden-identity-candidates-verification-p} found any identity, otherwise we reject.
\end{proof}

\section{Missing Hardness Proof for \textsc{Hidden-AN}}

\paragraph{Proof of Theorem \ref{thm:IHANP}.}

\begin{proof}

We will show that $\hiddenantagonism$ is \np-hard by reduction from $\threesat$. Consider a 3-CNF formula $\varphi$ with the set of variables $X=\{x_0, \dots, x_n \}$ and the set of clauses $C= \{C_0, \dots, C_m\}$. We will also denote a clause $C_j$ as $\{L_j^1, L_j^2, L_j^3 \}$, with $L_j^i$ being a literal. Let us construct an instance $\mathcal{I}$ of $\hiddenantagonism$ that is positive if and only if $\varphi$ is satisfiable.

\paragraph{Description of $\mathcal{I}$.} Let us construct an election $E_{\varphi}$, which we will call an \emph{encoding} of $\varphi$. 
In $E_{\varphi}$ there is a set of $2|X| + 10|C| +4$ candidates. We will think of them as \emph{literal candidates} for the set $X \cup X'$ of variables, with $X' = \{x_{n+1}, \dots, x_{n + 5|C|+2} \}$. 
Then, for each $j \in [0, 2n + 8|C| -1]$, we refer to the candidate $c_j$ as $c_{x_j}$ if $j$ is even, and as $c_{\neg x_{j -1}}$ otherwise. Moreover, for each $j \in [0, 2n + 4|C| -1]$ we refer to $[j, j+1]$ as the \emph{sector} of $x_j$. Also, for a clause $C_i$, we say that variables $x_{|X|+5i +1}, x_{|X|+5i +2}, x_{|X|+5i +3}, x_{|X|+5i +4 }, x_{|X|+5i +5 }$ \emph{correspond} to $C_j$. We will also call them first, second, third, fourth and fifth variable corresponding to $C_j$ respectively.

Now, we consider $8|C|$ voters. Let $4|C|$ of them be \emph{main voters}. We will think of them as $2|C|$ pairs, i.e., $\{ p_1^1, p_1^2, \dots, p_{4|C|}^1, p_{2|C|}^2 \}$. 
Let us first consider the group of $2|C|$ \emph{type A} main voters. Then, for every voter $p_i^1$ and a literal candidate $c_{x_j}$, with $j\leq |X'|-3$, let $rank_{p_i^1}(c_{x_j})=2j+1$, while $rank_{p_i^1}(c_{\neg x_j})=2j+2$. Similarly, for every voter $p_i^2$ and a literal candidate $c_{x_j}$, let $rank_{p_i^2}(c_{x_j})=j+1$, while $rank_{p_i^2}(c_{\neg x_j})=j$. Moreover, let half of voters in each of these groups rank $c_{ x_{|X'|-2}} \succ c_{ x_{|X'|-1}} \succ c_{ \neg x_{|X'|-1}} \succ c_{ \neg x_{|X'|-2}}$, and the other half $c_{ \neg x_{|X'|-2}} \succ c_{ x_{|X'|-1}} \succ c_{ \neg x_{|X'|-1}} \succ c_{  x_{|X'|-2}}$. We assume that these halfs are distributed equally among the groups within this type.

Then, we take the group of $2|C|$ \emph{type B} main voters. Let each of them rank the candidate $c_{x_j}$, with $j\leq |X \cup X'|-3$, at position $2|X| -2j +5$, and $c_{x_j}$ at position $2|X| -2j +5$. In other words, type B main voters have a reversed order of candidates corresponding to particular variables, compared to type A main voters. Moreover, let half of them rank $c_{x_j} \succ c_{\neg x_j}$, for each $j\leq |X \cup X'|-3$, while the other half rank $c_{\neg x_j} \succ c_{ x_j}$, for each $j\leq |X'|-3$. Furthermore, let a half of them rank $c_{ x_{|X \cup X'|-2}} \succ c_{ x_{|X'|-1}} \succ c_{ \neg x_{|X'|-1}} \succ c_{ \neg x_{|X \cup X'|-2}}$ first, while a half of them start their ranking with $c_{ x_{|X'|-2}} \succ c_{ x_{|X\cup X'|-1}} \succ c_{ \neg x_{|X \cup X'|-1}} \succ c_{ \neg x_{|X \cup X'|-2}}$. Again, we assume that these halfs are distributed equally among the groups within this type.

Now, for every clause $C_i$, we consider two groups of three voters, each of them consisting of voters $v_{C_i}^1$, $v_{C_i}^2$, and $v_{C_i}^3$. 
Then, for each clause $C_i$ of, a voter $v_{C_i}^j$ and a literal $x_k$ or $\neg x_k$ with $k \leq n-1$,  other than $\neg L_i^j$, let $rank_{v_{C_i}^j}(x_k)$ and $rank_{v_{C_i}^j}(\neg x_k)$ be smaller than a candidate $v_{l}$, where $l$ is a literal $x_l$ or $\neg x_l$, $l > k+1$, and $l$ is not in $C_j$. Moreover, let $rank_{v_{C_i}^j}(\neg L_i^j)$ be at the second position in the sector of the $x_{l+1}$, where $x_l$ is the variable corresponding to $\neg L_i^j$. Also, for one of the groups of such clause voters, let:

\begin{enumerate}
\item $v_{C_i}^1$ rank $c_{\neg x_{|X|+5i+2}}$ at the second position in the sector of $c_{x_{|X|+5i+3}}$ and $c_{x_{|X|+5i+4}}$ at the second position in the sector of $c_{x_{|X|+5i+5}}$.

\item $v_{C_i}^2$ rank $c_{ x_{|X|+5i+2}}$ at the second position in the sector of $c_{x_{|X|+5i+3}}$ and $c_{x_{|X|+5i+4}}$ at the second position in the sector of $c_{x_{|X|+5i+5}}$.

\item $v_{C_i}^3$ rank $c_{ x_{|X|+4i+2}}$ at the second position in the sector of $c_{x_{|X|+5i+3}}$ and $c_{\neg x_{|X|+5i+4}}$ at the second position in the sector of $c_{x_{|X|+5i+5}}$.
\end{enumerate}

Then, let remaining literal candidates are ranked by one group of those voters with the natural order respecting indices.

Furthermore, for the other group of such clause voters, let:

\begin{enumerate}
\item $v_{C_i}^1$ rank $c_{\neg x_{|X|+5i+2}}$ lower than  $c_{x_{|X|+5i+3}}$ and $c_{x_{|X|+5i+4}}$ lower than $c_{x_{|X|+5i+5}}$.

\item $v_{C_i}^2$ rank $c_{ x_{|X|+5i+2}}$  lower than  $c_{x_{|X|+5i+3}}$ and $c_{x_{|X|+5i+4}}$ lower than $c_{x_{|X|+5i+5}}$.

\item $v_{C_i}^3$ rank $c_{x_{|X|+5i+2}}$ lower than  $c_{x_{|X|+5i+3}}$ and $c_{\neg x_{|X|+5i+4}}$ lower than $c_{x_{|X|+5i+5}}$.
\end{enumerate}

Then, let remaining literal candidates are ranked by one group of those voters with the natural, reversed order respecting indices.

Finally, let the target identity in $\mathcal{I}$ be an antagonism election with $4|C|$ voters, as well as $|X'|$ candidates. We will refer to a subelection of $E_{\varphi}$ of this size as a \emph{potential solution}.

Below we provide of an encoding for a simple instance.

\begin{example}\label{ex:antagonismhard}
Take a formula $\varphi$ with $X=\{x_0,x_1,x_2\}$ and $C=C_0$, with $C_0=\{x_0, \neg x_1, x_2 \}$. Then, in the encoding of $\varphi$ we are given the set of candidates including: $c_{x_0}, c_{x_0}, c_{x_0}, c_{x_0},  c_{x_0}, c_{x_0}, c_{\neg x_0}, c_{x_1}, c_{\neg x_1}, c_{x_2}$, $c_{\neg x_2}, c_{x_3}, c_{\neg x_3}, c_{x_4}, c_{\neg x_4}, c_{x_5}, c_{\neg x_5}, c_{x_6}, c_{\neg x_6}$, $c_{x_7}$, $c_{\neg x_7}$, $c_{x_8}$, $c_{\neg x_8}$, $c_{x_9}$, and $c_{\neg x_9}$.

\begin{table}[H]
\setlength{\tabcolsep}{1pt} 
\renewcommand{\arraystretch}{1.5}
\centering
\scalebox{0.7}{

\begin{tabular}{@{}lllllllllllllllllllll@{}}
\toprule

                        & 1              & 2                       & 3              & 4              & 5              & 6              & 7              & 8              & 9              & 10             & 11             & 12             & 13             & 14             & 15             & 16             & 17             & 18             & 19                                & 20                                 \\ \midrule
$m_1^1$                 & $c_{x_0}$      & $c_{\neg x_0}$          & $c_{x_1}$      & $c_{\neg x_1}$ & $c_{x_2}$      & $c_{\neg x_2}$ & $c_{x_3}$      & $c_{\neg x_3}$ & $c_{x_4}$      & $c_{\neg x_4}$ & $c_{x_5}$      & $c_{\neg x_5}$ & $c_{x_6}$      & $c_{\neg x_6}$ & $c_{ x_7}$     & $c_{\neg x_7}$ & $c_{\neg x_8}$ & $c_{ x_9}$     & $c_{\neg x_9}$                    & $c_{x_8}$                          \\
$m_2^1$                 & $c_{\neg x_0}$ & $c_{x_0}$               & $c_{\neg x_1}$ & $c_{ x_1}$     & $c_{\neg x_2}$ & $c_{x_2}$      & $c_{\neg x_3}$ & $c_{ x_3}$     & $c_{\neg x_4}$ & $c_{ x_4}$     & $c_{\neg x_5}$ & $c_{x_5}$      & $c_{\neg x_6}$ & $c_{ x_6}$     & $c_{\neg x_7}$ & $c_{ x_7}$     & $c_{ x_8}$     & $c_{\neg x_9}$ & $c_{ x_9}$                        & $c_{\neg x_8}$                     \\
$m_1^2$ & $c_{x_8}$      & $c_{\neg x_9}$          & $c_{ x_9}$     & $c_{\neg x_8}$ & $c_{x_7}$      & $c_{\neg x_7}$ & $c_{x_6}$      & $c_{\neg x_6}$ & $c_{x_5}$      & $c_{\neg x_5}$ & $c_{ x_4}$     & $c_{\neg x_4}$ & $c_{ x_3}$     & $c_{\neg x_3}$ & $c_{ x_2}$     & $c_{\neg x_2}$ & $c_{x_1}$      & $c_{\neg x_1}$ & $c_{x_0}$                         & $c_{\neg  x_0}$ \\
$m_2^1$ & $c_{\neg x_8}$ & $c_{\neg x_8}$ & $c_{x_9}$      & $c_{x_8}$      & $c_{\neg x_7}$ & $c_{x_7}$      & $c_{\neg x_6}$ & $c_{x_6}$      & $c_{\neg x_5}$ & $c_{x_5}$      & $c_{\neg x_4}$ & $c_{x_4}$      & $c_{\neg x_3}$ & $c_{x_3}$      & $c_{\neg x_2}$ & $c_{x_2}$      & $c_{\neg x_1}$ & $c_{x_1}$      & $c_{\neg x_0}$                    & $c_{x_0}$                          \\
$x_0^1$                 & $c_{ x_9}$     & $c_{\neg x_9}$          & $c_{ x_8}$     & $c_{\neg x_8}$ & $c_{\neg x_6}$ & $c_{x_7}$      & $c_{\neg x_7}$ & $c_{ x_6}$     & $c_{x_4}$      & $c_{x_5}$      & $c_{\neg x_5}$ & $c_{\neg x_4}$ & $c_{ x_3}$     & $c_{\neg x_3}$ & $c_{ x_2}$     & $c_{\neg x_2}$ & $c_{\neg x_0}$ & $c_{ x_1}$     & $c_{ \neg x_1}$ & $c_{x_0}$                          \\
$\neg x_1^1$            & $c_{ x_9}$     & $c_{\neg x_9}$          & $c_{ x_8}$     & $c_{\neg x_8}$ & $c_{x_6}$      & $c_{ x_7}$     & $c_{\neg x_7}$ & $c_{\neg x_6}$ & $c_{x_4}$      & $c_{x_5}$      & $c_{\neg x_5}$ & $c_{\neg x_4}$ & $c_{x_3}$      & $c_{\neg x_3}$ & $c_{ x_1}$     & $c_{ x_2}$     & $c_{\neg x_2}$ & $c_{\neg x_1}$ & $c_{ x_0}$                        & $c_{\neg x_0}$                     \\
$x_2^1$                 & $c_{ x_9}$     & $c_{\neg x_9}$          & $c_{ x_8}$     & $c_{\neg x_8}$ & $c_{x_6}$      & $c_{ x_7}$     & $c_{\neg x_7}$ & $c_{\neg x_6}$ & $c_{x_4}$      & $c_{x_5}$      & $c_{\neg x_5}$ & $c_{\neg x_4}$ & $c_{\neg x_2}$ & $c_{ x_3}$     & $c_{\neg x_3}$ & $c_{ x_2}$     & $c_{ x_1}$     & $c_{\neg x_1}$ & $c_{ x_0}$                        & $c_{\neg x_0}$                     \\
$x_0^2$                 & $c_{x_0}$      & $c_{x_1}$               & $c_{\neg x_1}$ & $c_{\neg x_0}$ & $c_{x_2}$      & $c_{\neg x_2}$ & $c_{x_3}$      & $c_{\neg x_3}$ & $c_{x_4}$      & $c_{ x_5}$     & $c_{\neg x_5}$ & $c_{\neg x_4}$ & $c_{\neg x_6}$ & $c_{x_ 7}$     & $c_{\neg x_7}$ & $c_{x_6}$      & $c_{x_8}$      & $c_{\neg x_8}$ & $c_{x_9}$                         & $c_{\neg x_9}$                     \\
$\neg x_1^2$            & $c_{x_0}$      & $c_{\neg x_0}$          & $c_{\neg x_1}$ & $c_{x_2}$      & $c_{\neg x_2}$ & $c_{x_1}$      & $c_{x_3}$      & $c_{\neg x_3}$ & $c_{\neg x_4}$ & $c_{x_5}$      & $c_{\neg x_5}$ & $c_{x_4}$      & $c_{\neg x_6}$ & $c_{x_7}$      & $c_{\neg x_7}$ & $c_{x_6}$      & $c_{x_8}$      & $c_{\neg x_8}$ & $c_{x_9}$                         & $c_{\neg x_9}$                     \\
$x_2^2$                 & $c_{x_0}$      & $c_{\neg x_0}$          & $c_{x_1}$      & $c_{\neg x_1}$ & $c_{x_2}$      & $c_{x_3}$      & $c_{\neg x_3}$ & $c_{\neg x_2}$ & $c_{\neg x_4}$ & $c_{x_5}$      & $c_{\neg x_5}$ & $c_{x_4}$      & $c_{\neg x_6}$ & $c_{x_7}$      & $c_{\neg x_7}$ & $c_{x_6}$      & $c_{x_8}$      & $c_{\neg x_8}$ & $c_{x_9}$                         & $c_{\neg x_9}$                     \\ \bottomrule
\end{tabular}}

\caption{Example of an encoding of the formula $\{x_0, \neg x_1, x_2\}$. The first four votes represent the main voters.}
\end{table}

Observe how selecting literal candidates corresponding to $x_0, \neg x_1$, and $x_2$ allows us to construct an antagonism in this instance.
\end{example}

\paragraph{Correctness of Construction.}

Let us first observe that every potential solution $s$ corresponds to some valuation over $X\cup X'$. To see that, suppose that for some variable $x_i$, both $c_{x_i}$ and $c_{\neg x_i}$ belong to $s$. Notice now that as there are $4|C|$ voters in $E_{\varphi}$, while $4|C|$ voters are included in $s$, by pigeonhole principle it holds that some pair of main voters belonging to the same type, i.e., $p_j^1, p_j^2$, is included in $s$. But then, we have that $p_j^1, p_j^2$ do not have the same preference over $c_{x_i}$ and $c_{\neg x_i}$, so $s$ is not a potential solution. We will then say that a variable $x$ is \emph{true} in $s$ if $c_x$ is in $s$, and that it is \emph{false} otherwise.

Let us further observe that as a consequence it is the case that for every clause $C_j$ only one clause voter is selected in $s$, for each type of orders,  as their preferences on variables corresponding to $C_j$ are pairwise inconsistent. This implies that exactly one of them needs to be selected in a potential solution. Hence, at most $2|C|$ clause voters.

Suppose now that $\varphi$ is satisfiable. Then, take a valuation $V$ over $X$ under which $\varphi$ is true. Then, take a subelection $s$ including half of main voters of type A and of type B, with a consistent order of literal candidates for the last two variables in $X'$. Then, for each clause $C_j$, let us one clause candidate corresponding to some literal true in this clause, for each type of order. Moreover, for each variable $x_i \in X$, let $x_i$ be true in $s$ if $x_i$ is true in $V$ and let it be false otherwise. Furthermore, for each clause $C_j$, let the first, third, and fifth variable corresponding to $C_j$ be true in $s$. Also, let the second variable corresponding to $C_j$ be true in $s$ if $v_{C_i}^1$ is selected in $s$ and let it be false in $s$ otherwise. Finally, let the fourth variable corresponding to $C_j$ be true in $s$ if $v_{C_i}^1$ is selected in $s$ and let it be false in $s$ otherwise. Then, by construction, we get that $s$ is an antagonism.

Suppose now that $\varphi$ is not satisfiable and suppose towards contradiction that there exists a potential solution $s$ which is an antagonism. Then, by previous observation, we need to have that exactly one of $c_{x_i}$, $c_{\neg x_i}$ is selected in $s$, for every variable $x_i \in X \cup X'$. Observe that then the clause voters form an antagonism of size $2|C|$.  Hence, at least $2|C|$ main voters are selected in $s$. Moreover, as $s$ is an antagonism, at most half of main voters from type A and from type B are selected in $s$, as otherwise they would not form identities of equal sizes.
Furthermore, as $\varphi$ is not satisfiable, we need to have that for some clause $C_j$ $L_j^1, L_j^2, L_j^3$ are false in $s$. Then, without loss of generality, suppose that two versions $v_{C_i}^1$ are selected in $s$. Observe that $v_{C_i}^1$ ranks $c_{\neg L_j^1}$ lower than both candidates corresponding to a variable with a lower index. But then, since $L_j^1$ is false in $s$, $c_{\neg L_j^1}$ is selected in this potential solution. This implies, however, that $s$ is not an antagonism.

\end{proof}

\section{Missing Computational Results for \textsc{Hidden-AN}} \label{appendix:hidden-antagonism-computational-results}

\paragraph{Proof of Theorem \ref{thm:hidden-antagonism-ftp-n-and-fpt-m}}

\begin{proof}
    Suppose we are given a $\hiddenantagonism$ instance $(E, m', n')$. 
    
    For the parameter number of voters ($n$), we iterate through all possible size-$n'$ subsets of voters and check if the algorithm \ref{thm:hidden-antagonism-voters-verification-p} found any antagonism subelection consisting of at least $m'$ candidates, if so, then we accept, otherwise we reject.

    For the parameter number of candidates ($m$), we iterate through all size-$m'$ sets of candidates and accept if the algorithm \ref{thm:hidden-antagonism-candidates-verification-p} found any identity subelection consisting of at least $n'$ voters, otherwise we reject.
\end{proof}

\paragraph{Proof of Proposition \ref{prop:ilphiddenant}.}

\begin{proof}
We solve this problem using similar ILP to the one solving \hiddenidentity.
We need to divide voters in two groups.
For each~$i \in [n]$, beside $V_i$ we define 
  a variable~$U_{i}$ with the intention that value~$1$ indicates that
  voter~$v_i$ is selected as a member of the second group. Variables $S$ and $C$ have the same meaning as before for the \hiddenidentity. Finally, beside variable $P$ we introduce variable $R$ with the same meaning, but dedicated to the voters from the second group.

  \begin{align}
    \label{ilp:a1}
    &\textstyle\sum_{i \in [n]} V_{i} = \nicefrac{n'}{2}, \\
    \label{ilp:a2}
    &\textstyle\sum_{i \in [n]} U_{i} = \nicefrac{n'}{2}, \\
    \label{ilp:a3}
    &\textstyle\sum_{j \in [m]} C_{j} = m', \\
    \label{ilp:a4}
    &S_{j_1,j_2} + S_{j_2,j_1} = C_{j_1} \cdot C_{j_2}, \text{\quad}\forall_{j_1,j_2 \in [m]}, \\
    \label{ilp:a5}
    &P_{i,j_1,j_2} = V_i\cdot S_{j_1,j_2}, \text{\quad}\forall_{i \in [n], j_1,j_2 \in [m]}, \\
    \label{ilp:a6}
    &R_{i,j_1,j_2} = U_i\cdot S_{j_2,j_1}, \text{\quad}\forall_{i \in [n], j_1,j_2 \in [m]}.
  \end{align}
  
Constraints~\eqref{ilp:a1},~\eqref{ilp:a2} and~\eqref{ilp:a3} ensure that we select proper numbers of voters and candidates.
   Constraints~\eqref{ilp:a4},~\eqref{ilp:a5} and~\eqref{ilp:a6} implements the logic of $S$, $P$, and $R$ variables, respectively.
  The optimization goal is to minimize: \\
    \begin{align*}
  \textstyle\sum_{i \in[n],\ j_1, j_2 \in [m]} P_{i,j_1,j_2} \cdot
  W_{i,j_1,j_2} + R_{i,j_1,j_2} \cdot W_{i,j_1,j_2}
  \end{align*}

where $W_{i,j_1,j_2} = [pos_{v_i}(c_{j_1}) > pos_{v_i}(c_{j_2})]$.
  
\end{proof}





\section{Missing Experimental Results}

Here, we present maps for {$\maxclone(E,3)$}, {$\maxclone(E,4)$}, and {$\maxclone(E,5)$}, i.e., {\maxclone} with three, four, and five candidates, respectively.
In~\Cref{exp:maps_an_comparison} we demonstrate comparison of maps for three different notions of antagonism. Moreover, in~\Cref{tab:pcc_an_comparison} we provide PCC between three different variants of antagonism, and the distances from $AN$ and from $ID$.
Then, we report numbers of elections from each statistical culture in our dataset.
Further, we present average values for $\maxclone(E,2)$, $\maxid(E,5)$, and $\maxan(E,5)$ for each statistical culture separately.

\begin{figure*}
  \begin{subfigure}[b]{0.33\textwidth}
      \centering
      \includegraphics[width=6.cm, trim={0.2cm 0.2cm 0.2cm 0.2cm}, clip]{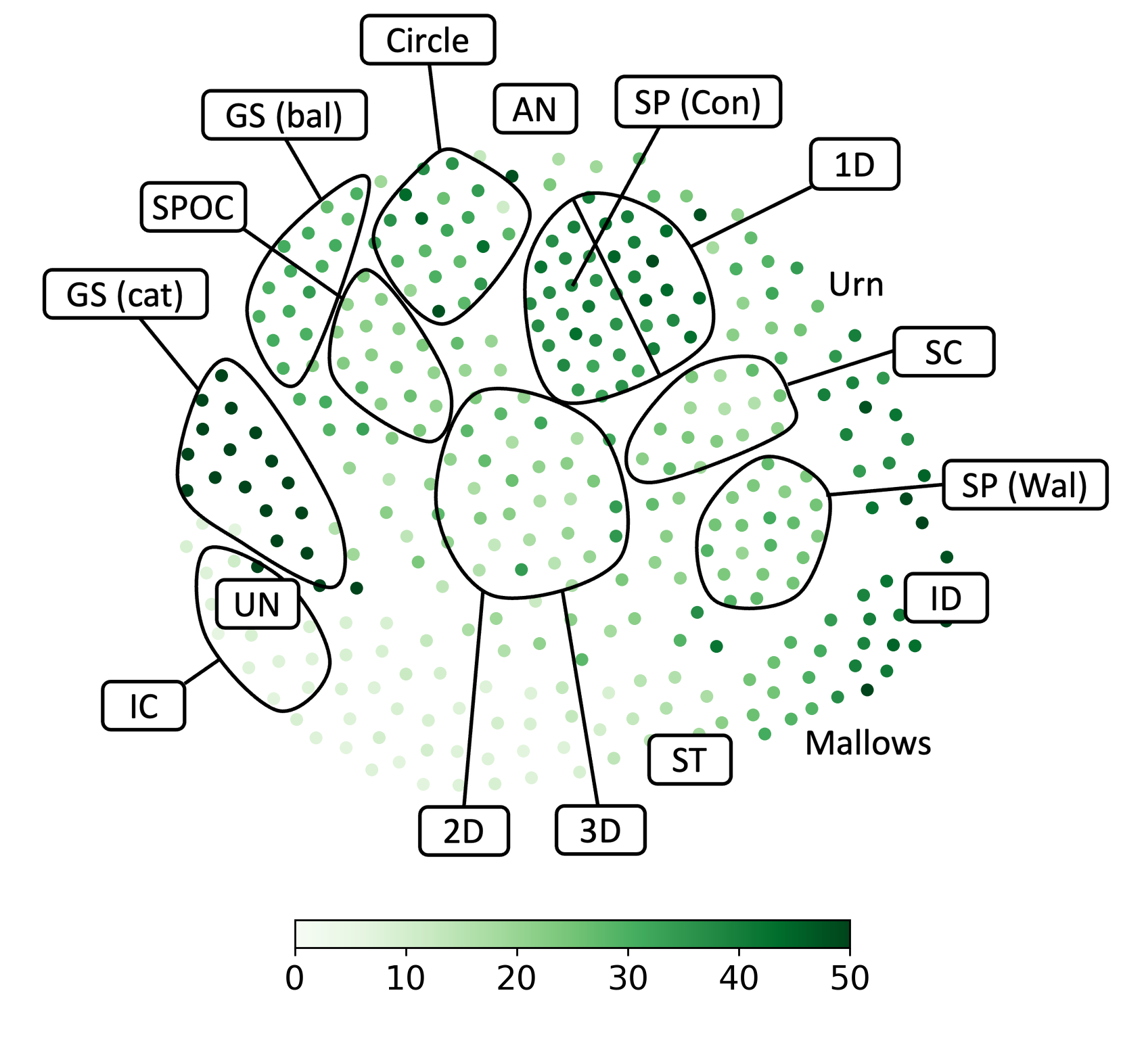}
      \caption{ $\maxclone(E,3)$}
  \end{subfigure}%
    \begin{subfigure}[b]{0.33\textwidth}
      \centering
      \includegraphics[width=6.cm, trim={0.2cm 0.2cm 0.2cm 0.2cm}, clip]{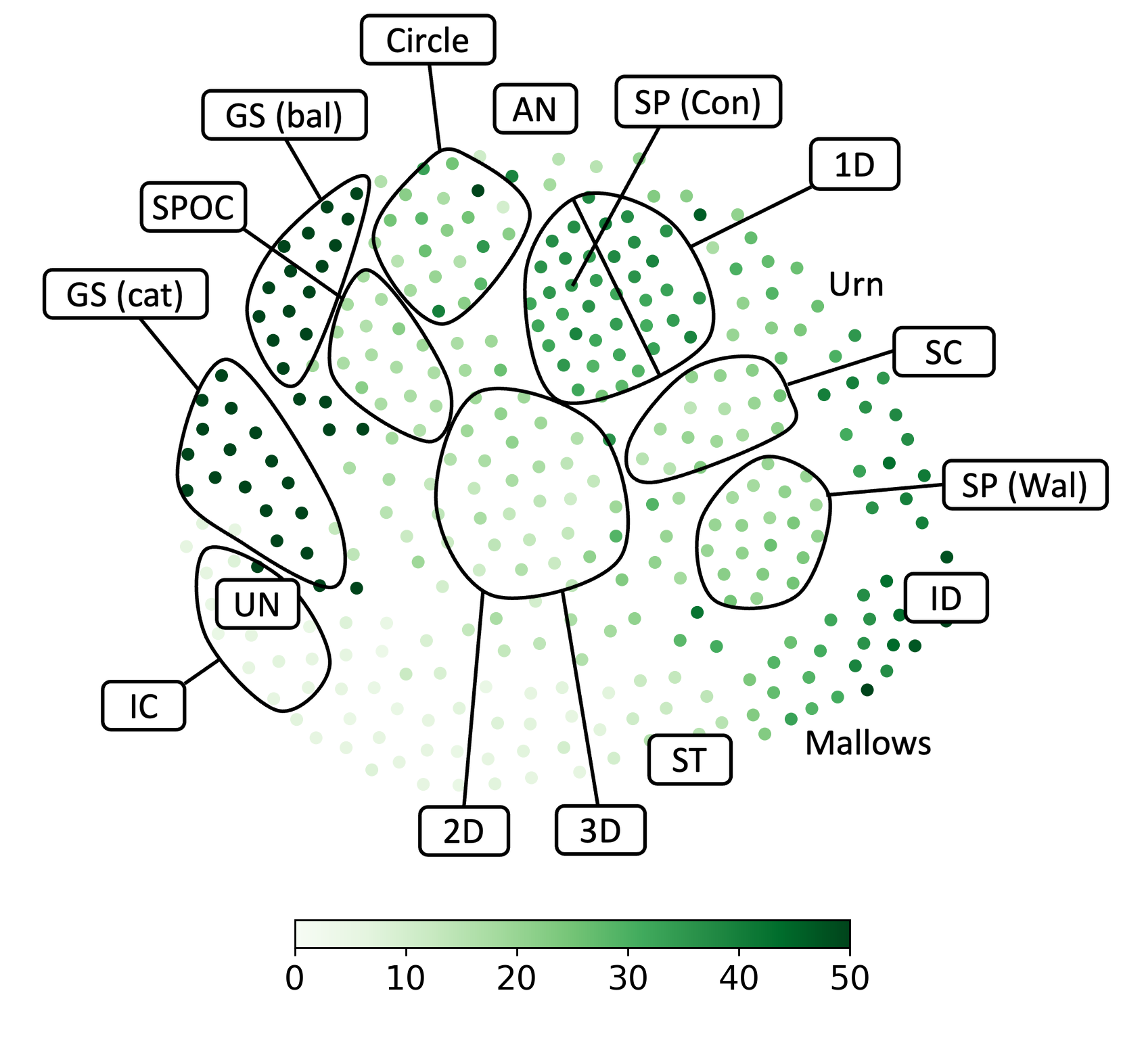}
      \caption{ $\maxclone(E,4)$}
  \end{subfigure}%
    \begin{subfigure}[b]{0.33\textwidth}
      \centering
      \includegraphics[width=6.cm, trim={0.2cm 0.2cm 0.2cm 0.2cm}, clip]{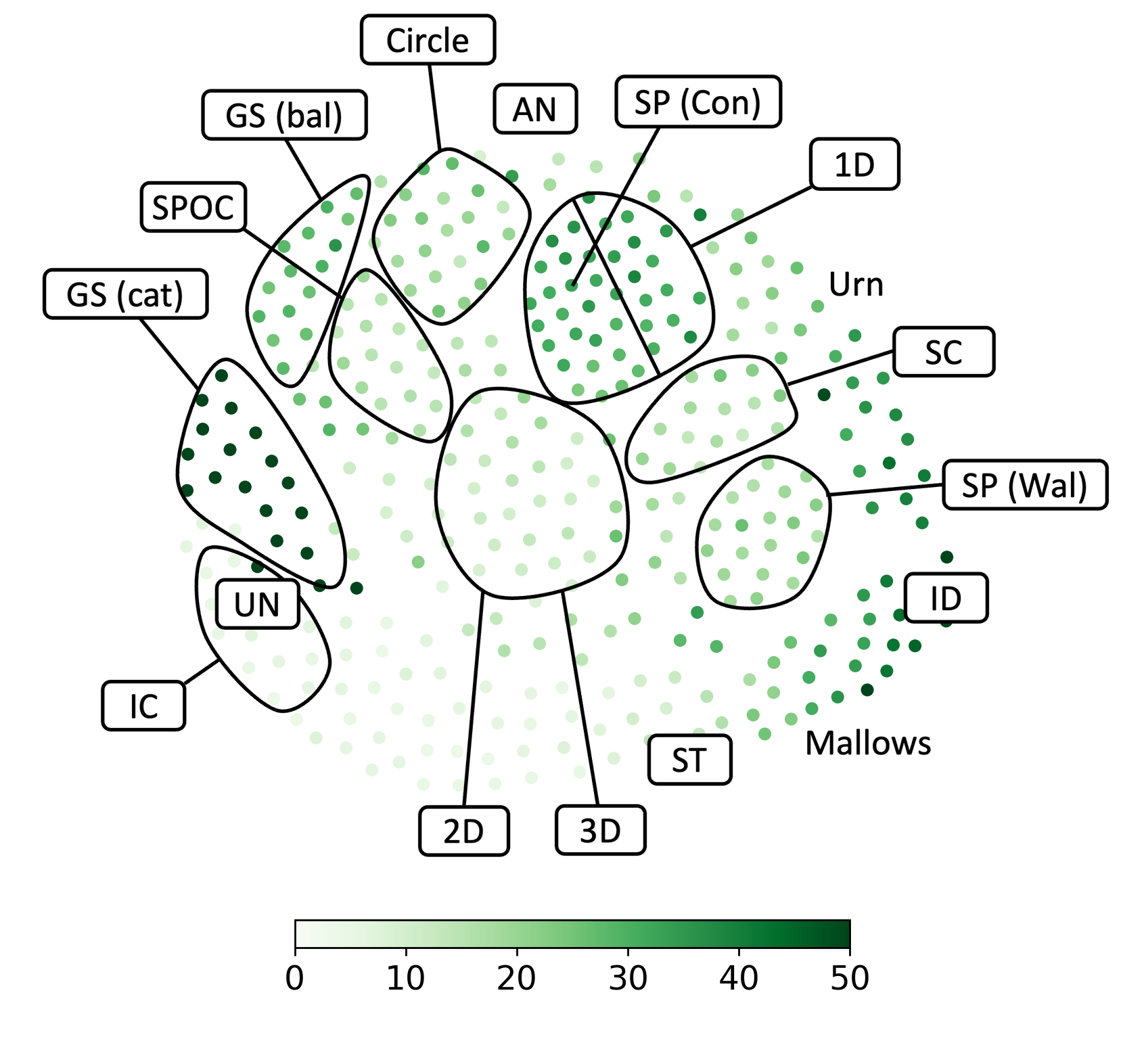}
      \caption{$\maxclone(E,5)$}
  \end{subfigure}\label{exp:maps_clones_extra}

   \caption{ Maps of elections with 10 candidates and 50 voters. Each point represents a single election. The darker the point is, the more voters agree on a certain set of candidates being clones. On each map, ID label marks the identity election, and AN label marks the antagonism election. }
\end{figure*}

\begin{figure*}
  \begin{subfigure}[b]{0.33\textwidth}
      \centering
      \includegraphics[width=6.cm, trim={0.2cm 0.2cm 0.2cm 0.2cm}, clip]{img/max/iha_max_5_map.png}
      \caption{ \maxan(E,5) \\ ``Default'' variant }
  \end{subfigure}%
    \begin{subfigure}[b]{0.33\textwidth}
      \centering
      \includegraphics[width=6.cm, trim={0.2cm 0.2cm 0.2cm 0.2cm}, clip]{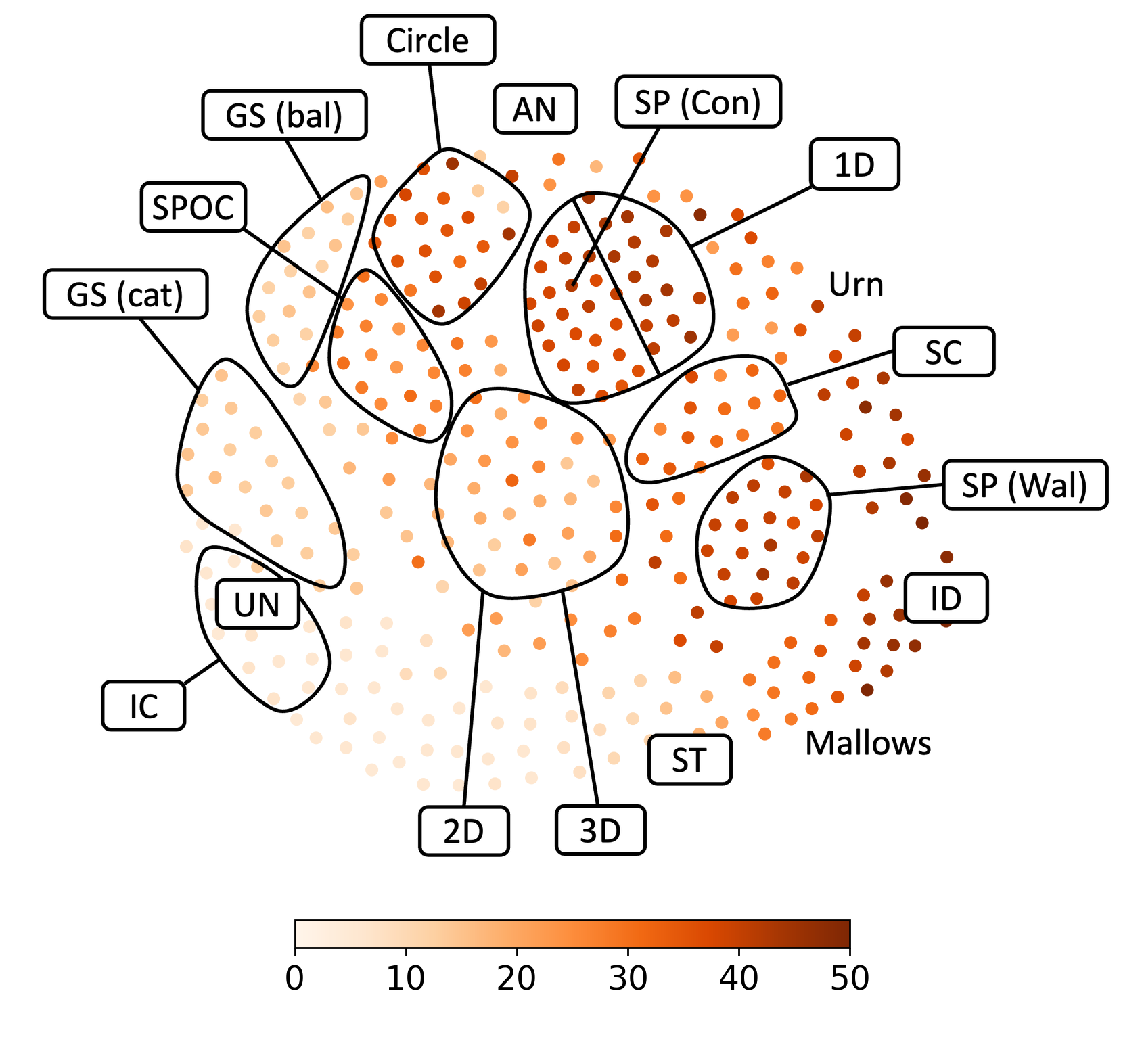}
      \caption{\maxan(E,5) \\ ``Sum'' variant 
       }
  \end{subfigure}%
    \begin{subfigure}[b]{0.33\textwidth}
      \centering
      \includegraphics[width=6.cm, trim={0.2cm 0.2cm 0.2cm 0.2cm}, clip]{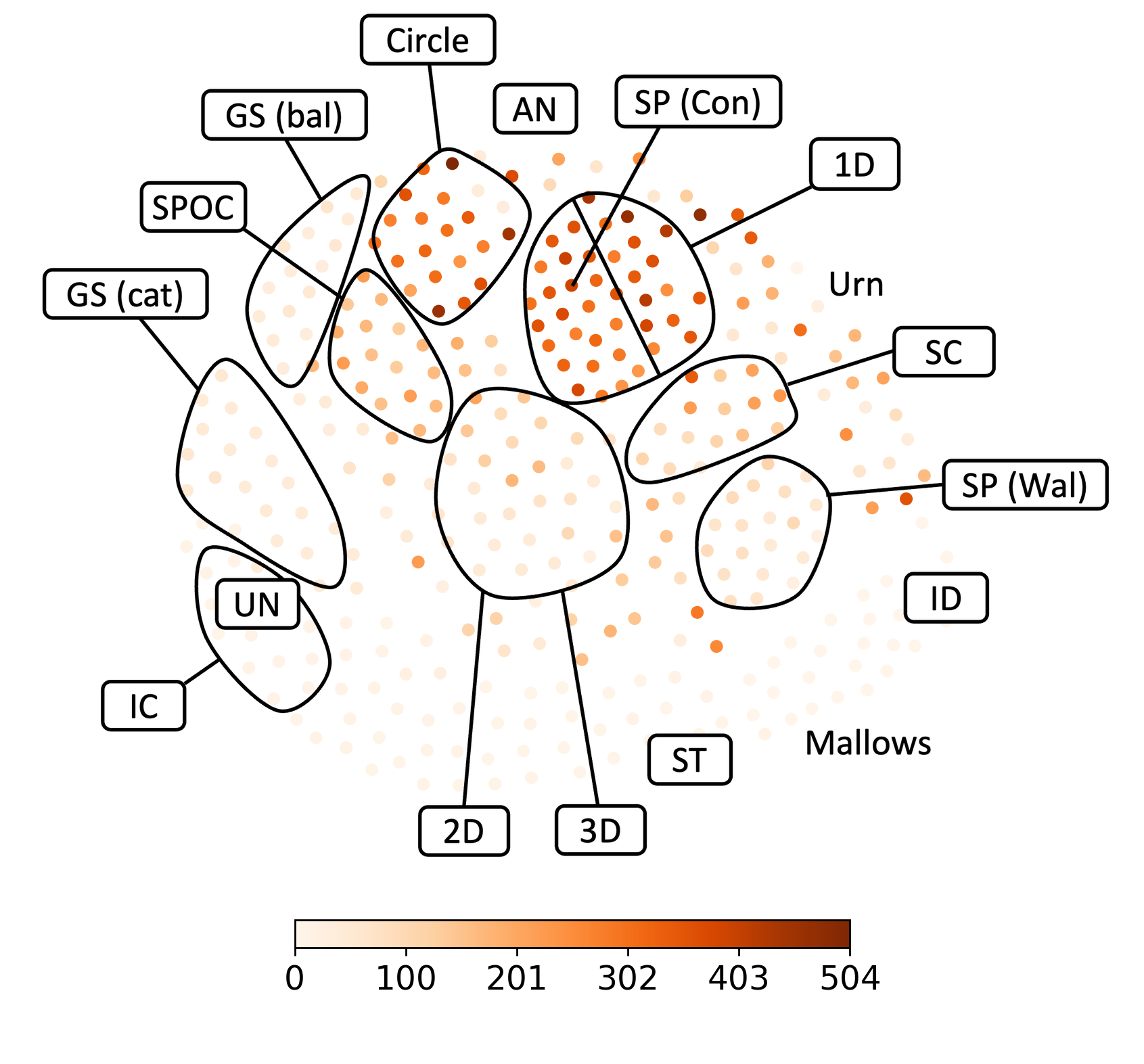}
      \caption{\maxan(E,5) \\ ``Product'' variant
       }
  \end{subfigure}%

   \caption{\label{exp:maps_an_comparison} Maps of elections with 10 candidates and 50 voters. Each point represents a single election. The darker the point is, the more voters agree on a certain set of candidates being antagonism. On each map, ID label marks the identity election, and AN label marks the antagonism election. }
\end{figure*}

\begin{table}[t]
  \centering
{
  \begin{tabular}{lc}
    \toprule
    Statistical Culture & Number of Elections \\
    
    \midrule
    Impartial Culture           & 20 \\
    \midrule
        SP by Conitzer  & 20 \\
    SP by Walsh      & 20 \\
    SPOC                 & 20 \\
    Single-Crossing           & 20 \\
    \midrule
    1D        & 20 \\
    2D         & 20 \\
    3D         & 20 \\
    Circle         & 20 \\
    \midrule
    GS Balanced    & 20 \\
    GS Caterpillar    & 20 \\
        \midrule
        Urn       & 60 \\  
    Norm-Mallows    & 60 \\

    \midrule
    Compass ($\ID$,~$\AN$,~$\UN$,~$\ST$) & 4 \\
    \bottomrule
  \end{tabular} 
}
    \caption{List of selected statistical cultures and numbers of elections from these cultures accordingly.}
\end{table}\label{tab:testbed}

  


\newcommand{\nbarC}[1]{\tikz{
    \fill[green!80!black!16] (0,0) rectangle (#1*0.5mm,8pt);
    \node[inner sep=0pt, anchor=south west] at (0,0) {#1};}
}
\newcommand{\nbarID}[1]{\tikz{
    \fill[blue!80!black!16] (0,0) rectangle (#1*0.5mm,8pt);
    \node[inner sep=0pt, anchor=south west] at (0,0) {#1};}
}
\newcommand{\nbarAN}[1]{\tikz{
    \fill[red!80!black!16] (0,0) rectangle (#1*0.5mm,8pt);
    \node[inner sep=0pt, anchor=south west] at (0,0) {#1};}
}

\begin{table}[t]
	\begin{center}
		
	\scalebox{1}{	\begin{tabular}{l l c}
			            \toprule
			Statistical Culture & Average Value & Std. \\ 
       \midrule

            Impartial Culture & \nbarC{17.0} & 1.22 \\
            \midrule
SP by Conitzer & \nbarC{41.7} & 1.55 \\
SP by Walsh & \nbarC{34.7} & 2.39 \\
SPOC & \nbarC{33.2} & 2.34 \\
Single-Crossing & \nbarC{33.5} & 3.75 \\
      \midrule
1D & \nbarC{48.3} & 2.12 \\
2D & \nbarC{35.75} & 5.49 \\
3D & \nbarC{32.15} & 6.17 \\
Circle & \nbarC{44.85} & 4.84 \\
      \midrule
GS Balanced & \nbarC{50.0} & 0.0 \\
GS Caterpillar & \nbarC{50.0} & 0.0 \\
      \midrule
*Urn & \nbarC{35.55} & 10.13 \\
*Norm-Mallows & \nbarC{27.25} & 10.73 \\

			\bottomrule
		\end{tabular}}
	\end{center}
	
	\caption{$\maxclone(E,2)$ for elections with 10 candidates and 50 voters. (*Note that average values for Urn and Mallows models are not very meaningful because these models are parameterized).}
	
\end{table}\label{fig:overview_ihc_max_2}

\begin{table}[t]
	\begin{center}
		
	\scalebox{1}{	\begin{tabular}{l l c}
			         \toprule
			Statistical Culture & Average Value & Std. \\ 

               \midrule
   Impartial Culture & \nbarID{4.85} & 0.65 \\
   \midrule
SP by Conitzer & \nbarID{31.5} & 1.16 \\
SP by Walsh & \nbarID{39.8} & 2.06 \\
SPOC & \nbarID{16.0} & 1.92 \\
Single-Crossing & \nbarID{30.1} & 4.49 \\
\midrule
1D & \nbarID{34.65} & 3.21 \\
2D & \nbarID{19.3} & 3.74 \\
3D & \nbarID{14.6} & 3.88 \\
Circle & \nbarID{21.65} & 3.13 \\
\midrule
GS Balanced & \nbarID{8.95} & 0.92 \\
GS Caterpillar & \nbarID{9.25} & 0.94 \\
\midrule
*Urn & \nbarID{27.32} & 12.49 \\
*Norm-Mallows & \nbarID{20.05} & 15.77 \\
			\bottomrule
		\end{tabular}}
	\end{center}\label{fig:overview_ihi_max_5}
	
	\caption{$\maxid(E,5)$ for elections with 10 candidates and 50 voters. (*Note that average values for Urn and Mallows models are not very meaningful because these models are parameterized).}
	
\end{table}

\begin{table}[t]
	\begin{center}
		
	\scalebox{1}{	\begin{tabular}{l l c}
			 \toprule
			Statistical Culture & Average Value & Std. \\ 
       \midrule
            Impartial Culture & \nbarAN{4.8} & 0.98 \\
            \midrule
SP by Conitzer & \nbarAN{32.5} & 3.63 \\
SP by Walsh & \nbarAN{12.2} & 3.57 \\
SPOC & \nbarAN{23.9} & 2.41 \\
Single-Crossing & \nbarAN{17.9} & 5.81 \\
\midrule
1D & \nbarAN{31.0} & 5.04 \\
2D & \nbarAN{16.7} & 4.11 \\
3D & \nbarAN{11.7} & 3.05 \\
Circle & \nbarAN{32.0} & 4.0 \\
\midrule
GS Balanced & \nbarAN{11.8} & 1.4 \\
GS Caterpillar & \nbarAN{12.4} & 1.36 \\
\midrule
*Urn & \nbarAN{12.73} & 8.68 \\
*Norm-Mallows & \nbarAN{2.87} & 1.98 \\
			\bottomrule
		\end{tabular}}
	\end{center}
	
	\caption{$\maxan(E,5)$ for elections with 10 candidates and 50 voters. (*Note that average values for Urn and Mallows models are not very meaningful because these models are parametrized).}
	\label{fig:overview_iha_max_5}
\end{table}


\begin{table}[t]
  \centering
{
  \begin{tabular}{l | c c | c c c}

    \toprule
    Variant & AN & ID & Default & Sum & Product\\
    
    \midrule
      Default   & -0.845 & 0.352  & 1 & 0.433 & 0.929 \\
      Sum  & -0.183 & -0.598  & 0.433 & 1 & 0.589 \\
      Product      & -0.739  & 0.164  & 0.929 & 0.589 & 1 \\
   
    \bottomrule
  \end{tabular} }
    \caption{ PCC between three different variants of antagonism, and the distances from $AN$ and from $ID$.}
    \label{tab:pcc_an_comparison}
\end{table}

\end{document}